\definecolor{dkgreen}{rgb}{0,0.6,0}
\definecolor{mauve}{rgb}{0.58,0,0.82}
\newcommand{\leqnomode}{\tagsleft@true}
\newcommand{\reqnomode}{\tagsleft@false}
\title{Pointer Race Freedom
\thanks{This work was supported by the Czech Science Foundation, project 13-37876P, and by the German Science Foundation (DFG), project R2M2.}
}
\author{Fr\'ed\'eric Haziza\inst{1} \and Luk\'a\v s Hol\'ik\inst{2}\and Roland Meyer\inst{3}\and Sebastian Wolff\inst{3}}
\authorrunning{F. Haziza, L. Kol\'ik, R. Meyer, and S. Wolff}
\institute{\hspace{-0.5em}$^1$Uppsala University\quad $^2$Brno University of Technology\\
$^3$University of Kaiserslautern}
\begin{document}

\booktitle{Pointer Race Freedom} 
{} 
{FIT BUT Technical Report Series}
{Fr\'ed\'eric Haziza, Luk\'a\v s Hol\'ik, Roland Meyer, and Sebastian Wolff} 
{Technical Report No. FIT-TR-2015-05\\[2mm]
Faculty of Information Technology, Brno University of Technology}
{Last modified: \today}

\eject

\pagestyle{empty}

\addtocounter{page}{-2}
\pagestyle{plain}

\maketitle
\begin{abstract}
We propose a novel notion of pointer race for concurrent programs manipulating a shared heap. 
A pointer race is an access to a memory address which was freed, and it is out of the accessor's control whether or not the cell has been re-allocated. 
We establish two results.
(1) Under the assumption of pointer race freedom, 
it is sound to verify a program running under explicit memory management as if it was running with garbage collection. 
(2)
Even the requirement of pointer race freedom itself can be verified under the garbage-collected semantics.
We then prove analogues of the theorems for a stronger notion of pointer race  
needed to cope with performance-critical code purposely using racy comparisons and even racy dereferences of pointers.
As a practical contribution, we apply our results to optimize a thread-modular analysis under explicit memory management. 
Our experiments confirm a speed-up of up to two orders of magnitude. 
\end{abstract}


\section{Introduction}\label{Section:Introduction}
Today, one of the main challenges in verification is the analysis of concurrent programs that manipulate a shared heap.
The numerous interleavings among the threads make it hard to predict the dynamic evolution of the heap. 
This is even more true if explicit memory management has to be taken into account.
With garbage collection as in Java, an allocation request results in a fresh address that was not being pointed to.
The address is hence known to be owned by the allocating thread.
With explicit memory management as in C, this ownership guarantee does not hold.
An address may be re-allocated as soon as it has been freed, even if there are still pointers to it.
This missing ownership significantly complicates reasoning against the memory-managed semantics. 

In the present paper, we carefully investigate the relationship between the memory-managed semantics and the garbage-collected semantics.
We show that the difference only becomes apparent if there are programming errors of a particular form that we refer to as pointer races. 
A pointer race is a situation where a thread uses a pointer that has been freed before.
We establish two theorems.
First, if the memory-managed semantics is free from pointer races, then it coincides with the garbage-collected semantics.
Second, whether or not the memory-managed semantics contains a pointer race can be checked with the garbage-collected semantics.

The developed semantic understanding helps to optimize program analyses. 
We show that the more complicated verification of the memory-managed semantics can often be reduced to an analysis of the simpler garbage-collected semantics --- by applying the following policy: 
check under garbage collection whether the program is pointer race free. 
If there are pointer races, tell the programmer about these potential bugs. 
If there are no pointer races, rely on the garbage-collected semantics in all further analyses. 
In thread-modular reasoning, 
one of the motivations for our work, 
restricting to the garbage-collected semantics allows us to use a smaller abstract domain 
and an optimized fixed point computation.
Particularly, it removes the need to correlate the local states of threads,
and it restricts the possibilities of how threads can \mbox{influence one another}.

\begin{example}\label{Example:Treiber}
We illustrate the idea of pointer race freedom on Treiber's stack~\cite{treiber86:stack}, a lock-free implementation of a concurrent stack that provides the following methods: 
\begin{center}
\leqnomode
\scalebox{0.9}{
\hspace{4mm}
\begin{minipage}{5.75cm}
$//$\text{~global variables:~~}$\Topp$\vspace{1mm}\\
$\mathit{void}: \mathit{push}(v)$
\vspace{-0.3cm}
\begin{align}
\quad\,
&\node:=\malloc;\\
& \dsel{\node}:=v;\\
& \mathtt{repeat}\\
&\quad \topp:= \Topp;\\
&\quad \psel{\node}:=\topp;\\
&\mathtt{until}\ \mathtt{cas}(\Topp, \topp, \node);
\end{align}
\end{minipage}
\hspace{4mm}
\begin{minipage}{6.7cm}
$\mathit{bool}: \mathit{pop}(\&v)$
\vspace{-0.3cm}
\begin{align}
\qquad
&\mathtt{repeat}\\ 
&\quad \topp := \Topp;\\
&\quad \mathtt{if}\ (\topp = \mathtt{null})\ \mathtt{return}\ \mathit{false};\\
&\quad \node := \psel{\topp};\label{popsprf}\\
&\mathtt{until}\ \cas(\Topp, \topp, \node);\\
&v := \topp.data;\\
&\freeof{\topp};\ \mathtt{return}\ \mathit{true};
\end{align}
\end{minipage}}
\end{center}
This code is correct (i.e. linearizable and pops return the latest value pushed) in the presence of garbage collection, but it is incorrect under explicit memory management.
The memory-managed semantics suffers from a problem known as ABA, which indeed is related to a pointer race.
The problem arises as follows. 
Some thread $\athread$ executing pop
sets its local variable $\topp$ to the global top of the stack $\Topp$, say address $\anadr$.
The variable $\node$ is assigned the second topmost address $\anadrp$. 
While $\athread$ executes pop, another thread frees address $\anadr$ with a pop. 
Since it has been freed, address $\anadr$ can be re-allocated and pushed, becoming the top of the stack again.
However, the stack might have grown in between the free and the re-allocation.
As a consequence, $\anadrp$ is no longer the second node from the top. 
Thread $\athread$ now executes the $\cas$ (atomic compare-and-swap). 
The command first tests $\Topp = \topp$ (to check for consistency of the program state: has the top of the stack moved?).
The test passes since $\Topp$ has come back to $\anadr$ due to the re-allocation.
Thread $\athread$ then redirects $\Topp$ to $\node$. 
This is a pointer race: $\athread$ relies on the variable $\topp$ where the address was freed, and the re-allocation was not under $\athread$'s control. 
At the same time, this causes an error. 
If $\node$ no longer points to the second address from the top, moving $\Topp$ loses stack content.
\qed
\end{example}

Performance-critical implementations often intentionally make use of pointer races and employ other mechanisms to protect themselves from harmful effects due to accidental re-allocations. 
The corrected version of Treiber's stack~\cite{MichaelScottTreiber} for example equips every pointer with a version counter logging the updates. 
Pointer assignments then assign the address together with the value of the associated version counter,
and the counters are taken into account in the comparisons within $\cas$. 
That is, the $\cas(\Topp, \topp, \node)$ command atomically executes the following code:
\begin{center}
\scalebox{0.9}{
\begin{minipage}{\textwidth}
	\begin{align*}
		&\mathtt{if}\ (\Topp = \topp\ \wedge\ \Topp.\mathtt{version} = \topp.\mathtt{version})\ \{\\
		&\quad\Topp:=\node;\ \Topp.\mathtt{version}:=\topp.\mathtt{version}+1;\  \mathtt{return}\ \mathit{true}; \\
		&\}\ else\ \{\ \mathtt{return}\ \mathit{false};\ \} 
	\end{align*}
\end{minipage}}
\end{center}
\noindent
This makes the $\cas$ from Example~\ref{Example:Treiber} fail and prevents stack corruption.
Another pointer race occurs when the $\mathtt{pop}$ in Line~\eqref{popsprf} dereferences a freed pointer. 
With version counters, this is harmless.  
Our basic theory, however, would consider the comparison as well as the dereference pointer races, deeming the corrected version of Treiber's stack buggy.

To cope with performance-critical applications 
that implement version counters or techniques such as
hazard pointers \cite{Michael:HazardPointers}, reference counting \cite{Detlefs:ReferenceCounting}, or grace periods \cite{Grace2013}, we strengthen the notion of pointer race.
We let it tolerate assertions on freed pointers and dereferences of freed pointers where the value obtained by the dereference does not visibly influence the computation 
(e.g., it is assigned to a dead variable). 
To analyse programs that are only free from strong pointer races, the garbage-collected semantics is no longer sufficient.
We define a more general ownership-respecting semantics by imposing an ownership discipline on top of the memory-managed semantics.
With this semantics, we are able to show the following analogues of the above results.
First, if the program is free from strong pointer races (SPRF) under the memory-managed semantics, then the memory-managed semantics coincides with the ownership-respecting semantics.
Second, the memory-managed semantics is SPRF if and only if the ownership-respecting semantics is SPRF. 

As a last contribution, we show how to apply our theory to optimize thread-modular reasoning.   
The idea of thread-modular analysis is to buy efficiency by abstracting from the relationship between the local states of individual threads.  
The loss of precision, however, is often too severe.
For instance, any inductive invariant strong enough to show memory safety of Treiber's stack must correlate the local states of threads. 
Thread-modular analyses must compensate this loss of precision. 
Under garbage collection, an efficient way used e.g. in~\cite{GBCS:pldi07,Vafeiadis:RGSep} is keeping as a part of the local state of each thread information about the ownership of memory addresses. 
A thread owns an allocated address.  
No other thread can access it until it enters the shared part of the heap.   
Unfortunately, this exclusivity cannot be guaranteed under the memory-managed semantics. 
Addresses can be re-allocated with pointers of other threads still pointing to them. 
Works such as \cite{Sagiv:correlation,ThreadModular2013} therefore correlate the local states of threads by more expensive means (cf. Section~\ref{Section:Evaluation}), for which they pay by severely decreased scalability.  

We apply our theory to put back ownership information into thread-modular reasoning under explicit memory management. 
We measure the impact of our technique on the method of \cite{ThreadModular2013} when used to prove linearizability of programs such as Treiber's stack or Michael \& Scott's lock-free queue under explicit memory management. 
We report on resource savings of about two orders of magnitude.  

\paragraph*{Contributions}
We claim the following contributions, where $\mmsem{\aprog}$ denotes the memory-managed semantics, $\resmmsem{\aprog}$ the ownership-respecting semantics, and $\gcsem{\aprog}$ the garbage-collected semantics of program $\aprog$.\vspace{0.1cm}
\begin{itemize}
\item[(1)] We define a notion of pointer race freedom (PRF) and an equivalence $\heapequiv$ among computations such that the following two results hold.
\begin{itemize}
\item[(1.1)] If $\mmsem{\aprog}$ is PRF, then $\mmsem{\aprog}\heapequiv\gcsem{\aprog}$.\vspace{0.1cm}
\item[(1.2)] $\mmsem{\aprog}$ is PRF if and only if $\gcsem{\aprog}$ is PRF.\vspace{0.1cm} 
\end{itemize}
\item[(2)] We define a notion of strong pointer race freedom (SPRF) and an ownership-respecting semantics $\resmmsem{\aprog}$ such that the following two results hold.\vspace{0.1cm}
\begin{itemize}
\item[(2.1)] If $\mmsem{\aprog}$ is SPRF, then $\mmsem{\aprog}=\resmmsem{\aprog}$.\vspace{0.1cm}
\item[(2.2)] $\mmsem{\aprog}$ is SPRF if and only if $\resmmsem\aprog$ is SPRF.\vspace{0.1cm}
\end{itemize}
\item[(3)] 
Using the Results (2.1) and (2.2), we optimize the recent thread-modular analysis~\cite{ThreadModular2013} by a use of ownership and report on an experimental evaluation.
\end{itemize}
The Results (2.1) and (2.2) give less guarantees than (1.1) and (1.2) and hence allow for less simplifications of program analyses. 
On the other hand, the stronger notion of pointer race makes (2.1) and (2.2) applicable to a wider class of programs which would be racy in the original sense (which is the case for our most challenging benchmarks).

Finally, we note that our results are not only relevant for concurrent programs but apply to sequential programs as well. 
The point in the definition of pointer race freedom is to guarantee the following: the execution does not depend on whether a malloc has re-allocated an address, possibly with other pointers still pointing to it, or it has allocated a fresh address. However, it is mainly reasoning about concurrent programs where we see a motivation to strive for such guarantees.

\paragraph{Related Work}
Our work was inspired by the data race freedom (DRF) guarantee~\cite{AdveHillDRF1993}. 
The DRF guarantee can be understood as a contract between hardware architects and programming language designers. 
If the program is DRF under sequential consistency (SC), then the semantics on the actual architecture will coincide with SC.
We split the analogue of the statement into two, coincidence ($\mmsem{\aprog}$ PRF implies $\mmsem{\aprog}\heapequiv\gcsem{\aprog}$) and means of checking ($\mmsem{\aprog}$ PRF iff $\gcsem{\aprog}$ PRF).
There are works that weaken the DRF requirement while still admitting efficient analyses~\cite{Owens2010,AlglaveM11,BMM11}.
Our notion of strong pointer races is along this line.

The closest related work is~\cite{Grace2013}.
Gotsman et al. study re-allocation under explicit memory management.
The authors focus on lock-free data structures implemented with hazard pointers, read-copy-update, or epoch-based reclamation. 
The key observation is that all three techniques rely on a common synchronization pattern called grace periods.
Within a grace period of a cell $\anadr$ and a thread $\athread$, the thread can safely access the cell without having to fear a free command.
The authors give thread-modular reasoning principles for grace periods and show that they lead to elegant and scalable proofs. 

The relationship with our work is as follows.
If grace periods are respected, then the program is guaranteed to be SPRF (there are equality checks on freed addresses). 
Hence, using Theorem~\ref{Theorem:SPRFGuarantee} in this work, 
it is sufficient to verify lock-free algorithms under the ownership-respecting semantics.
Interestingly, Gotsman et al. had an intuitive idea of pointer races without making the notion precise (quote: \emph{...potentially harmful race between threads accessing nodes and those trying to reclaim them is avoided}~\cite{Grace2013}).
Moreover, they did not study the implications of race freedom on the semantics, which is the main interest of this paper.
We stress that our approach does not make assumptions about the synchronization strategy. 
Finally, Gotsman et al. do not consider the problem of checking the synchronization scheme required by grace periods.
We show that PRF and SPRF can actually be checked on simpler semantics.

Data refinement in the presence of low-level memory operation is studied in~\cite{DataRefinement2005}.
The work defines a notion of substitutability that only requires a refinement of \emph{error-free computations}. 
In particular, there is no need to refine computations that dereference dangling pointers.
In our terms, these dereferences yield pointer races.
We consider~\cite{DataRefinement2005} as supporting our requirement for (S)PRF.

The practical motivation of our work, 
thread-modular analysis~\cite{FlQa:spin03}, has already been discussed. 
We note the adaptation to heap-manipulating programs~\cite{GBCS:pldi07}. 
Interesting is also the combination with separation logic from \cite{Vafeiadis:RGSep,Vafeiadis:vmcai09} (which uses ownership to improve precision). 
There are other works studying shape analysis and thread-modular analysis. 
As these fields are only a part of the motivation, we do not provide a full overview.          
\section{Heap-manipulating Programs}\label{Section:Programs}
\subsubsection*{Syntax}
We consider \bfemph{concurrent heap-manipulating programs}, defined to be sets of threads $\aprog = \set{\athread_1, \athread_2, \ldots}$ from a set $\threads$.
We do not assume finiteness of programs.
This ensures our results carry over to programs with a parametric number of threads.
Threads $\athread$ are ordinary while-programs operating on data and pointer variables.
Data variables are denoted by $\advar, \advarp\in\dvars$.
For pointer variables, we use $\apvar, \apvarp\in\pvars$.
We assume $\dvars\cap \pvars = \emptyset$ and obey this typing.
Pointer variables come with selectors $\pselarg{\apvar}{1},\ldots, \pselarg{\apvar}{n}$ and $\dsel{\apvar}$ for finitely many pointer fields and one data field (for simplicity; the generalization to arbitrary data fields is straightforward).
We use $\apt$ to refer to pointers $\apvar$ and $\psel{\apvar}$.
Similarly, by $\adt$ we mean data variables $\advar$ and the corresponding selectors $\dsel{\apvar}$.
Pointer and data variables are either \emph{local} to a thread, indicated by $\apvar,\advar\in\localof{\athread}$, or they are \emph{shared} among the threads in the program.
We use $\shared$ for the set of all shared variables.

The \bfemph{commands} $\acom\in \coms$ employed in our while-language are
\begin{align*}
\acond&::=\phantom{\bnf} \apvar = \apvarp\bnf \advar = \advarp\bnf \neg \acond\\
\acom&::=\phantom{\bnf} \assert\ \acond \bnf \apvar:= \malloc \bnf \freeof{\apvar}\\
&\phantom{::=}\bnf \apvarp := \psel{\apvar}\bnf \psel{\apvar}:=\apvarp\bnf \apvar:=\apvarp\\
&\phantom{::=}\bnf \advar:=\dsel{\apvar}\bnf
\dsel{\apvar}:=\advar\bnf \advar:=\opof{\advar_1,\ldots, \advar_n}\ .
\end{align*}
Pointer variables are allocated with $\apvar := \malloc$ and freed via  $\freeof{\apvar}$.
Pointers and data variables can be used in assignments.
These assignments are subject to typing: we only assign pointers to pointers and data to data.
Moreover, a thread only uses shared variables and its own local variables.
To compute on data variables, we support operations $\op$ that are not specified further.
We only assume that the program comes with a data domain $(\dom, \ops)$ so that its operations $\op$ stem from $\ops$.
We support assertions that depend on equalities and inequalities among pointers and data variables.
Like in if and while commands, we require assertions to have complements: if a control location has a command $\assert\ \acond$, then it also has a command $\assert\ \neg \acond$. 
We use as a running example the program in Example~\ref{Example:Treiber}, Treiber's stack \cite{treiber86:stack}.
\subsubsection*{Semantics}
A heap is defined over a set of addresses $\adr$ that contains the distinguished element $\segval$. 
Value $\segval$ indicates that a pointer has not yet been assigned a cell and thus its data and next selectors cannot be accessed.
Such an access would result in a segfault.
A heap gives the valuation of pointer variables $\pvars\nrightarrow \adr$,
the valuation of the next selector functions $\adr\nrightarrow \adr$, the valuation of the data variables $\dvars\nrightarrow \dom$, and the valuation of the data selector fields $\adr\nrightarrow \dom$.
In Section~\ref{Section:PRF}, we will restrict heaps to a subset of so-called valid pointers.
To handle such restrictions, it is convenient to let heaps evaluate expressions $\psel{\anadr}$ rather than next functions.
Moreover, with the use of restrictions valuation functions will typically be partial. 

Let $\pexp:=\pvars\disunion \setcond{\psel{\anadr}}{\anadr\in\adr\setminus\set{\segval}\text{ and } \mathtt{next}\text{ a selector}}$ be the set of pointer expressions and $\dexp:=\dvars\disunion \setcond{\dsel{\anadr}}{\anadr\in\adr\setminus\set{\segval}}$ be the set of data expressions. 
A \bfemph{heap} is a pair 
$\aheap = (\apval, \adval)$ 
with $\apval:\pexp\nrightarrow \adr$ the valuation of the pointer expressions and $\adval:\dexp\nrightarrow \dom$ the valuation of the data expressions.
We use $\apexp$ and $\adexp$ for a pointer and a data expression, and also write $\aheap(\apexp)$ or $\aheap(\adexp)$.
The valuation functions are clear from the expression.
The addresses inside the heap that are actually in use are
\begin{align*}
\adrof{\aheap}:=(\domof{\apval}\cup \rangeof{\apval}\cup \domof{\adval})\cap \adr.
\end{align*} 
Here, we use $\set{\psel{\anadr}}\cap \adr :=\set{\anadr}$ and similar for data selectors.

We model heap modifications with \bfemph{updates} $[\apexp\mapsto \anadr]$ and $[\adexp\mapsto \advalue]$ from the set $\updates$.
Update $[\apexp\mapsto \anadr]$ turns the partial function $\apval$ into the new partial function $\apval[\apexp\mapsto \anadr]$ with $\domof{\apval[\apexp\mapsto\anadr]}:=\domof{\apval}\cup\set{\apexp}$.
It is defined by $\apval[\apexp\mapsto\anadr](\apexpp):=\apval(\apexpp)$ if $\apexpp\neq \apexp$, and $\apval[\apexp\mapsto\anadr](\apexp):=\anadr$.
We also write $\aheap[\apexp\mapsto \anadr]$ since the valuation that is altered is clear from the update.

We define three semantics for concurrent heap-manipulating programs.
All three are in terms of computations, sequences of actions from
$\actions\ :=\ \threads\times \coms \times \updates$. 
An action $\anact = (\athread, \acom, \anup)$ consist of a thread $\athread$, a command $\acom$ executed in the thread, and an update $\anup$.
By $\threadof{\anact}:=\athread$, $\comof{\anact}:=\acom$, and $\updateof{\anact}:=\anup$ we access the thread, the command, and the update in $\anact$. 
To make the heap resulting from a computation $\tau\in\actions^*$ explicit, we define $\heapcomput{\varepsilon}:=(\emptyset, \emptyset)$ and 
$\heapcomput{\tau.\anact} := \heapcomput{\tau}[\updateof{\anact}]$. 
So we modify the current heap with the update required by the last action.

The garbage-collected semantics and the memory-managed semantics only differ on allocations.
We define a strict form of garbage collection that never re-allocates a cell.
With this, we do not have to define unreachable parts of the heap that should be garbage collected. 
We only model computations that are free from segfaults.
This means a transition accessing next and data selectors is enabled only if the corresponding pointer is assigned a cell.

Formally, the \bfemph{garbage-collected semantics} of a program~$\aprog$, denoted by~$\gcsem{\aprog}$, is a set of computations in $\actions^*$. 
The definition is inductive. 
In the base case, we have single actions $(\bot, \bot, [\apval, \adval])\in\gcsem{\aprog}$ with 
$\apval:\pvars\rightarrow\set{\segval}$ and
$\adval:\dvars\rightarrow \dom$ arbitrary.
No pointer variable is mapped to a cell and the data variables contain arbitrary values. 
In the induction step, consider $\tau\in\gcsem{\aprog}$ where thread $\athread$ is ready to execute command $\acom$.
Then $\tau.(\athread, \acom, \anup)\in\gcsem{\aprog}$, provided one of the following rules holds. 
\begin{description}
\item[(Asgn)] Let $\acom$ be $\psel{\apvar}:=\apvarp$, $\heapcomputof{\tau}{\apvar}=\anadr\neq \segval$, $\heapcomputof{\tau}{\apvarp}=\anadrp$. 
We set $\anup=[\psel{\anadr}\mapsto\anadrp]$. 
The remaining assignments are similar.\vspace{0.1cm}
\item[(Asrt)] Let $\acom$ be $\assert\ \apvar=\apvarp$. 
The precondition is $\heapcomputof{\tau}{\apvar}=\heapcomputof{\tau}{\apvarp}$.
There are no updates, $\anup=\emptyset$.
The assertion with a negated condition is defined analogously.
A special case occurs if $\heapcomputof{\tau}{\apvar}$ or $\heapcomputof{\tau}{\apvarp}$ is $\segval$.
Then the assert and its negation will pass.
Intuitively, undefined pointers hold arbitrary values. 
Our development does not depend on this modeling choice.\vspace{0.1cm}
\item[(Free)] If $\acom$ is $\freeof{\apvar}$, there are no constraints and no updates. \vspace{0.1cm}
\item[(Malloc1)] Let $\acom$ be $\apvar:=\malloc$, $\anadr\notin\adrof{\heapcomput{\tau}}$, and $\advalue\in \dom$.
Then we define $\anup=[\apvar\mapsto \anadr, \dsel{\anadr}\mapsto\advalue,\setcond{\psel{\anadr}\mapsto\segval}{\text{for every selector $\mathtt{next}$}}]$. The rule only allocates cells that have not been used in the computation. 
Such a cell holds an arbitrary data value and the next selectors have not yet been allocated.
\end{description}

With explicit memory management, we can re-allocate a cell as soon as it has been freed.
Formally, the \bfemph{memory-managed semantics} $\mmsem{\aprog}\subseteq \actions^*$ is defined like $\gcsem{\aprog}$ but has a second allocation rule:
\begin{description}
\item[(Malloc2)] Let $\acom$ be $\apvar:=\malloc$ and $\anadr\in \freedof{\tau}$. 
Then $\anup=[\apvar\mapsto \anadr]$. 
\end{description}
Note that (Malloc2) does not alter the selectors of address $\anadr$.
The set $\freedof{\tau}$ contains the addresses that have been allocated in $\tau$ and freed afterwards.
The definition is by induction.
In the base case, we have $\freedof{\varepsilon}:= \emptyset$.
The step case is
\begin{align*}
\freedof{\tau.(\athread, \freeof{\apvar}, \anup)}&:=\freedof{\tau}\cup \set{\anadr},&&\text{if $\heapcomputof{\tau}{\apvar}=\anadr\neq \segval$}\\
\freedof{\tau.(\athread, \apvar:=\malloc, \anup)}&:=\freedof{\tau}\setminus \set{\anadr},&&\text{if $\malloc$ returns $\anadr$}\\
\freedof{\tau.(\athread, \anact, \anup)}&:=\freedof{\tau},&&\text{otherwise}.
\end{align*}
\section{Pointer Race Freedom}\label{Section:PRF}
We show that for well-behaved programs the garbage-collected semantics coincides with the memory-managed semantics.
Well-behaved means there is no computation where one pointer frees a cell and later a dangling pointer accesses this cell.
We call such a situation a \bfemph{pointer race}, referring to the fact that the free command and the access are not synchronized, for otherwise the access should have been avoided. 
To apply this equivalence, we continue to show how to reduce the check for pointer race freedom itself to the garbage-collected semantics.
\subsection{PRF Guarantee}
The definition of pointer races relies on a notion of validity for pointer expressions.
To capture the situation sketched above, a pointer is invalidated if the cell it points to is freed.
A pointer race is now an access to an invalid pointer.
The definition of validity requires care when we pass pointers.
Consider an assignment $\apvar:=\psel{\apvarp}$ where $\apvarp$ points to $\anadr$ and $\psel{\anadr}$ points to $\anadrp$.
Then $\apvar$ becomes a valid pointer to $\anadrp$ only if both $\apvarp$ and $\psel{\anadr}$ were valid.
In Definition~\ref{Definition:Validity}, we use $\apexp$ to uniformly refer to $\apvar$ and $\psel{\anadr}$ on the left-hand side of assignments.
In particular, we evaluate pointer variables $\apvar$ to $\heapcomputof{\tau}{\apvar}=\anadr$ and write $\psel{\anadr}:=\apvarp$ for the assignment $\psel{\apvar}:=\apvarp$.
\begin{definition}\label{Definition:Validity}
The \bfemph{valid} pointer expressions in a computation $\tau\in\mmsem{\aprog}$, denoted by $\validof{\tau}\subseteq \pexp$, are defined inductively by $\validof{\varepsilon}:= \pexp$ and
\begin{align*}
\validof{\tau.(\athread, \apvar:=\psel{\apvarp}, \anup)} &:= \validof{\tau}\cup\set{\apvar},&&\text{if }\apvarp\in \validof{\tau}\wedge \psel{\heapcomputof{\tau}{\apvarp}}\in\validof{\tau}\\
\validof{\tau.(\athread, \apvar:=\psel{\apvarp}, \anup)} &:= \validof{\tau}\setminus\set{\apvar},&&\text{if }
\apvarp\notin \validof{\tau}\vee \psel{\heapcomputof{\tau}{\apvarp}}\notin\validof{\tau}\\
\validof{\tau.(\athread, \apexp:=\apvarp, \anup)} &:= \validof{\tau}\cup\set{\apexp},&&\text{if }\apvarp\in \validof{\tau}\\
\validof{\tau.(\athread, \apexp:=\apvarp, \anup)} &:= \validof{\tau}\setminus\set{\apexp},&&\text{if }\apvarp\notin \validof{\tau}\\
\validof{\tau.(\athread, \freeof{\apvar}, \anup)}&:=\validof{\tau}\setminus \invalidof{\anadr},&&\text{if }\anadr=\heapcomputof{\tau}{\apvar}\\
\validof{\tau.(\athread, \apvar:=\malloc, \anup)}&:=\validof{\tau}\cup\set{\apvar},\\
\validof{\tau.(\athread, \anact, \anup)}&:=\validof{\tau},&&\text{otherwise.}
\end{align*}
If $\anadr\neq \segval$, then $\invalidof{\anadr}:=\setcond{\apexp}{\heapcomputof{\tau}{\apexp}=\anadr}\cup \set{\pselarg{\anadr}{1},\ldots, \pselarg{\anadr}{n}}$. 
If $\anadr= \segval$, then $\invalidof{\anadr}:=\emptyset$.
\end{definition}
When we pass a valid pointer, this validates the receiver (adds it to $\validof\tau$).
When we pass an invalid pointer, this invalidates the receiver.
As a result, only some selectors of an address may be valid.
When we free an address $\anadr\neq \segval$, all expressions that point to $\anadr$ as well as all next selectors of $\anadr$ become invalid.
This has the effect of isolating $\anadr$ so that the address behaves like a fresh one for valid pointers.
A malloc validates the respective pointer but does not validate the next selectors of the allocated address. 

\begin{definition}[Pointer Race]\label{Definition:PRF}
A computation $\tau.(\athread, \acom, \anup)\in\mmsem{\aprog}$ is called a \bfemph{pointer race (PR)}, if $\acom$ is
\begin{itemize}
\item[(i)] a command containing $\dsel{\apvar}$ or $\psel{\apvar}$ or $\freeof{\apvar}$ with $\apvar\notin \validof{\tau}$, or 
\item[(ii)] an assertion containing $\apvar\notin \validof{\tau}$.
 \end{itemize}
\end{definition}
The last action of a PR is said to \bfemph{raise a PR}. 
A set of computations is \bfemph{pointer race free (PRF)} if it does not contain a PR. 
In Example~\ref{Example:Treiber}, 
the discussed comparison $\Topp=\topp$ within $\cas$ raises a PR since $\topp$ is invalid. 
It is worth noting that we can still pass around freed addresses without raising a PR.
This means the memory-managed and the garbage-collected semantics will not yield isomorphic heaps, but only yield isomorphic heaps on the valid pointers. 
We now define the notion of isomorphism among heaps $\aheap$.  

A function $\funa:\adrof{\aheap}\rightarrow \adr$ is an address mapping, if $\funa(\anadr)=\segval$ if and only if $\anadr=\segval$. 
Every address mapping induces a function $\fune:\domof{\aheap}\rightarrow\pexp\cup \dexp$ on the pointer and data expressions inside the heap by
\begin{alignat*}{3}
\fune(\apvar)&:=\apvar&\hspace{2cm} \fune(\advar)&:=\advar\\
\fune(\psel{\anadr})&:=\psel{\funa(\anadr)} &
\fune(\dsel{\anadr})&:=\dsel{\funa(\anadr)}.
\end{alignat*}
Pointer and data variables are mapped identically.
Pointers on the heap $\psel{\anadr}$ are mapped to $\psel{\funa(\anadr)}$ as defined by the address mapping, and similar for the data.
\begin{definition}
Two heaps $\aheap_1$ and $\aheap_2$ with $\aheap_i=(\apval_i, \adval_i)$ are \bfemph{isomorphic}, 
denoted by $\aheap_1\heapiso \aheap_2$, if there is a bijective address mapping
$\anisoa: \adrof{\aheap_1}\rightarrow \adrof{\aheap_2}$ where the induced 
$\anisoe: \domof{\aheap_1}\rightarrow \domof{\aheap_2}$ is again bijective and satisfies
\begin{alignat*}{3}
\anisoa(\apval_1(\apexp)) &= \apval_2(\anisoe(\apexp))&\qquad 
\adval_1(\adexp) &= \adval_2(\anisoe(\adexp)).
\end{alignat*}
\end{definition}

To prove a correspondence between the two semantics, we restrict heaps to the valid pointers.
The restriction operation keeps the data selectors for all addresses that remain. 
To be more precise, let $\aheap=(\apval, \adval)$ and $P\subseteq \pexp$.
The \bfemph{restriction of $\aheap$ to $P$} is the new heap 
$\restrict{\aheap}{P}:=(\restrict{\apval}{P}, \restrict{\adval}{D})$ with
\begin{align*}
D:=\dvars\cup \setcond{\dsel{\anadr}}{\exists \apexp\in \domof{\apval}\cap P:\apval(\apexp)=\anadr}\ .
\end{align*}
Restriction and update enjoy a pleasant interplay with isomorphism. 
\begin{lemma}\label{Lemma:HeapIsoAlgebra}
Let $\aheap_1\heapiso \aheap_2$ via $\anisoa$ and let $P\subseteq \pexp$.
Then
\begin{align}
\restrict{\aheap_1}{P}&\heapiso \restrict{\aheap_2}{\anisoe(P)}\label{Equation:HeapIsoRestrict}\\
\aheap_1[\psel{\anadr}\mapsto\anadrp]&\heapiso \aheap_2[\psel{\anadr'}\mapsto \anadrp']\label{Equation:HeapIsoModifyPointer}\\
\aheap_1[\dsel{\anadr}\mapsto d]&\heapiso \aheap_2[\dsel{\anadr'}\mapsto d]\label{Equation:HeapIsoModifyData}.
\end{align}
Isomorphisms~\eqref{Equation:HeapIsoModifyPointer} and~\eqref{Equation:HeapIsoModifyData} have a side condition.
If $\anadr\in\adrof{\aheap_1}$ then $\anadr'=\anisoa(\anadr)$.
If $\anadr\notin\adrof{\aheap_1}$ then $\anadr'\notin\adrof{\aheap_2}$,
and similar for $\anadrp$.
\end{lemma}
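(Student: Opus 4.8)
The plan is to prove the three isomorphisms by producing, in each case, an explicit bijective address mapping and checking that it satisfies the defining equations of \heapiso, reusing the given isomorphism $\anisoa$ as a building block. For~\eqref{Equation:HeapIsoRestrict}, I would take the \emph{same} address mapping $\anisoa$, restricted to $\adrof{\restrict{\aheap_1}{P}}$. The key observation is that restricting $\aheap_1$ to $P$ keeps exactly those addresses reachable through pointer expressions in $\domof{\apval_1}\cap P$ together with the variable-valued ones, and because $\anisoe$ is a bijection of $\domof{\aheap_1}$ onto $\domof{\aheap_2}$ that commutes with the valuations, it maps $\domof{\apval_1}\cap P$ onto $\domof{\apval_2}\cap\anisoe(P)$ and hence $\adrof{\restrict{\aheap_1}{P}}$ onto $\adrof{\restrict{\aheap_2}{\anisoe(P)}}$. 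I would then verify that the data-domain $D$ chosen in the restriction on the left is sent by $\anisoe$ precisely to the corresponding $D'$ on the right, so the restricted data valuation still matches. The defining equations for the restricted heaps are just the restrictions of the equations that already hold for $\aheap_1,\aheap_2$.

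For~\eqref{Equation:HeapIsoModifyPointer} and~\eqref{Equation:HeapIsoModifyData}, the idea is to extend $\anisoa$ if the update introduces a genuinely new address. Concretely: if $\anadr\in\adrof{\aheap_1}$, the side condition forces $\anadr'=\anisoa(\anadr)$, and writing $\psel{\anadr}$ (resp.\ $\dsel{\anadr}$) into the heap does not change the address set unless $\anadrp$ (resp.\ nothing, in the data case) is new; in the pointer case, if $\anadrp\in\adrof{\aheap_1}$ we keep $\anisoa$ unchanged, and if $\anadrp\notin\adrof{\aheap_1}$ then by the side condition $\anadrp'\notin\adrof{\aheap_2}$ and we extend the mapping by $\anadrp\mapsto\anadrp'$. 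Symmetrically, if $\anadr\notin\adrof{\aheap_1}$ we extend by $\anadr\mapsto\anadr'$ (and also by $\anadrp\mapsto\anadrp'$ if needed). One checks this extended map is still a bijection between $\adrof{\aheap_1[\psel{\anadr}\mapsto\anadrp]}$ and $\adrof{\aheap_2[\psel{\anadr'}\mapsto\anadrp']}$, that it still sends $\segval$ to $\segval$ and nothing else to $\segval$, and that the induced expression map $\anisoe$ now has $\psel{\anadr}$ in its domain mapped to $\psel{\anadr'}$. The verification of the defining equation then splits on whether the queried expression is the freshly updated one (both sides give $\anadrp$, resp.\ $d$, and by construction $\anisoa(\anadrp)=\anadrp'$) or an old one (the equation is inherited from $\aheap_1\heapiso\aheap_2$, using that $\anisoe$ of an old expression is unchanged).

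I expect the main obstacle to be bookkeeping around the address sets $\adrof{\cdot}$ after an update: an update $[\psel{\anadr}\mapsto\anadrp]$ can both add addresses ($\anadr$ and $\anadrp$, if not already present) and, in principle, make other addresses disappear from $\adrof{\cdot}$ if they were only reachable through the overwritten selector — but in fact $\adrof{\cdot}$ as defined here is $\domof{\apval}\cup\rangeof{\apval}\cup\domof{\adval}$ intersected with $\adr$, and overwriting a selector only ever removes an address from $\rangeof{\apval}$ while it may still be in the other two sets; I need to argue carefully that whatever is removed on the left is removed on the right in lockstep, which follows from $\anisoe$ being a bijection commuting with the valuations. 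A secondary subtlety is that when $\anadr\notin\adrof{\aheap_1}$ the expression $\psel{\anadr}$ was not in $\domof{\aheap_1}$, so adding it to the domain on both sides and extending $\anisoe$ consistently requires invoking the side condition $\anadr'\notin\adrof{\aheap_2}$ to know $\psel{\anadr'}$ was likewise absent on the right. Once these set-theoretic points are pinned down, each of the three equations reduces to a short case distinction, so I would present~\eqref{Equation:HeapIsoRestrict} in full and then treat~\eqref{Equation:HeapIsoModifyPointer} in detail, remarking that~\eqref{Equation:HeapIsoModifyData} is the analogous and simpler argument since the data case never changes $\rangeof{\apval}$.
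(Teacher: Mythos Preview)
Your plan is correct and follows essentially the same route as the paper: for~\eqref{Equation:HeapIsoRestrict} you restrict $\anisoa$ to $\adrof{\restrict{\aheap_1}{P}}$ and verify the induced $\anisoe'$ is a bijection between the restricted domains, and for~\eqref{Equation:HeapIsoModifyPointer} you extend $\anisoa$ by mapping any newly introduced address to its primed counterpart, then restrict to the updated address set. The paper likewise notes that an update may drop an address (the old target of $\psel{\anadr}$) from $\adrof{\cdot}$ and handles this by the same restriction you describe; it spells out only the representative case $\anadr\in\adrof{\aheap_1}$, $\anadrp\notin\adrof{\aheap_1}$, whereas you outline the full case split, but the underlying argument is identical.
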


Two computations are heap equivalent, if their sequences of actions coincide when projected to the threads and commands, and if the resulting heaps are isomorphic on the valid part.
We use $\downarrow$ for projection.
\begin{definition}
Computations $\tau, \sigma\in\mmsem{\aprog}$ are \bfemph{heap-equivalent}, $\tau\heapequiv \sigma$, if
\begin{align*}
\project{\tau}{\threads\times\coms}\ =\ \project{\sigma}{\threads\times\coms}\qquad
\text{and}\qquad \restrict{\heapcomput{\tau}}{\validof{\tau}}&\heapiso\ \restrict{\heapcomput{\sigma}}{\validof{\sigma}}\ .
\end{align*}
\end{definition}
We also write $\mmsem{\aprog}\heapequiv\gcsem{\aprog}$ to state that for every computation $\tau\in\mmsem{\aprog}$, there is a computation $\sigma\in\gcsem{\aprog}$ with $\tau\heapequiv \sigma$, and vice versa.

We are now ready to state the PRF guarantee. 
The idea is to consider pointer races programming errors.
If a program has pointer races, the programmer should be warned.
If the program is PRF, further analyses can rely on the garbage-collected semantics:
\begin{theorem}[PRF Guarantee]\label{Theorem:PRFGuarantee}
If $\mmsem{\aprog}$ is PRF, then $\mmsem{\aprog}\heapequiv \gcsem{\aprog}$.
\end{theorem}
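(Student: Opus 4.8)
The natural approach is a double simulation argument between $\mmsem{\aprog}$ and $\gcsem{\aprog}$, strengthening the statement to an inductive invariant. The containment $\gcsem{\aprog}\subseteq\mmsem{\aprog}$ is trivial (the rules of $\gcsem{\aprog}$ are a subset of those of $\mmsem{\aprog}$), so every $\sigma\in\gcsem{\aprog}$ is already witnessed by itself, with $\heapcomput{\sigma}$ restricted to $\validof{\sigma}$ isomorphic to itself. The content is the forward direction: given $\tau\in\mmsem{\aprog}$ that is PRF, build $\sigma\in\gcsem{\aprog}$ with $\project{\tau}{\threads\times\coms}=\project{\sigma}{\threads\times\coms}$ and $\restrict{\heapcomput{\tau}}{\validof{\tau}}\heapiso\restrict{\heapcomput{\sigma}}{\validof{\sigma}}$. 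I would prove this by induction on $|\tau|$, carrying along a \emph{stronger} hypothesis: there is an address mapping $\anisoa$ witnessing the isomorphism, it is compatible with the control flow of each thread (same program counters), and — crucially — $\validof{\tau}=\validof{\sigma}$, i.e. the set of valid pointer expressions is literally the same in both computations (this makes sense because validity is defined syntactically from the sequence of commands and the \emph{valid} part of the heap, which the induction keeps isomorphic).

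For the induction step, take $\tau.(\athread,\acom,\anup)\in\mmsem{\aprog}$ and let $\sigma$ be the computation obtained for $\tau$. One does a case analysis on $\acom$. For every command other than $\apvar:=\malloc$ via (Malloc2), the same rule is available in $\gcsem{\aprog}$, so I append the corresponding action to $\sigma$; the enabledness of the rule is preserved because (a) assertions and guards only inspect pointer expressions through equalities, and PRF guarantees the pointers inspected are valid, hence lie in the isomorphic part, so the guard evaluates the same way in $\tau$ and $\sigma$; (b) dereferencing commands $\apvarp:=\psel{\apvar}$, $\psel{\apvar}:=\apvarp$, $\advar:=\dsel{\apvar}$, $\dsel{\apvar}:=\advar$ and $\freeof{\apvar}$ require $\apvar\in\validof{\tau}$ by PRF, so the accessed cell is in the isomorphic part and the update is mirrored via $\anisoa$; Lemma~\ref{Lemma:HeapIsoAlgebra} then re-establishes the heap isomorphism after the update, and one checks the validity-update clauses of Definition~\ref{Definition:Validity} produce the same new $\validof{\cdot}$ on both sides (they only depend on which source pointers are valid and on the value of valid pointers, which agree). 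The interesting case is (Malloc2): $\tau$ re-allocates some $\anadr\in\freedof{\tau}$. Here I instead apply (Malloc1) in $\sigma$, picking a genuinely fresh address $\anadr'\notin\adrof{\heapcomput{\sigma}}$, and extend $\anisoa$ by $\anadr\mapsto\anadr'$. The point is that although $\anadr$ may still be pointed to by dangling pointers in $\tau$, none of those pointers is in $\validof{\tau}$ (freeing $\anadr$ removed exactly $\invalidof{\anadr}$, and nothing re-validated them since $\malloc$ only validates the target variable, not selectors of $\anadr$); so on the restriction to valid pointers, $\anadr$ looks exactly as fresh as $\anadr'$, and (Malloc1) resets the next-selectors to $\segval$ while (Malloc2) leaves them stale — but those stale selectors $\pselarg{\anadr}{i}$ are invalid, hence invisible after restriction. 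So the restricted heaps stay isomorphic, and the new validity sets again coincide (both add $\apvar$).

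The main obstacle, and the step I would spend the most care on, is precisely this bookkeeping around (Malloc2): proving that everything $\mmsem{\aprog}$ can still "see" through a re-allocated address is invalid, so that restricting to $\validof{\cdot}$ genuinely hides the discrepancy. This requires an auxiliary invariant of the form: for $\anadr\in\freedof{\tau}$, no expression in $\invalidof{\anadr}$-at-free-time has been re-validated, and more generally $\adrof{\restrict{\heapcomput{\tau}}{\validof{\tau}}}$ contains no "zombie" addresses in the sense that every valid pointer points to a cell whose selectors are consistently tracked. A secondary subtlety is the data field: (Malloc1) picks an arbitrary $\advalue$ for $\dsel{\anadr'}$ while (Malloc2) keeps the old $\dsel{\anadr}$; I would resolve this by choosing $\advalue:=\heapcomput{\tau}(\dsel{\anadr})$ when applying (Malloc1), which is legitimate since (Malloc1) allows an arbitrary data value — and note $\dsel{\anadr}$ is retained in the restricted heap because after the malloc $\apvar\in\validof{\tau}$ points to $\anadr$. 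Once these invariants are set up, the per-command verification is routine and leans entirely on Lemma~\ref{Lemma:HeapIsoAlgebra}.
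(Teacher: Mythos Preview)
Your plan is essentially the paper's: the paper proves the nontrivial direction as Proposition~\ref{Proposition:PRFImpliesGC} by the same induction on $|\tau|$ with the same case split, mimics (Malloc2) by (Malloc1) with a fresh address and the matched data value, isolates your ``zombie address'' invariant as Lemma~\ref{Lemma:Freed}, and pushes all heap-update bookkeeping through Lemma~\ref{Lemma:HeapIsoAlgebra}.

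One point to tighten: the strengthened invariant $\validof{\tau}=\validof{\sigma}$ is not literally true and will fail the moment you handle $\freeof{\apvar}$ or a malloc. The set $\validof{\cdot}$ contains selector expressions $\psel{\anadr}$, and since $\tau$ and $\sigma$ use different concrete addresses, the selector parts of the two validity sets differ; what holds is that $\anisoe$ carries one to the other, e.g.\ $\anisoe(\validof{\tau}\setminus\invalidof{\anadr})=\validof{\sigma}\setminus\invalidof{\anisoa(\anadr)}$ in the (Free) case. For pointer \emph{variables} your equality does hold (since $\anisoe(\apvar)=\apvar$), and that is all you actually use in points~(a) and~(b); the paper simply works directly with the restricted-heap isomorphism rather than naming a separate validity invariant.
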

The memory-managed semantics of Treiber's stack suffers from the ABA-problem while the garbage-collected semantics does not. 
The two are not heap-equivalent. 
By Theorem~\ref{Theorem:PRFGuarantee}, the difference is due to a PR. 
One such race is discussed in Example~\ref{Example:Treiber}.


In the proof of Theorem~\ref{Theorem:PRFGuarantee},
the inclusion from right to left always holds. 
The reverse direction needs information about the freed addresses:
if an address has been freed, it no longer occurs in the valid part of the heap --- 
provided the computation is PRF.
\begin{lemma}\label{Lemma:Freed}
Assume $\tau\in \mmsem{\aprog}$ is PRF.
Then $\freedof{\tau}\cap \adrof{\restrict{\heapcomput{\tau}}{\validof{\tau}}}=\emptyset$.
\end{lemma}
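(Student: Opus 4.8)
The plan is to prove Lemma~\ref{Lemma:Freed} by induction on the length of the computation $\tau$, strengthening the statement to an invariant that simultaneously tracks the relationship between validity and freedom. A natural candidate invariant is: for every address $\anadr\in\freedof{\tau}$, no valid pointer expression evaluates to $\anadr$, i.e. $\setcond{\apexp\in\validof{\tau}}{\heapcomputof{\tau}{\apexp}=\anadr}=\emptyset$, and moreover none of the selectors $\pselarg{\anadr}{1},\dots,\pselarg{\anadr}{n}$ of a freed address $\anadr$ are valid. This strengthening is what makes the induction go through, because the $(\mathtt{Asgn})$ rule for $\apvar:=\psel{\apvarp}$ can re-introduce a pointer to $\anadr$ through a heap selector, and we need to know that such a selector is invalid whenever $\anadr$ was freed and never re-allocated.

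The base case $\tau=\varepsilon$ (or the single initial action) is immediate since $\freedof{\varepsilon}=\emptyset$. For the induction step, I would do a case analysis on the last action $(\athread,\acom,\anup)$, using the assumption that $\tau.(\athread,\acom,\anup)$ is PRF (hence so is every prefix). The interesting cases are: (a) $\acom=\freeof{\apvar}$ with $\heapcomputof{\tau}{\apvar}=\anadr\neq\segval$ — here $\anadr$ is added to $\freedof{\cdot}$, but $\validof{\cdot}$ simultaneously removes exactly $\invalidof{\anadr}$, namely all expressions pointing to $\anadr$ and all of $\anadr$'s next selectors, which is precisely what the invariant demands for the newly freed address; for addresses already in $\freedof{\tau}$ the invariant is inherited since removing expressions from $\validof{\cdot}$ and the fact that free does not change the heap valuation cannot break it. (b) $\acom=\apvar:=\malloc$ returning $\anadr$ via $(\mathtt{Malloc2})$ — here $\anadr$ leaves $\freedof{\cdot}$, so we need not worry about it; $\apvar$ is validated but $\apvar$ now points to $\anadr\notin\freedof{\tau.(\dots)}$, and no other freed address is affected. (c) $\acom=\apvar:=\psel{\apvarp}$ — if $\apvar$ is validated, then by definition $\psel{\heapcomputof{\tau}{\apvarp}}\in\validof{\tau}$; letting $\anadr=\heapcomputof{\tau}{\apvar}=\heapcomputof{\tau}{\psel{\apvarp}}$ be the new target, the strengthened invariant for $\tau$ tells us $\anadr\notin\freedof{\tau}=\freedof{\tau.(\dots)}$, so adding $\apvar$ to the valid set is harmless. (d) $\acom=\psel{\apvar}:=\apvarp$ or $\apvar:=\apvarp$ — similar, using PRF (so $\apvar\in\validof{\tau}$ when a selector is written, giving $\apvar\notin\freedof{\tau}$ by the invariant, hence the address whose selector changes is unfreed) together with the fact that only a validated expression gets a new target and that target is the value of a currently valid pointer. (e) $\acom=\advar:=\opof{\dots}$, assertions, data assignments, segfault-free data selector accesses — these touch neither $\freedof{\cdot}$ nor the pointer structure in a way that matters, so the invariant transfers directly. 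Finally, restricting $\heapcomput{\tau}$ to $\validof{\tau}$ deletes every pointer expression whose value is a freed address, and with no valid pointer expression mapping into a freed $\anadr$, the address $\anadr$ cannot appear in $\domof{\restrict{\apval}{\validof{\tau}}}$, $\rangeof{\restrict{\apval}{\validof{\tau}}}$, or (via the set $D$ in the restriction) $\domof{\restrict{\adval}{D}}$; hence $\anadr\notin\adrof{\restrict{\heapcomput{\tau}}{\validof{\tau}}}$, which is the claim.

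The main obstacle I anticipate is case (c), the pointer-passing assignment $\apvar:=\psel{\apvarp}$, together with the bookkeeping of selectors in the strengthened invariant. The subtlety is that a freed address $\anadr$ may still be reachable from valid pointers \emph{as a stored value} without any valid pointer expression \emph{pointing to} it being the issue — rather, the danger is a chain $\apvarp \to \anadrp$, $\psel{\anadrp}\to\anadr$ where $\psel{\anadrp}$ got its value after $\anadr$ was freed. One must argue this selector is invalid: either it was invalidated at the free of $\anadr$ and never re-validated (which requires re-validation to go through a $\psel{\anadrp}:=\dots$ assignment passing a valid pointer, whose target is then unfreed), or $\anadrp$ itself was freed in between (invalidating all of $\anadrp$'s selectors), and in either sub-case the invariant for the shorter computation closes the gap. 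Getting the precise form of the strengthened invariant right — in particular remembering to include the clause about \emph{all} next-selectors of a freed address being invalid, mirroring the $\set{\pselarg{\anadr}{1},\dots,\pselarg{\anadr}{n}}$ part of $\invalidof{\anadr}$ — is the delicate point; once that invariant is in hand the individual cases are routine.
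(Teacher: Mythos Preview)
Your proposal is correct and follows essentially the same line as the paper: both arguments establish the strengthened invariant that for any $\anadr\in\freedof{\tau}$ (i) no valid pointer expression points to $\anadr$ and (ii) no selector $\psel{\anadr}$ is valid, with PRF used exactly to rule out re-validating $\psel{\anadr}$ via an assignment $\psel{\apvar}:=\apvarp$. The only cosmetic difference is packaging: the paper argues (i) and (ii) directly (case-splitting on whether a pointer ``learned'' $\anadr$ before or after the free), whereas you spell this out as an explicit induction on the length of $\tau$ with a case analysis on the last action---your version is the more formal of the two, but the content is the same.
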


Lemma~\ref{Lemma:HeapIsoAlgebra} and~\ref{Lemma:Freed} allow us to prove 
Proposition~\ref{Proposition:PRFImpliesGC}.
The result implies the missing direction of Theorem~\ref{Theorem:PRFGuarantee} and will also be helpful later on.
\begin{proposition}\label{Proposition:PRFImpliesGC}
Consider $\tau\in\mmsem{\aprog}$ PRF. Then there is $\sigma\in\gcsem{\aprog}$ with $\sigma\comequiv \tau$.
\end{proposition}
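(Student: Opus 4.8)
The plan is to build $\sigma$ from $\tau$ by induction on the length of $\tau$, maintaining an auxiliary invariant strong enough to push the induction through. I would prove a strengthened statement: for every PRF $\tau\in\mmsem{\aprog}$ there is $\sigma\in\gcsem{\aprog}$ with $\project{\tau}{\threads\times\coms}=\project{\sigma}{\threads\times\coms}$, with $\validof{\tau}=\validof{\sigma}$ (the valid-pointer sets literally coincide, since validity is defined purely in terms of the command/thread projection and the pointer values reachable through valid pointers), and with $\restrict{\heapcomput{\tau}}{\validof{\tau}}\heapiso\restrict{\heapcomput{\sigma}}{\validof{\sigma}}$ via some address mapping $\anisoa$. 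Actually I expect to need a slightly stronger bookkeeping: $\anisoa$ should be chosen so that it also ``knows'' which $\gcsem{\aprog}$-addresses are still available for fresh allocation, so that a (Malloc2) step in $\tau$ can be matched by a (Malloc1) step in $\sigma$ picking a genuinely fresh address.

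The base case is immediate: the initial heap has all pointer variables at $\segval$, nothing is allocated, and both semantics have the same initial actions. For the step case, take $\tau.(\athread,\acom,\anup)$ PRF, so $\tau$ is PRF and the induction hypothesis gives $\sigma$ and $\anisoa$. I would then case on $\acom$. For all the ``ordinary'' commands — assignments $\apvar:=\apvarp$, $\psel{\apvar}:=\apvarp$, $\apvar:=\psel{\apvarp}$, the data assignments, $\advar:=\opof{\ldots}$, $\freeof{\apvar}$, and assertions — the key point is that PRF guarantees any pointer dereferenced, freed, or compared is in $\validof{\tau}$, hence lives in the restricted heap, hence by the isomorphism the corresponding command is enabled in $\sigma$ with a ``mirrored'' update $\anup'$ obtained by pushing $\anup$ through $\anisoe$; then Lemma~\ref{Lemma:HeapIsoAlgebra}, parts~\eqref{Equation:HeapIsoModifyPointer}–\eqref{Equation:HeapIsoModifyData} (for the pointer/data writes) together with~\eqref{Equation:HeapIsoRestrict} (to re-restrict after the update), gives the isomorphism for the extended computations. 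For assertions the update is empty and one only needs that equality of the two compared valid pointers transfers across the bijection $\anisoa$. The validity bookkeeping goes through because $\validof{\cdot}$ is updated by the same rule on both sides and the values of valid pointers agree under $\anisoa$; for $\freeof{\apvar}$ one uses that $\invalidof{\anadr}$ is computed from $\heapcomput{\tau}$ restricted to what points at $\anadr$, which is mirrored.

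The interesting case is allocation. If $\tau.(\athread,\apvar:=\malloc,\anup)$ used rule (Malloc1) with a fresh $\anadr\notin\adrof{\heapcomput{\tau}}$, I match it with (Malloc1) in $\sigma$ choosing a fresh $\anadr'\notin\adrof{\heapcomput{\sigma}}$ and extend $\anisoa$ by $\anadr\mapsto\anadr'$; the side condition of Lemma~\ref{Lemma:HeapIsoAlgebra} is exactly what makes the new data-selector value and $\segval$-initialized next-selectors line up, and $\apvar$ becomes valid on both sides. If instead (Malloc2) was used, so $\anadr\in\freedof{\tau}$, the crucial observation — this is the step I expect to be the main obstacle — is Lemma~\ref{Lemma:Freed}: since $\tau$ is PRF, $\anadr\notin\adrof{\restrict{\heapcomput{\tau}}{\validof{\tau}}}$, so $\anadr$ does not appear in the restricted heap at all, and the only effect of (Malloc2) visible in $\restrict{\heapcomput{\tau.\anact}}{\validof{\tau.\anact}}$ is that $\apvar$ now points (validly) to an address whose selectors are \emph{not} valid. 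Hence from the restricted-heap viewpoint this is indistinguishable from allocating a completely fresh cell: I match it by a (Malloc1) step in $\sigma$ on a fresh $\anadr'$, redefine $\anisoa$ to send $\anadr\mapsto\anadr'$ (and drop any stale image of $\anadr$, which cannot have been in the restricted range anyway by Lemma~\ref{Lemma:Freed}), and check that $\restrict{\heapcomput{\tau.\anact}}{\validof{\cdot}}\heapiso\restrict{\heapcomput{\sigma.\anact'}}{\validof{\cdot}}$ — the newly valid pointer $\apvar$ maps to $\anadr'$, and since $\psel{\anadr}\notin\validof{\tau.\anact}$ the differing next-selectors are simply not in the restricted domain. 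The bulk of the write-up is the careful verification that each update, after re-restriction, still matches, which is routine given Lemma~\ref{Lemma:HeapIsoAlgebra}; packaging the definition of $\comequiv$ (heap-equivalence plus the command projection) at the end is then immediate.
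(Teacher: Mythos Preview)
Your proposal follows the same inductive skeleton as the paper's proof: induction on the length of $\tau$, case split on the last action, Lemma~\ref{Lemma:HeapIsoAlgebra} to push the isomorphism through updates and restrictions, and Lemma~\ref{Lemma:Freed} as the key ingredient for the (Malloc2) case so that a re-allocation can be simulated by a fresh (Malloc1) on the $\sigma$-side.

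Two points to tighten. First, your strengthened invariant ``$\validof{\tau}=\validof{\sigma}$ literally'' is false as stated: these sets contain selector expressions $\psel{\anadr}$, and the addresses $\anadr$ differ between $\tau$ and $\sigma$ (precisely because $\sigma$ allocates fresh cells where $\tau$ re-allocates). The right invariant---and what the paper actually uses---is that the valid parts correspond under $\anisoe$; for instance, in the (Free) case one verifies $\anisoe(\validof{\tau}\setminus\invalidof{\anadr})=\validof{\sigma}\setminus\invalidof{\anisoa(\anadr)}$. Your argument for (Free), and for the validity bookkeeping in general, already implicitly relies on this correspondence rather than literal equality, so the fix is only to the statement of the invariant. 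Second, in the (Malloc2) case you handle the next selectors correctly (they stay outside $\validof{\tau.\anact}$ and hence outside the restricted heap), but you omit the data selector: after the allocation $\apvar$ is a valid pointer to $\anadr$, so $\dsel{\anadr}$ enters the restricted heap with whatever value it had before the free. To match this, the mimicking (Malloc1) step in $\sigma$ must choose its arbitrary data value to be exactly $\heapcomputof{\tau}{\dsel{\anadr}}$; the paper does this explicitly.
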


To apply Theorem~\ref{Theorem:PRFGuarantee}, one has to prove $\mmsem{\aprog}$ PRF. 
We develop a technique for this.
\subsection{Checking PRF}\label{Section:CheckPRF}
We show that checking pointer race freedom for the memory-managed semantics can be reduced to checking pointer race freedom for the garbage-collected semantics. 
The key argument is that the earliest possible PR always lie in the garbage-collected semantics.
Technically, we consider a shortest PR in the memory-managed semantics and from this construct a PR in the garbage-collected semantics. 
\begin{theorem}[Checking PRF]\label{Theorem:CheckPRF}
$\mmsem{\aprog}$ is PRF if and only if $\gcsem{\aprog}$ is PRF.
\end{theorem}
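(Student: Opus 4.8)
The plan is to prove the two directions separately, with the forward direction being essentially trivial and the reverse direction carrying all the weight. For the direction ``$\gcsem{\aprog}$ PRF $\Rightarrow$ $\mmsem{\aprog}$ PRF'', I would argue by contraposition: suppose $\mmsem{\aprog}$ contains a pointer race and pick a \emph{shortest} such computation $\tau.(\athread,\acom,\anup)$, so that the prefix $\tau$ is itself PRF. The idea is that a shortest PR cannot essentially depend on any re-allocation, because any re-allocation happening in $\tau$ could be replaced by a fresh allocation without affecting the valid part of the heap — and validity is all that the PR-raising command ``sees'' in the sense relevant to Definition~\ref{Definition:PRF}. So the goal is to transform $\tau$ into a garbage-collected computation $\sigma$ that still exhibits the same race.

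The key engine for this transformation is already available: since $\tau$ is PRF, Proposition~\ref{Proposition:PRFImpliesGC} yields a computation $\sigma\in\gcsem{\aprog}$ with $\sigma\comequiv\tau$, i.e. $\sigma$ and $\tau$ have the same projection to threads and commands and $\restrict{\heapcomput{\sigma}}{\validof{\sigma}}\heapiso\restrict{\heapcomput{\tau}}{\validof{\tau}}$. Now I would append the same command $\acom$ (executed by the same thread $\athread$) to $\sigma$. First I must check this extended sequence is a legal computation of $\gcsem{\aprog}$ — the control flow matches because the projections agree, and the semantic enabledness of $\acom$ (e.g. the pointer involved is assigned a cell, so no segfault; an assert's condition holds) transfers across the heap isomorphism on the valid part, possibly using the convention that $\segval$-valued pointers let asserts pass. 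Then I argue the appended command raises a PR in $\sigma$: by Definition~\ref{Definition:PRF} the race in $\tau$ is caused by some pointer variable $\apvar$ with $\apvar\notin\validof{\tau}$; I need $\apvar\notin\validof{\sigma}$ as well. This should follow because $\validof{\cdot}$, like $\comequiv$, is determined by the projection of the computation to threads and commands together with the valid part of the heap along the way — the inductive clauses of Definition~\ref{Definition:Validity} only ever inspect whether pointers are already valid and what valid pointers evaluate to, never the invalid/freed junk. Making this last point precise — that $\validof{\tau}=\validof{\sigma}$ whenever $\sigma\comequiv\tau$, or at least that the relevant variable's validity status is preserved — is the main obstacle, and I expect it requires either strengthening the statement of Proposition~\ref{Proposition:PRFImpliesGC} to also assert $\validof{\sigma}=\validof{\tau}$, or an auxiliary lemma proved by induction along the construction in that proposition's proof.

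For the reverse direction ``$\mmsem{\aprog}$ PRF $\Rightarrow$ $\gcsem{\aprog}$ PRF'', this is immediate from $\gcsem{\aprog}\subseteq\mmsem{\aprog}$: the memory-managed semantics is defined as the garbage-collected one plus the extra rule (Malloc2), so every garbage-collected computation is in particular a memory-managed computation, and a PR in $\gcsem{\aprog}$ would witness that $\mmsem{\aprog}$ is not PRF. Hence if $\mmsem{\aprog}$ is PRF then so is $\gcsem{\aprog}$.

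In summary, the skeleton is: (i) observe $\gcsem{\aprog}\subseteq\mmsem{\aprog}$ for the easy direction; (ii) for the hard direction, take a shortest PR $\tau.(\athread,\acom,\anup)$ in $\mmsem{\aprog}$, so $\tau$ is PRF; (iii) invoke Proposition~\ref{Proposition:PRFImpliesGC} to get $\sigma\in\gcsem{\aprog}$ with $\sigma\comequiv\tau$; (iv) prove the companion fact $\validof{\sigma}=\validof{\tau}$ (the part I anticipate needing the most care, done by tracking validity through the proof of Proposition~\ref{Proposition:PRFImpliesGC}); (v) extend $\sigma$ by $(\athread,\acom,\anup')$, checking enabledness transfers via the heap isomorphism on valid pointers; (vi) conclude the extended $\sigma$ raises a PR, contradicting $\gcsem{\aprog}$ PRF. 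This establishes the biconditional.
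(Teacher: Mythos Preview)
Your skeleton matches the paper's: the easy direction via $\gcsem{\aprog}\subseteq\mmsem{\aprog}$, then a shortest PR $\tau.\anact$, Proposition~\ref{Proposition:PRFImpliesGC} to produce $\sigma\in\gcsem{\aprog}$ with $\sigma\heapequiv\tau$, and finally replay $\anact$ after $\sigma$. But you have misidentified where the real work lies.

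The step you flag as the main obstacle --- transferring $\apvar\notin\validof{\tau}$ to $\apvar\notin\validof{\sigma}$ --- is in fact immediate from the definition of heap equivalence: the induced map $\anisoe$ is a \emph{bijection} between $\domof{\restrict{\heapcomput{\tau}}{\validof{\tau}}}$ and $\domof{\restrict{\heapcomput{\sigma}}{\validof{\sigma}}}$ and is the identity on pointer variables, so $\apvar$ is valid in one iff it is valid in the other. No strengthening of Proposition~\ref{Proposition:PRFImpliesGC} is needed.

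The genuine gap is in your enabledness argument. You write that enabledness of $\acom$ ``transfers across the heap isomorphism on the valid part'', but the pointer $\apvar$ triggering the race is \emph{invalid}, so the isomorphism says nothing about its value in $\sigma$. Two concrete problems arise. First, if $\acom$ dereferences $\apvar$ (e.g.\ $\apvarp:=\psel{\apvar}$), you must know $\heapcomputof{\sigma}{\apvar}\neq\segval$ to make the action enabled; this does not follow from the isomorphism and is exactly what Lemma~\ref{Lemma:GCPRF}(i) supplies (invalid pointers in PRF garbage-collected computations are never $\segval$). Second, if $\acom$ is an assertion $\assert\ \apvar=\apvarp$, the condition may simply fail after $\sigma$ because the invalid pointer's value can differ arbitrarily; the $\segval$ convention does not rescue this. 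The paper's fix is the complement requirement on assertions: if $\assert\ \acond$ is not enabled then $\assert\ \neg\acond$ is, and either one still raises a PR since it mentions the invalid $\apvar$. Without these two ingredients your step~(v) does not go through.
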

To illustrate the result, the pointer race in Example~\ref{Example:Treiber} belongs to the memory-managed semantics. 
Under garbage collection, there is a similar computation which does not re-allocate $\anadr$. 
Freeing $\anadr$ still renders $\topp$ invalid 
and, as before, leads to a PR in $\cas$. 
The proof of Theorem~\ref{Theorem:CheckPRF} applies Proposition~\ref{Proposition:PRFImpliesGC} to mimic the shortest racy computation up to the last action.
To mimic the action that raises the PR, we need the fact that an invalid pointer variable does not hold $\segval$, as stated in the following lemma.
\begin{lemma}\label{Lemma:GCPRF}
Consider a PRF computation $\sigma\in\gcsem{\aprog}$.
(i) If $\apvar\notin\validof{\sigma}$, then 
$\heapcomputof{\sigma}{\apvar}\neq \segval$.
(ii) If $\apexp\in\validof{\sigma}$, $\heapcomputof{\sigma}{\apexp}=\anadr\neq \segval$, and $\psel{\anadr}\notin\validof{\sigma}$, then $\heapcomputof{\sigma}{\psel{\anadr}}\neq \segval$.
\end{lemma}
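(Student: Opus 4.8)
The plan is to prove both parts by induction on the length of the PRF computation $\sigma \in \gcsem{\aprog}$, tracking the invariant that ties invalidity of a pointer expression to it pointing into $\freedof{\sigma}$ together with the fact that freed addresses are genuine cells (not $\segval$).

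\medskip

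\noindent\textbf{Setup and key invariant.} First I would observe that in $\gcsem{\aprog}$ a pointer expression $\apexp$ can only become invalid through a $\freeof{\apvar}$ action (which strips all expressions pointing to $\heapcomputof{\tau}{\apvar}$ and the selectors of that address) or by being assigned from an already-invalid source. I would prove the strengthened claim: for PRF $\sigma \in \gcsem{\aprog}$, (a) if $\apvar \notin \validof{\sigma}$ then $\heapcomputof{\sigma}{\apvar} \in \freedof{\sigma}$, and (b) if $\apexp \in \validof{\sigma}$, $\heapcomputof{\sigma}{\apexp} = \anadr \neq \segval$, and $\psel{\anadr} \notin \validof{\sigma}$, then $\heapcomputof{\sigma}{\psel{\anadr}} \in \freedof{\sigma}$. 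Since every address in $\freedof{\sigma}$ is an address that was returned by some earlier $\malloc$ in $\sigma$, it lies in $\adr \setminus \set{\segval}$; this immediately yields (i) and (ii) of the lemma, because $\segval \notin \freedof{\sigma}$.

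\medskip

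\noindent\textbf{Induction step.} The base case $\sigma = \varepsilon$ is vacuous since $\validof{\varepsilon} = \pexp$. For the step, I would write $\sigma = \tau.(\athread, \acom, \anup)$ and do a case analysis on $\acom$, using the inductive hypothesis for $\tau$ (which is PRF, being a prefix). The interesting cases are: a $\freeof{\apvarp}$ action with $\heapcomputof{\tau}{\apvarp} = \anadrq$ — here any newly invalidated $\apvar$ has $\heapcomputof{\tau}{\apvar} = \anadrq$, which is added to $\freedof{\tau}$, and newly invalidated $\psel{\anadrq}$ selectors point wherever they did before but now $\anadrq \in \freedof{\sigma}$ so part (b) must be re-examined against the selectors of $\anadrq$; the PRF assumption guarantees $\apvarp \in \validof{\tau}$, so $\anadrq \neq \segval$ by IH(i), hence the free genuinely updates $\freedof{}$. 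A pointer assignment $\apvar := \apvarp$ or $\apvar := \psel{\apvarp}$ propagating invalidity: if the source is invalid, IH gives that the source evaluates into $\freedof{\tau} = \freedof{\sigma}$, and the heap update copies that address to $\apvar$. For $\apvar := \malloc$ the pointer becomes valid and, in $\gcsem{}$, points to a fresh address not in $\freedof{\sigma}$, so nothing to check; a heap write $\psel{\apvar} := \apvarp$ requires $\apvar \in \validof{\tau}$ by PRF and is handled by the analogous selector bookkeeping. Data operations and asserts do not touch the valid set or the pointer valuation, so they are immediate.

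\medskip

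\noindent\textbf{Main obstacle.} The delicate point is part (b): when a fresh $\freeof{}$ turns some address $\anadr$ into a freed address while there is still a valid pointer $\apexp$ reaching $\anadr$, I must show the (now invalid) selectors $\psel{\anadr}$ still point into $\freedof{\sigma}$. This is exactly the content that must be preserved: one needs to argue that whatever $\psel{\anadr}$ pointed to was \emph{already} freed, or — more carefully — restate (b) so that it asserts $\heapcomputof{\sigma}{\psel{\anadr}} \in \freedof{\sigma}$ only when the selector is invalid \emph{and} handle the case where it was just invalidated by the free of $\anadr$ itself. The clean way is to track a slightly stronger invariant, e.g. that every invalid pointer expression (variable or selector of an in-use address) evaluates into $\freedof{\sigma} \cup \set{\segval}$, and that $\segval$ arises only when the expression was never validated after creation — which for a selector $\psel{\anadr}$ of a freshly malloc'd-then-never-written address is impossible once $\anadr$ has itself been freed, because the free adds $\anadr$ to $\freedof{}$ but the selector value is some address, and the only way a selector holds $\segval$ is right after $\malloc$ in $\gcsem{}$, before any write. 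I would make this precise and check it survives each transition; once the invariant is stable, reading off (i) and (ii) is immediate from $\segval \notin \freedof{\sigma}$.
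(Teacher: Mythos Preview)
Your strengthened invariant (a)+(b) is sound and does yield the lemma; it is a different route from the paper, which proves (i) and (ii) directly by induction with a case split on the last command, without detouring through $\freedof{\sigma}$. Your (a) is essentially one direction of Lemma~\ref{Lemma:GCValid} combined with the conclusion of~(i), and (b) is the selector analogue. Either approach works; yours makes the tie to freed addresses explicit, the paper's stays closer to the statement.

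That said, the ``Main obstacle'' you identify is a phantom, and your proposed fix is both unnecessary and would not work. When $\freeof{\apvarp}$ frees an address $c$, the set $\invalidof{c}$ removes \emph{every} pointer expression with target $c$ from $\validof{}$. So after the step there is no valid $\apexp$ with $\heapcomputof{\sigma}{\apexp}=c$; the premise of your (b) for $\anadr=c$ is vacuously false, and you never need to argue that the just-invalidated selectors $\psel{c}$ point into $\freedof{\sigma}$. The ``slightly stronger invariant'' you sketch (allowing value $\segval$ and tracking never-validated expressions) is therefore unneeded --- and it is in fact weaker, not stronger, so it would not give the lemma directly. What \emph{does} need checking in the free case is the other half of $\invalidof{c}$: selectors $\psel{\anadr}$ with $\anadr\neq c$ that happened to point to $c$ become invalid, and these now point to $c\in\freedof{\sigma}$, exactly as (b) requires. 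One small slip: your appeal to ``IH(i)'' to conclude $c\neq\segval$ does not type-check, since (i)/(a) speak about \emph{invalid} pointers while $\apvarp$ is valid by PRF; this is harmless, though, because when $c=\segval$ both $\validof{}$ and $\freedof{}$ are left untouched by the free.
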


While the completeness proof of Theorem~\ref{Theorem:CheckPRF} is non-trivial, checking PRF for $\gcsem{\aprog}$ is an easy task.  
One instruments the given program $\aprog$ to a new program $\aprog'$ as follows:  $\aprog'$ tags every address that is freed and checks whether a tagged address is dereferenced, freed, or used in an assertion. 
In this case, $\aprog'$ enters a distinguished goal state.
\begin{proposition}\label{Proposition:Instrumentation}
$\gcsem{\aprog}$ is PRF if and only if $\gcsem{\aprog'}$ cannot reach the goal state.  
\end{proposition}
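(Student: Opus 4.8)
The plan is to make the informal description of the instrumentation precise and then argue both directions by a straightforward simulation between $\gcsem{\aprog}$ and $\gcsem{\aprog'}$. First I would fix the construction of $\aprog'$. Each address carries an extra boolean tag $\mathtt{freed}(\anadr)$, initially false for every allocated cell; since we work under garbage collection, cells are never re-allocated, so the tag is monotone. Command $\freeof{\apvar}$ is rewritten to first check $\apvar\neq\segval$, set $\mathtt{freed}(\heapcomputof{\tau}{\apvar})$ to true, and — crucially — is *not* what triggers the goal state by itself; rather, every command that, according to Definition~\ref{Definition:PRF}, could raise a pointer race is guarded. Concretely, a command containing $\dsel{\apvar}$, $\psel{\apvar}$, or $\freeof{\apvar}$, and an assertion containing $\apvar$, are each preceded by a test: if the relevant pointer variable $\apvar$ is invalid, go to the goal state; otherwise proceed as in $\aprog$. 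To make this testable inside the program, $\aprog'$ must also track validity. The subtlety is that validity (Definition~\ref{Definition:Validity}) is not a purely local property of a single address tag: $\apvar$ can become invalid either because the cell it points to was freed, or because it received its value from an already-invalid pointer. I would therefore let $\aprog'$ maintain, for every pointer variable $\apvar$ and every selector $\psel{\anadr}$, a boolean $\mathtt{valid}(\cdot)$ updated exactly along the rules of Definition~\ref{Definition:Validity} — this is a finite, effective instrumentation because each update rule inspects only the current heap and the current valid-set, both of which $\aprog'$ has access to.

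The core of the argument is then a bijection between computations of $\gcsem{\aprog}$ and computations of $\gcsem{\aprog'}$ that do not yet use the goal-state transition, obtained by erasing the bookkeeping updates. I would prove by induction on the length of $\tau\in\gcsem{\aprog}$ that the corresponding $\tau'\in\gcsem{\aprog'}$ satisfies: the projection of $\tau'$ onto the original commands equals $\tau$; $\heapcomput{\tau'}$ restricted to the original heap component equals $\heapcomput{\tau}$; $\mathtt{freed}(\anadr)$ holds in $\heapcomput{\tau'}$ iff $\anadr\in\freedof{\tau}$ (using that under GC no address leaves $\freedof{\cdot}$ again via (Malloc2), since that rule is absent); and $\mathtt{valid}(\apexp)$ holds iff $\apexp\in\validof{\tau}$. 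The last two equivalences follow because the instrumentation copies the defining equations of $\freedof{\cdot}$ and $\validof{\cdot}$ verbatim. Given this invariant, the guard inserted before a potentially-racy command evaluates to "take the goal transition" exactly when the corresponding action in $\gcsem{\aprog}$ raises a pointer race in the sense of Definition~\ref{Definition:PRF}.

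From the invariant both directions drop out. If $\gcsem{\aprog}$ is not PRF, take a shortest pointer race $\tau.(\athread,\acom,\anup)\in\gcsem{\aprog}$; its prefix $\tau$ is PRF, so by the invariant $\aprog'$ has a matching computation after which the guard for $\acom$ detects an invalid pointer and moves to the goal state — hence $\gcsem{\aprog'}$ reaches the goal. Conversely, if $\gcsem{\aprog'}$ reaches the goal state, consider the first such computation; its strict prefix uses no goal transition, so by the invariant it corresponds to a PRF computation $\tau\in\gcsem{\aprog}$, and the guard fired only because some pointer used in the next original command is invalid in $\validof{\tau}$, i.e. $\tau$ extended by that command is a pointer race — so $\gcsem{\aprog}$ is not PRF.

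I expect the main obstacle to be the faithful in-program tracking of $\validof{\cdot}$ rather than the simulation bookkeeping, which is routine. In particular one must check that the rule $\validof{\tau.(\athread,\freeof{\apvar},\anup)}:=\validof{\tau}\setminus\invalidof{\anadr}$ — which invalidates *all* expressions aliasing $\anadr$ and all of $\anadr$'s next selectors at once — is implementable by $\aprog'$; this is fine because $\invalidof{\anadr}$ is determined by the current heap, which the program can inspect, and because the set of selectors is finite. One should also be careful with the $\segval$ corner cases in (Asrt) and in $\invalidof{\segval}=\emptyset$: the guard must test invalidity of the *variable* $\apvar$ exactly as Definition~\ref{Definition:PRF}(ii) prescribes, independently of whether $\heapcomputof{\tau}{\apvar}=\segval$, so that the instrumentation neither over- nor under-approximates the pointer-race condition. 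Once these points are nailed down, the equivalence is immediate.
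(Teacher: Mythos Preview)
Your proposal is correct, but it takes a heavier route than the paper. You have $\aprog'$ maintain the full validity relation $\validof{\cdot}$ by replaying the inductive rules of Definition~\ref{Definition:Validity}; as you yourself flag, the delicate step is $\freeof{\apvar}$, where the instrumentation must locate and invalidate \emph{every} alias of the freed cell---a global operation over all threads' variables and all heap selectors, which sits uneasily with the per-thread command set of Section~\ref{Section:Programs}. The paper avoids this entirely by first proving a GC-specific collapse of validity: Lemma~\ref{Lemma:GCValid} shows that for $\sigma\in\gcsem{\aprog}$ with $\heapcomputof{\sigma}{\apexp}=\anadr\neq\segval$, the expression $\apexp$ is invalid iff $\anadr\in\freedof{\sigma}$. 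Under garbage collection, then, invalidity is a property of the pointed-to \emph{address} rather than of the pointer expression, so the instrumentation needs only a single per-address ``freed'' tag, set once at the free and never cleared (there is no (Malloc2) under GC). The guard before a potentially racy command simply looks up that tag through the variable in question---no alias enumeration, no per-variable or per-selector validity bits. The paper packages this as Lemma~\ref{Lemma:NormalFormRaces}, a direct characterization of PRF that the instrumentation implements verbatim. Your approach buys semantics-independence (it would in principle also work over $\mmsem{\aprog}$); the paper's approach buys a trivially implementable, purely local check, at the cost of relying on the GC-only coincidence of ``invalid'' and ``points to a freed cell'' (with Lemma~\ref{Lemma:GCPRF}(i) discharging the $\segval$ corner case you worried about).
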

For the correctness proof, we only need to observe that under garbage collection the invalid pointers are precisely the pointers to the freed cells.
\begin{lemma}\label{Lemma:GCValid}
Let $\sigma\in\gcsem{\aprog}$ and $\heapcomputof{\sigma}{\apexp}=\anadr\neq \segval$.
Then $\apexp\notin\validof{\sigma}$ iff $\anadr\in\freedof{\sigma}$.
\end{lemma}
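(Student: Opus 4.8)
The plan is an induction on the length of $\sigma$ with a case distinction on the last command; it is convenient to carry along, by the same induction, the robust companion statement that a valid pointer expression always has a target outside $\freedof{\sigma}$, i.e.\ that $\heapcomputof{\sigma}{\apexp}=\anadr\neq\segval$ and $\anadr\in\freedof{\sigma}$ imply $\apexp\notin\validof{\sigma}$. Write $\sigma=\sigma'.(\athread,\acom,\anup)$. In the base case $\sigma$ is the initial action: every pointer variable holds $\segval$, no selector cell is present, so the premise $\heapcomputof{\sigma}{\apexp}\neq\segval$ is vacuous and $\freedof{\sigma}=\emptyset$. For the step, fix $\apexp$ with $\heapcomputof{\sigma}{\apexp}=\anadr\neq\segval$.

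The commands that touch neither $\heapcomput{}$ on $\apexp$, nor $\validof{}$, nor $\freedof{}$ --- assertions, data operations, and data-selector assignments --- follow at once from the induction hypothesis. For $\apvar:=\apvarp$ we have $\anadr=\heapcomputof{\sigma'}{\apvarp}$, $\freedof{\sigma}=\freedof{\sigma'}$, and $\apvar\in\validof{\sigma}$ iff $\apvarp\in\validof{\sigma'}$, so the induction hypothesis for $\apvarp$ (same, non-$\segval$, target) applies. For $\psel{\apvar}:=\apvarp$, race freedom forces $\apvar\in\validof{\sigma'}$, only the cell $\psel{\heapcomputof{\sigma'}{\apvar}}$ changes, $\freedof{}$ is unchanged, and the validity of the updated selector tracks that of $\apvarp$; again the induction hypothesis for $\apvarp$ suffices. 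The assignment $\apvar:=\psel{\apvarp}$ is the only one threading validity through the heap: race freedom gives $\apvarp\in\validof{\sigma'}$, enabledness gives $\anadrp:=\heapcomputof{\sigma'}{\apvarp}\neq\segval$, and $\apvar\in\validof{\sigma}$ iff $\psel{\anadrp}\in\validof{\sigma'}$; since our $\apvar$ has non-$\segval$ target $\anadr=\heapcomputof{\sigma'}{\psel{\anadrp}}$, the induction hypothesis for the selector $\psel{\anadrp}$ matches the validity of $\apvar$ with membership of $\anadr$ in $\freedof{\sigma'}=\freedof{\sigma}$.

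For $\apvar:=\malloc$, the garbage-collected semantics uses only (Malloc1), so $\anadr\notin\adrof{\heapcomput{\sigma'}}$. Here I would first record the fact specific to garbage collection that $\freedof{\sigma'}\subseteq\adrof{\heapcomput{\sigma'}}$: an address enters $\freedof{}$ only after having been allocated, allocation installs the cells $\psel{\anadr}\mapsto\segval$, and since updates never delete, the address stays in use for the rest of the computation. Hence a fresh address has never been freed, so $\anadr\notin\freedof{\sigma}$, consistently with $\apvar$ becoming valid; for every other pointer expression nothing relevant changes (the new selectors hold $\segval$) and the induction hypothesis carries over. For $\freeof{\apvar}$ with $\anadrf:=\heapcomputof{\sigma'}{\apvar}$: if $\anadrf=\segval$ nothing changes; otherwise $\freedof{\sigma}=\freedof{\sigma'}\cup\set{\anadrf}$, the heap is unchanged, and $\validof{\sigma}=\validof{\sigma'}\setminus\invalidof{\anadrf}$. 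If $\anadr=\anadrf$, then $\apexp\in\invalidof{\anadrf}$, so $\apexp$ is invalid in $\sigma$ while $\anadr\in\freedof{\sigma}$ --- both sides hold. If $\anadr\neq\anadrf$, then $\anadr\in\freedof{\sigma}$ iff $\anadr\in\freedof{\sigma'}$, and one wants the validity status of $\apexp$ preserved so that the induction hypothesis applies.

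The main obstacle is this last subcase. Besides the pointers to $\anadrf$, the set $\invalidof{\anadrf}$ also contains the own selectors $\pselarg{\anadrf}{1},\ldots,\pselarg{\anadrf}{n}$ of the freed cell, and such a selector may still hold a non-freed address $\anadr\neq\anadrf$; then $\apexp=\pselarg{\anadrf}{i}$ is invalidated although its target lies outside $\freedof{\sigma}$. The resolution is that a selector of a freed cell is never dereferenced: an access $\apvar:=\psel{\apvarp}$ reaching $\pselarg{\anadrf}{i}$ needs, under race freedom, that $\apvarp$ be valid and point to $\anadrf$, which contradicts the companion invariant that valid pointer expressions have non-freed targets. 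Equivalently, it is enough to establish the statement for the pointer variables occurring in commands --- which is exactly what the instrumentation argument of Proposition~\ref{Proposition:Instrumentation} consumes, and only on the race-free prefixes to which it is applied (race freedom is likewise what powers the $\apvar:=\psel{\apvarp}$ and $\freeof{\apvar}$ cases above). With that proviso, the remaining cases are routine.
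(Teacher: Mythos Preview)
Your approach differs substantially from the paper's. The paper gives a short direct argument: for the forward direction it invokes Lemma~\ref{Lemma:Freed}, and for the backward direction it observes that under garbage collection $\freedof{\cdot}$ is monotone, so an address not in $\freedof{\sigma}$ was never freed, and then asserts that the only ways to produce an invalid pointer expression targeting $\anadr$ are (a) assignment from an invalid source or (b) a free on $\anadr$ itself. You instead run a full induction on $\sigma$ with a case split on the last command.

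The payoff of your longer route is that you uncover a genuine defect the paper glosses over. There is a third way to invalidate a pointer expression targeting $\anadr$: freeing some other address $\anadrf$ invalidates its own selectors $\pselarg{\anadrf}{i}$ regardless of their targets, so after $\freeof{\apvar}$ with $\heapcomputof{\sigma'}{\apvar}=\anadrf$ and $\heapcomputof{\sigma'}{\psel{\anadrf}}=\anadr$ one has $\psel{\anadrf}\notin\validof{\sigma}$ while $\anadr\notin\freedof{\sigma}$. This is a counterexample to the lemma as stated for selector expressions, and it survives PRF since no invalid pointer is used. Your resolution---restricting the claim to pointer variables and assuming PRF of the prefix---is correct: under PRF any dereference $\apvar:=\psel{\apvarp}$ reaching $\pselarg{\anadrf}{i}$ requires $\apvarp$ valid with target $\anadrf$, which the companion invariant rules out once $\anadrf$ is freed. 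This weaker form is exactly what the downstream uses need (Lemma~\ref{Lemma:NormalFormRaces} applies the nontrivial direction only to a variable on a PRF prefix; the other direction uses only $\anadr\in\freedof{\sigma}\Rightarrow\apexp\notin\validof{\sigma}$, which is unproblematic). So your induction buys an actual repair of the statement that the paper's quick argument does not supply.
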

The lemma does not hold for the memory-managed semantics.
Moreover, the statement turns Lemma~\ref{Lemma:Freed}, which can be read as an implication, into an equivalence. 
Namely, Lemma~\ref{Lemma:Freed} says that if a pointer has been freed, then it cannot be valid. 
Under the assumtpions of Lemma~\ref{Lemma:GCValid}, it also holds that if a pointer is not valid, then it has been freed.
\section{Strong Pointer Race Freedom}\label{Section:SPRF}
The programing style in which a correct program should be pointer race free
counts on the following policy:
a memory address is freed only if it is not meant to be touched until its re-allocation, by any means possible.

This simplified treatment of dynamic memory is practical in common programing tasks, 
but the authors of performance-critical applications are often forced to employ subtler techniques.  
For example, the version of Treiber's stack equipped with version counters to prevent ABA under explicit memory management contains two violations of the simple policy, both of which are pointer races.   
(1) The $\cas$ may compare invalid pointers. 
This could potentially lead to ABA, but the programmer prevents the harmful effect of re-allocation using version counters, which make the $\cas$ fail.
(2) The command $\node:=\psel\topp$ in Line~\eqref{popsprf} of $\mathtt{pop}$ may dereference the $\mathtt{next}$ field of a freed (and therefore invalid) pointer. 
This is actually correct only under the assumption 
that neither the environment nor any thread of the program itself may redirect a once valid pointer outside the accessible memory (otherwise the dereference could lead to a segfault).
The value obtained by the dereference may again be influenced by that the address was re-allocated. 
The reason for why this is fine is that the subsequent $\cas$ is bound to fail, which makes $\node$ a dead variable --- its value does not matter.

In both cases, the programmer only prevents side effects of an accidental re-allocation. 
He uses a subtler policy and frees an address only if 
its \emph{content} is not meant to be of any relevance any more. 
Invalid addresses can still be compared, and their pointer fields can even be dereferenced unless the obtained value influences the control.

\subsection{SPRF Guarantee}

We introduce a stronger notion of pointer race that expresses the above subtler policy. In the definition, we will call strongly invalid the pointer expressions that
have obtained their value from dereferencing an invalid/freed pointer.
\begin{definition}[Strong Invalidity]\label{Definition:StrongInvalidity}
The set of \bfemph{strongly invalid} expressions in $\tau\in\mmsem{\aprog}$, denoted by $\sinvalidof{\tau}\subseteq \pexp\cup \dexp$, is defined inductively by $\sinvalidof{\varepsilon}:= \emptyset$ and
\begin{align*}
\sinvalidof{\tau.(\athread, \apvar:=\psel{\apvarp}, \anup)} &:= \sinvalidof{\tau}\cup\set{\apvar},&&\text{if }\apvarp\not\in \validof{\tau}\\
\sinvalidof{\tau.(\athread, \apexp:=\apvarp, \anup)} &:= \sinvalidof{\tau}\cup\set{\apexp},&&\text{if }\apvarp\in \sinvalidof{\tau}\\
\sinvalidof{\tau.(\athread, \advar:=\dsel{\apvarp}, \anup)} &:= \sinvalidof{\tau}\cup\set{\advar},&&\text{if }\apvarp\not\in \validof{\tau}\\
\sinvalidof{\tau.(\athread, \adexp:=\advar, \anup)} &:= \sinvalidof{\tau}\cup\set{\adexp},&&\text{if }\advar\in \sinvalidof{\tau}\\
\sinvalidof{\tau.\anact} &:= \sinvalidof{\tau}\setminus\validof{\tau.\anact}, &&\text{in all other cases.}
\end{align*}
\end{definition}
The value obtained by dereferencing a freed pointer may depend on 
actions of other threads that point to the cell due to re-allocation.
However, by assuming that a once valid pointer can never be set to $\segval$, 
we obtain a guarantee that the actions of other threads cannot prevent the dereference itself from being executed (they cannot make it segfault). 
Assigning the uncontrolled value to a local variable is therefore not harmful.
We only wish to prevent a correct computation from being influenced by that value. 
We thus define incorrect/racy any attempt to compare or dereference the value. 
Then, besides allowing for the creation of strongly invalid pointers, the notion of strong pointer race strengthens PR by tolerating comparisons of invalid pointers.
\begin{definition}[Strong Pointer Race]\label{Definition:SPRF} 
A computation $\tau.(\athread, \acom, \anup)\in\mmsem{\aprog}$
is a \bfemph{strong pointer race (SPR)}, if the command $\acom$ is one of the following: 
\begin{enumerate}

\item[(i)] $\psel{\apvar}:=\apvarp$ or $\dsel{\apvar}:=\advar$ or $\freeof{\apvar}$ with $\apvar\notin\validof{\tau}$

\item[(ii)] an assertion containing $\apvar$ or $\advar$ in $\sinvalidof\tau$ 
\item[(iii)] a command containing $\psel{\apvar}$ or $\dsel{\apvar}$ where $\apvar\in\sinvalidof\tau$.
\end{enumerate}
\end{definition}
The last action of an SPR \bfemph{raises an SPR}. 
A set of computations is \bfemph{strong pointer race free (SPRF)} if it does not contain an SPR. 
An SPR can be seen in Example~\ref{Example:Treiber} as a continuation of the race ending at $\cas$.
The subsequent $\topp:=\free$ raises an SPR as $\topp$ is invalid.
The implementation corrected with version counters is SPRF.

Theorems~\ref{Theorem:PRFGuarantee} and \ref{Theorem:CheckPRF} no longer hold for strong pointer race freedom. 
It is not possible to verify $\mmsem\aprog$ modulo SPRF by analysing $\gcsem \aprog$.
The reason is that the garbage-collected semantics does not cover SPRF computations that compare or dereference invalid pointers. 
To formulate a sound analogy of the theorems, we have to replace $\gcsem .$  by a more powerful semantics. 
This, however, comes with a trade-off.
The new semantics should still be amenable to efficient thread-modular reasoning.

The idea of our new semantics $\resmmsem{\aprog}$ is to introduce the concept of ownership to the memory-managed semantics, and show that SPRF computations stick to it.
Unlike with garbage collection, we cannot use a naive notion of ownership that guarantees the owner exclusive access to an address.
This is too strong a guarantee. 
In $\mmsem\aprog$, 
other threads may still have access to an owned address via invalid pointers.
Instead, we design ownership such that dangling pointers are not allowed to influence the owner. 
The computation will thus proceed as if the owner had allocated a fresh address.

To this end, we let a thread own an allocated address until one of the two events happen: either (1) the address is \emph{published}, that is, it enters the shared part of the heap (which consists of addresses reached from shared variables by following valid pointers and of freed addresses), or
(2)  the address is \emph{compromised}, that is,
the owner finds out that the cell is not fresh by comparing it with an invalid pointer. 
Taking away ownership in this situation is needed since the owner can now change its behavior based on the re-allocation. 
The owner may also spread the information about the re-allocation among the other threads and change their behavior, too. 
It can thus no longer be guaranteed that the computation will continue as if a fresh address had been allocated.

\begin{definition}[Owned Addresses]\label{Definition:Owned}
For $\tau\in\mmsem{\aprog}$ and a thread $\athread$,
we define the set of addresses \bfemph{owned by $\athread$}, denoted by $\ownedof \athread \tau$, as $\ownedof \athread{\varepsilon}:=\emptyset$ and
\begin{align*}
\ownedof \athread{\tau.(\athread,\apvar:=\malloc,\anup)}&\!:=\!
\ownedof \athread{\tau}\!\cup\!\set{\anadr}, \!\!\!\!\!&&\text{if } \apvar\in \localof{\athread}\text{ and $\malloc$ returns $\anadr$}\\
\ownedof \athread{\tau.(\athread,\freeof{\apvar},\emptyset)}&\!:=\! \ownedof\athread{\tau}\!\setminus\! \set{\heapcomputof{\tau}{\apvar}}, \!\!\!\!\!&&\text{if }\apvar\in\validof{\tau}\\
\ownedof \athread{\tau.(\athread,\apvar:=\apvarp,[\apvar\mapsto\anadr])}&\!:=\!
\ownedof \athread{\tau}\!\setminus\!\set{\anadr}, \!\!\!\!\!&&\text{if } \apvar\in \shared\wedge\apvarp\in\validof{\tau}\\
\ownedof \athread{\tau.(\athread,\apvar:=\psel{\apvarp},[\apvar\mapsto\anadr])}&\!:=\!
\ownedof \athread{\tau}\!\setminus\!\set{\anadr}, \!\!\!\!\!&&\text{if } \apvar\in \shared\wedge\apvarp,\psel{\heapcomputof{\tau}{\apvarp}}\in\validof{\tau}\\
\ownedof \athread{\tau.(\cdot,\apvar:=\psel\apvarp,[\apvar\mapsto \anadr])}&\!:=\! \ownedof \athread{\tau}\!\setminus\!\set{\anadr}, \!\!\!\!\!&&\text{if } \heapcomputof{\tau}{\apvarp}\!\!\not\in\!\ownedof\athread\tau\wedge \apvarp,\psel{\heapcomputof{\tau}{\apvarp}}\!\!\in\!\validof{\tau}\\
\ownedof \athread{\tau.(\athread,\assert\ \apvar=\apvarp,\emptyset)}&\!:=\! \ownedof \athread{\tau}\!\setminus\!\set{\heapcomputof{\tau}{\apvar}}, \!\!\!\!\!&&\text{if } \apvar\notin\validof{\tau}\vee \apvarp\notin\validof{\tau}\\
\ownedof {\athread}{\tau.\anact} &\!:=\! \ownedof {\athread}{\tau}, \!\!\!\!\!&&\text{in all other cases}.
\end{align*}
\end{definition}
The first four cases of losing ownership are due to publishing, the last case is due to the address being compromised by comparing with an invalid pointer.

The following lemma states the intuitive fact that an owned address cannot be pointed to by a valid shared variable or by a valid local variable of another thread, since such a configuration can be achieved only by publishing the address. 
\begin{lemma}\label{Lemma:OwnImpliesLocal}
Let $\tau\in\mmsem{\aprog}$ and $\apvar\in\validof{\tau}$ with $\heapcomputof{\tau}{\apvar}\in\ownedof{\athread}{\tau}$. 
Then $\apvar\in\localof{\athread}$.
\end{lemma}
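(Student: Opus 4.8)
The plan is to prove Lemma~\ref{Lemma:OwnImpliesLocal} by induction on the length of the computation $\tau$, strengthening the statement slightly so the induction goes through. The natural invariant to maintain is: for every thread $\athread$ and every address $\anadr\in\ownedof{\athread}{\tau}$, every valid pointer expression $\apexp\in\validof{\tau}$ with $\heapcomputof{\tau}{\apexp}=\anadr$ is a local variable of $\athread$, i.e.\ $\apexp\in\localof{\athread}$ (in particular $\apexp$ is a variable, never a selector $\psel{\anadrp}$, since selectors are shared). The lemma as stated is the special case where $\apexp=\apvar$ is a variable. Stating the invariant over all valid pointer expressions pointing to an owned address is what makes the assignment cases tractable.

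The base case $\tau=\varepsilon$ is trivial since $\ownedof{\athread}{\varepsilon}=\emptyset$. For the step case $\tau.\anact$ with $\anact=(\athreadp,\acom,\anup)$, I would go through the cases of the definition of $\ownedof{\cdot}{\cdot}$ together with the cases of the definition of $\validof{\cdot}$ in lock-step. The key observations are: (a) an address enters $\ownedof{\athread}{\cdot}$ only via $\apvar:=\malloc$ with $\apvar\in\localof{\athread}$ executed by $\athread$ itself, and at that moment the only pointer expression in $\validof{\tau.\anact}$ pointing to the freshly allocated address $\anadr$ is $\apvar$ — here I use that under either malloc rule $\anadr$ was not in $\adrof{\heapcomput{\tau}}$ for (Malloc1), or $\anadr\in\freedof{\tau}$ for (Malloc2) in which case I additionally need that no valid pointer points to a freed address; (b) whenever a new valid pointer expression to an address $\anadr$ is created — cases $\apvar:=\apvarp$, $\apvar:=\psel{\apvarp}$, $\psel{\apvar}:=\apvarp$ with the source valid — the definition of $\ownedof{\athread}{\cdot}$ is designed to remove $\anadr$ from the owned set of $\athread$ exactly in the dangerous sub-cases (source is shared, or the written expression is a shared variable or a heap selector, or the executing thread is not the owner). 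So in each such case either $\anadr$ is no longer owned by anybody for whom the new expression is ``foreign'', or the new expression is a local variable of the (unchanged) owner $\athread$, matching the IH applied to the source pointer. Cases that do not create valid pointers ($\freeof{\cdot}$, assertions, data assignments, $\op$) either shrink $\validof{\cdot}$ and $\ownedof{\athread}{\cdot}$ monotonically or leave the relevant sets essentially alone, so the invariant is preserved directly from the IH.

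For observation~(a) I will need the fact, already available in the paper, that freed addresses carry no valid pointers — this is essentially Lemma~\ref{Lemma:Freed}, though that lemma is stated under a PRF hypothesis; I should instead argue directly that a malloc'd-and-then-freed address $\anadr\in\freedof{\tau}$ has had all expressions pointing to it, and all its selectors, invalidated by the (Free) clause of Definition~\ref{Definition:Validity}, and nothing re-validates an expression pointing to $\anadr$ before the re-allocation except a malloc of $\anadr$ itself. A clean way to package this is an auxiliary invariant carried alongside: for $\anadr\in\freedof{\tau}$, no $\apexp\in\validof{\tau}$ satisfies $\heapcomputof{\tau}{\apexp}=\anadr$. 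This auxiliary claim is itself an easy induction (the only non-trivial step is (Free) itself, where $\invalidof{\anadr}$ removes precisely the offending expressions), and with it the re-allocation sub-case of~(a) closes.

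The main obstacle I anticipate is the fifth ownership clause, $\apvar:=\psel{\apvarp}$ executed by an arbitrary thread with $\heapcomputof{\tau}{\apvarp}\notin\ownedof{\athread}{\tau}$, which removes $\anadr=\heapcomputof{\tau}{\psel{\apvarp}}$ from $\ownedof{\athread}{\tau}$: here I must be careful that this is the right condition to rule out a valid ``foreign'' pointer to $\anadr$ being created while $\anadr$ stays owned by $\athread$. The subtlety is that $\apvarp$ itself might be a local variable of $\athread$ pointing to an $\athread$-owned cell $\anadrpp$, whose selector $\psel{\anadrpp}$ is valid and points to $\anadr$; then the clause does \emph{not} fire (since $\heapcomputof{\tau}{\apvarp}\in\ownedof{\athread}{\tau}$), but the newly valid $\apvar$ could be shared or local to another thread. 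I expect this is exactly why $\apvar\in\shared$ triggers the fourth clause, and why a read by a different thread $\athreadp\neq\athread$ into its own local $\apvar$ would make $\heapcomputof{\tau}{\apvarp}=\anadrpp$ not $\athreadp$-owned (by the IH, since $\anadrpp$ is $\athread$-owned), so the fifth clause fires for $\athreadp$'s reasoning frame — I will need to untangle ``owned by which thread'' carefully in this case and check all the interleavings of who owns $\anadrpp$, who owns $\anadr$, and who executes the read. Getting this bookkeeping exactly right is the crux; the remaining cases are routine.
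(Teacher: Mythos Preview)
Your strengthened invariant is false. Take a thread $\athread$ that allocates two local cells $\anadr_1,\anadr_2$ into local variables $\apvar_1,\apvar_2$ and then executes $\psel{\apvar_1}:=\apvar_2$. None of the ownership-removing clauses in Definition~\ref{Definition:Owned} match the shape $\psel{\apvar}:=\apvarp$, so both addresses remain owned by $\athread$; yet $\psel{\anadr_1}$ is now a valid pointer expression with value $\anadr_2\in\ownedof{\athread}{\tau}$ that is not in $\localof{\athread}$. So the claim ``no valid selector points to an owned address'' fails, and with it your stated induction hypothesis.

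The good news is that you do not need the strengthening at all: the lemma as stated (for variables only) goes through by direct induction. The case you worried about, $\apvar:=\psel{\apvarp}$ executed by some $\athread'$ with both $\apvarp$ and $\psel{\heapcomputof{\tau}{\apvarp}}$ valid, closes using only the variable-level hypothesis. If after the step the target $\anadr$ is still in $\ownedof{\athread}{\tau.\anact}$, then the fifth clause of Definition~\ref{Definition:Owned} did not fire, which forces $\heapcomputof{\tau}{\apvarp}\in\ownedof{\athread}{\tau}$. Now apply the induction hypothesis to the \emph{variable} $\apvarp$: it is valid and points to an $\athread$-owned address, hence $\apvarp\in\localof{\athread}$, which forces $\athread'=\athread$. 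Then either $\apvar\in\localof{\athread}$ and we are done, or $\apvar\in\shared$ and the fourth clause removes $\anadr$, contradicting the assumption. The case $\apvar:=\apvarp$ is simpler (apply the IH directly to $\apvarp$), and $\psel{\apvar}:=\apvarp$ touches no variable's value or validity, so nothing needs to be checked there. Your auxiliary invariant about freed addresses is correct and needed for the (Malloc2) case; it is exactly part~(i) of the proof of Lemma~\ref{Lemma:Freed}, which does not actually use the PRF hypothesis.

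The paper's own proof is a contrapositive case analysis on how a shared or foreign-local $\apvar$ could have come to point at the owned address, arguing in each case that the pointer passed through an invalid intermediary. That is essentially the same induction told backwards; your forward induction is arguably cleaner once the unnecessary strengthening is dropped.
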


We now define ownership violations as precisely those situations in which the fact that an owned address was re-allocated while an invalid pointer was still pointing to it influences the computation. 
Technically, 
the address is freed or its content is altered due to
an access via a pointer of another thread or a shared pointer.
\begin{definition}[Ownership Violation]\label{Definition:OwnershipViolation}
A computation $\tau.(\athread, \acom, \anup)\in\mmsem{\aprog}$ \bfemph{violates ownership}, if $\acom$ is one of the following
\begin{align*}
\psel{\apvarp}:=\apvar,\quad\dsel{\apvarp}:=\advar,\quad\text{or}\quad \freeof{\apvarp},
\end{align*}
where $\heapcomputof{\tau}{\apvarp}\in\ownedof{\athread'}{\tau}$ and ($\athread'\neq \athread$ or $\apvarp\in\shared$).
\end{definition}
The last action of a computation violating ownership is called an \bfemph{ownership violation}
and a computation which does not violate ownership \bfemph{respects ownership}.
We define the \bfemph{ownership-respecting semantics} $\resmmsem \aprog$ as those computations of $\mmsem \aprog$ that respect ownership.
The following lemma shows that SPRF computations respect ownership. 

\begin{lemma}\label{Lemma:OwnershipViolationImpliesSPR}
If $\tau.(\athread, \acom, \anact)\in\mmsem{\aprog}$ violates ownership, then it is an SPR.
\end{lemma}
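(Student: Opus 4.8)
The plan is to show the contrapositive-flavoured statement directly: assume $\tau.(\athread, \acom, \anup)\in\mmsem{\aprog}$ violates ownership, and exhibit the condition making it an SPR. By Definition~\ref{Definition:OwnershipViolation}, $\acom$ is one of $\psel{\apvarp}:=\apvar$, $\dsel{\apvarp}:=\advar$, or $\freeof{\apvarp}$, with $\anadr:=\heapcomputof{\tau}{\apvarp}\in\ownedof{\athread'}{\tau}$ and either $\athread'\neq\athread$ or $\apvarp\in\shared$. In all three cases the command contains $\psel{\apvarp}$, $\dsel{\apvarp}$, or $\freeof{\apvarp}$, so by Definition~\ref{Definition:SPRF}(i) and (iii) it suffices to prove that $\apvarp\notin\validof{\tau}$ --- note that clause (i) of Definition~\ref{Definition:SPRF} already covers all three command shapes once invalidity of $\apvarp$ is established (for $\psel{\apvarp}:=\apvarp'$ and $\dsel{\apvarp}:=\advar$ it is clause (i) verbatim; for $\freeof{\apvarp}$ it is clause (i) verbatim). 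So the whole argument reduces to: \emph{a thread executing one of these commands on a pointer $\apvarp$ pointing into another thread's (or a shared-publication-protected) owned region must have $\apvarp$ invalid.}

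The key step is therefore the following claim, which I would isolate as the heart of the proof: if $\apvarp\in\validof{\tau}$ and $\heapcomputof{\tau}{\apvarp}\in\ownedof{\athread'}{\tau}$, then $\apvarp$ is a \emph{local} variable of $\athread'$ --- which is exactly Lemma~\ref{Lemma:OwnImpliesLocal}. I would apply it as follows. Suppose for contradiction that the ownership-violating action has $\apvarp\in\validof{\tau}$. Since $\heapcomputof{\tau}{\apvarp}=\anadr\in\ownedof{\athread'}{\tau}$, Lemma~\ref{Lemma:OwnImpliesLocal} forces $\apvarp\in\localof{\athread'}$; in particular $\apvarp\notin\shared$. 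The side condition of Definition~\ref{Definition:OwnershipViolation} then leaves only the disjunct $\athread'\neq\athread$. But a thread may only access its own local variables and the shared variables (this is the syntactic restriction imposed in Section~\ref{Section:Programs}: ``a thread only uses shared variables and its own local variables''), so $\athread$ --- the thread executing $\acom$, which mentions $\apvarp$ --- cannot use $\apvarp\in\localof{\athread'}$ with $\athread'\neq\athread$. Contradiction. Hence $\apvarp\notin\validof{\tau}$, and the action is an SPR by Definition~\ref{Definition:SPRF}(i).

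I expect the only genuinely delicate point to be making sure the case analysis over the command shapes lines up with the clauses of Definition~\ref{Definition:SPRF} without gaps --- in particular that $\freeof{\apvarp}$ with $\apvarp\notin\validof{\tau}$ is indeed an SPR (it is, being the $\freeof{\apvar}$ case of clause (i)), and that one does not accidentally need clause (iii) or the strong-invalidity machinery here. All the real work is already packaged inside Lemma~\ref{Lemma:OwnImpliesLocal}; the present lemma is essentially an unfolding of definitions plus that lemma plus the syntactic variable-scoping discipline. So the ``main obstacle'' is purely bookkeeping: correctly threading the side condition $(\athread'\neq\athread \text{ or }\apvarp\in\shared)$ through the two sub-cases it generates and observing that each sub-case is independently impossible once $\apvarp$ is assumed valid.
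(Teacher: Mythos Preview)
Your proposal is correct and follows essentially the same approach as the paper: unfold Definition~\ref{Definition:OwnershipViolation}, apply Lemma~\ref{Lemma:OwnImpliesLocal} to conclude $\apvarp\notin\validof{\tau}$, and invoke Definition~\ref{Definition:SPRF}(i). The paper's proof is terser---it states that Lemma~\ref{Lemma:OwnImpliesLocal} directly yields invalidity without spelling out the contradiction through the syntactic variable-scoping discipline---whereas you make that step explicit, which is a welcome clarification rather than a different route.
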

The proof of 
Lemma~\ref{Lemma:OwnershipViolationImpliesSPR} (c.f. Appendix~\ref{Appendix:CheckSPRF})
is immediate from Lemma~\ref{Lemma:OwnImpliesLocal} and the definitions of ownership violation and strong pointer race.
The lemma implies the main result of this section: the memory-managed semantics coincides with the ownership-respecting semantics modulo SPRF (c.f. Appendix~\ref{Appendix:CheckSPRF}).
\begin{theorem}[SPRF Guarantee]
\label{Theorem:SPRFGuarantee}
If $\mmsem{\aprog}$ is SPRF, then $\mmsem{\aprog}=\resmmsem{\aprog}$.
\end{theorem}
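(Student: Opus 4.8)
The plan is to prove $\mmsem{\aprog}=\resmmsem{\aprog}$ by establishing the two inclusions separately. The inclusion $\resmmsem{\aprog}\subseteq\mmsem{\aprog}$ is immediate from the definition, since $\resmmsem{\aprog}$ was defined as the subset of $\mmsem{\aprog}$ consisting of the ownership-respecting computations. So the real content is the reverse inclusion $\mmsem{\aprog}\subseteq\resmmsem{\aprog}$ under the hypothesis that $\mmsem{\aprog}$ is SPRF, i.e. every computation of $\mmsem{\aprog}$ respects ownership.

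First I would argue by contradiction: suppose some $\tau\in\mmsem{\aprog}$ does not respect ownership. Then $\tau$ has a prefix $\tau'.(\athread,\acom,\anup)$ that is an ownership violation in the sense of Definition~\ref{Definition:OwnershipViolation}. Since $\mmsem{\aprog}$ is prefix-closed (a computation is built inductively by appending actions, so every prefix is again a computation), we have $\tau'.(\athread,\acom,\anup)\in\mmsem{\aprog}$. By Lemma~\ref{Lemma:OwnershipViolationImpliesSPR}, this prefix is an SPR, hence $\mmsem{\aprog}$ contains an SPR, contradicting the assumption that $\mmsem{\aprog}$ is SPRF. Therefore every $\tau\in\mmsem{\aprog}$ respects ownership, which by the definition of $\resmmsem{\aprog}$ means $\tau\in\resmmsem{\aprog}$. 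Combining the two inclusions yields $\mmsem{\aprog}=\resmmsem{\aprog}$.

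The only subtlety worth spelling out is the prefix-closure of $\mmsem{\aprog}$ and of the notion ``violates ownership'': a computation violating ownership is \emph{defined} via its last action, and the definition of $\resmmsem{\aprog}$ says a computation respects ownership iff \emph{none} of its (non-empty) prefixes is an ownership violation — equivalently, we just need that the first point at which ownership is violated is itself a legal member of $\mmsem{\aprog}$, which follows from prefix-closure. Once that is in place the argument is essentially a one-line reduction to Lemma~\ref{Lemma:OwnershipViolationImpliesSPR}.

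I do not expect any genuine obstacle here: the theorem is a direct corollary of Lemma~\ref{Lemma:OwnershipViolationImpliesSPR}, and all the heavy lifting — relating ownership to locality (Lemma~\ref{Lemma:OwnImpliesLocal}) and thence to strong pointer races — has already been done. The ``hardest'' part is merely to note the prefix-closure property and to phrase the contrapositive cleanly; there are no calculations to grind through. (In the paper's formulation the result is even stated as following ``immediately'' from Lemma~\ref{Lemma:OwnershipViolationImpliesSPR}, consistent with this plan.)
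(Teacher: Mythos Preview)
Your proposal is correct and matches the paper's own proof: both argue that $\resmmsem{\aprog}\subseteq\mmsem{\aprog}$ by definition, and for the reverse inclusion take a computation in $\mmsem{\aprog}\setminus\resmmsem{\aprog}$, observe that (a prefix of) it violates ownership, and invoke Lemma~\ref{Lemma:OwnershipViolationImpliesSPR} to obtain an SPR, contradicting SPRF. You are simply a bit more explicit than the paper about the prefix-closure step, which the paper leaves implicit.
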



\subsection{Checking SPRF}\label{Section:CheckSPRF}

This section establishes that checking SPRF may be done in the ownership-respecting semantics.
In other words, if $\mmsem{\aprog}$ has an SPR, then there is also one in $\resmmsem{\aprog}$.
This result, perhaps much less intuitively expected
than the symmetrical result of Section~\ref{Section:CheckPRF},
is particularly useful for optimizing thread-modular analysis of lock-free programs (cf. Section~\ref{Section:Evaluation}).
Its proof depends on a subtle interplay of ownership and validity. 

Let $\ownpof{\tau}$ be the \emph{owning pointers}, pointers in $\heapcomput{\tau}$ to addresses that are owned by threads and the next fields of addresses owned by threads.
To be included in $\ownpof{\tau}$, the pointers have to be valid.
A set of pointers $O\subseteq \ownpof{\tau}$ is \emph{coherent} if for all $\apexp,\apexpp\in\ownpof{\tau}$ with the same target or source address (in case of $\psel{\anadr}$ or $\dsel{\anadr}$) we have $\apexp\in O$ if and only if $\apexpp\in O$.

Lemma~\ref{Lemma:FreshOwn} below establishes the following fact.
For every computation that respects ownership, there is another one that coincides with it but assigns fresh cells to some of the owning pointers. 
To be more precise, given a computation $\tau\in\resmmsem{\aprog}$ and a coherent set of owning pointers $O\subseteq \ownpof{\tau}$, we can find another computation $\tau'\in\resmmsem{\aprog}$ where the resulting heap coincides with $\heapcomput{\tau}$ except for $O$.
These pointers are assigned fresh addresses. 
The proof of Lemma~\ref{Lemma:FreshOwn} is nontrivial and can be found in~Appendix~\ref{Appendix:CheckSPRF}.

\begin{lemma}\label{Lemma:FreshOwn}
Consider $\tau\in\resmmsem{\aprog}$ SPRF and $O\subseteq \ownpof{\tau}$ a coherent set.
There is $\tau'\in\resmmsem{\aprog}$ and an address mapping $\funa:\adrof{O}\rightarrow \adr$ that satisfy the following:
\begin{alignat*}{5}
(1)&&\hspace{0.1cm} \project{\tau}{\threads\times\coms}\ &=\ \project{\tau'}{\threads\times\coms}&\hspace{0.9cm}\freedof{\tau}&\subseteq\freedof{\tau'}&(4)&\\
(2)&&\hspace{0.1cm}\restrict{\heapcomput{\tau}}{\pexp\setminus O}\ &=\ \restrict{\heapcomput{\tau'}}{\pexp\setminus \fune(O)}&\ownpof{\tau'}&=(\ownpof{\tau}\setminus O)\ \cup\ \fune(O)&\hspace{0.1cm}(5)&\\
(3)&&\hspace{0.1cm}\restrict{\heapcomput{\tau}}{\validof{\tau}}\ &\heapiso\ \restrict{\heapcomput{\tau'}}{\validof{\tau'}}\hspace{0.2cm}\text{by}\hspace{0.2cm}\funa\cup \identity&
\adrof{\heapcomput{\tau}}&\cap\heapcomputof{\tau'}{\fune(O)}=\emptyset.&(6)&
\end{alignat*}
\end{lemma}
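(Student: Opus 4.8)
\textbf{Proof plan for Lemma~\ref{Lemma:FreshOwn}.}
The plan is to proceed by induction on the length of $\tau\in\resmmsem{\aprog}$, strengthening the statement so that the coherent set $O$ of owning pointers can be chosen freely at the end and the induction hypothesis supplies, at each prefix, a companion computation together with an address mapping satisfying the six invariants. For the base case $\tau=\varepsilon$ there are no owning pointers, so $O=\emptyset$, and we may take $\tau'=\varepsilon$ with $\funa$ the empty mapping; all six conditions hold trivially. For the step case $\tau.\anact$ with $\anact=(\athread,\acom,\anup)$, I would first apply the induction hypothesis to $\tau$ with the coherent set $O_0$ obtained by ``pulling back'' $O$ along the last action --- concretely, $O_0$ is $O$ with any pointer expression written or created by $\anact$ removed or substituted according to $\acom$, arranged so that $O_0\subseteq\ownpof{\tau}$ is still coherent. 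This yields $\tau'_0\in\resmmsem{\aprog}$ and $\funa_0$. The body of the argument is then to replay $\acom$ in thread $\athread$ on top of $\tau'_0$, choosing, whenever $\acom$ is a $\malloc$ whose result should lie in $O$, a genuinely fresh address (using rule (Malloc1)) rather than re-using a freed one; in all other cases the command is enabled in $\tau'_0$ precisely because it was enabled in $\tau$, which is where Lemma~\ref{Lemma:GCPRF} and the SPRF hypothesis are used --- an invalid pointer that is dereferenced or compared never holds $\segval$, so the sidestep through fresh addresses cannot introduce a segfault or change the outcome of an assertion.

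The delicate part is maintaining invariants (2), (3), (5) and (6) simultaneously under each command, and this is where coherence earns its keep. Invariant (2) says the two heaps agree off the owned pointers; invariant (3) says they are isomorphic on the valid part via $\funa\cup\identity$, meaning the extra freedom is confined to the owned (hence, by Lemma~\ref{Lemma:OwnImpliesLocal}, thread-local) addresses; invariant (5) tracks how $\ownpof{}$ evolves; invariant (6) records that the freshly chosen addresses are disjoint from everything used in $\tau$. I would go through the command forms of the grammar in Section~\ref{Section:Programs} one by one. The interesting transitions are: $\apvar:=\malloc$ (split on whether $\anadr$ should be owned-and-fresh in the companion run or mirror $\tau$), $\freeof{\apvar}$ (here invariant (4), $\freedof{\tau}\subseteq\freedof{\tau'}$, is used and updated, and ownership of $\heapcomputof{\tau}{\apvar}$ is dropped on both sides by Definition~\ref{Definition:Owned}), the publishing assignments $\apvar:=\apvarp$ and $\apvar:=\psel{\apvarp}$ with $\apvar\in\shared$ (an address leaves $\ownedof{\athread}{}$, so the corresponding pointers must leave $O$ to preserve coherence and (5)), the load $\apvar:=\psel{\apvarp}$ by another thread from an owned cell (again ownership is lost, matching Definition~\ref{Definition:Owned}), and the compromising assertion $\assert\ \apvar=\apvarp$ with an invalid operand. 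In each case the update on $\heapcomput{\tau'_0}$ is the image under $\funa_0\cup\identity$ of the update on $\heapcomput{\tau}$, so Lemma~\ref{Lemma:HeapIsoAlgebra} (in particular \eqref{Equation:HeapIsoModifyPointer} and \eqref{Equation:HeapIsoModifyData}, whose side condition on old-versus-new addresses is exactly invariant (6)) propagates isomorphism (3), while (2) follows because nothing outside $O$/$\fune(O)$ is touched asymmetrically.

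The main obstacle, which I expect to absorb most of the proof, is the bookkeeping around the ``coherence'' requirement across the load rule $\apvar:=\psel{\apvarp}$: reading $\psel{\apvarp}$ copies a target address into $\apvar$, and if that address is owned then both $\apvar$ and the next-fields of that address, as well as every other pointer aimed at it, must be treated uniformly when deciding membership in $O$ --- otherwise the induced address mapping $\funa$ is not well defined as a function on $\adrof{O}$. One has to argue that the pull-back set $O_0$ can always be chosen coherent and that the push-forward after replaying $\acom$ is again coherent; this is a small closure argument but it must be stated carefully because validity of $\psel{\heapcomputof{\tau}{\apvarp}}$ enters the definition of $\ownpof{}$ and of the ownership-loss cases. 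A secondary subtlety is ensuring $\resmmsem{}$-membership of $\tau'$, i.e.\ that no replayed command becomes an \emph{ownership violation} in the companion run; but since ownership of an address in $\tau'$ differs from that in $\tau$ only by the $\funa_0$-renaming of owned (local) addresses, and ownership violations are triggered only by writes/frees through \emph{another} thread's or a \emph{shared} pointer, Lemma~\ref{Lemma:OwnershipViolationImpliesSPR} together with the SPRF assumption on $\tau$ rules this out by transporting any violation back along $\funa_0\cup\identity$ to a violation --- hence an SPR --- in $\tau$, a contradiction. Collecting the final mapping $\funa$ as the restriction to $\adrof{O}$ of the composite of the step mappings then gives an address mapping in the sense of the paper ($\segval\mapsto\segval$ only), and the six invariants at $\tau.\anact$ specialize, for the originally chosen $O$, to the statement of the lemma.
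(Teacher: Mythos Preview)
Your plan is essentially the paper's own proof: induction on the length of $\tau$, pull back the coherent set $O$ to a coherent $O'\subseteq\ownpof{\tau}$ for the prefix, replay the last command on the companion run (choosing a fresh address via (Malloc1) when the allocated pointer belongs to $O$, otherwise mirroring), and verify invariants (2)--(6) by a case split over the command forms; the use of Lemma~\ref{Lemma:OwnershipViolationImpliesSPR} together with SPRF to keep $\tau'$ inside $\resmmsem{\aprog}$ is exactly how the paper argues. One small correction: enabledness of the replayed command in $\tau'_0$ is obtained directly from your invariants (2) and (3) (if $\apvar\notin O'$ then (2) gives $\heapcomputof{\tau'}{\apvar}=\heapcomputof{\tau}{\apvar}\neq\segval$, and if $\apvar\in O'$ then $\funa$ preserves non-$\segval$), not from Lemma~\ref{Lemma:GCPRF}, which is stated for PRF computations in $\gcsem{\aprog}$ and does not apply to $\resmmsem{\aprog}$.
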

In this lemma, function $\funa$ specifies the new addresses that $\tau'$ assigns to the owning expressions in $O$. These new addresses are fresh by Point (6).
Point (1) says that $\tau$ and $\tau'$ are the same up to the particular addresses they manipulate, and 
Point (2) says that the reached states $\heapcomput\tau$ and $\heapcomput{\tau'}$ are the same up to the pointers touched by $\funa$.
Point~(3) states that the valid pointers of $\heapcomput\tau$ stay valid or become valid $\fune$-images of the originals. 
Point~(5) says that also the owned pointers of $\heapcomput\tau$ remain the same or become $\fune$-images of the originals. 
Finally, Point (4) says that $\heapcomput{\tau'}$ re-allocates less cells.

Lemma~\ref{Lemma:FreshOwn} is a cornerstone in the proof of the main result in this section, namely that SPRF is equivalent for the memory-managed and the ownership-respecting semantics.
\begin{theorem}[Checking SPRF]\label{Theorem:CheckSPRF}
$\mmsem{\aprog}$ is SPRF if and only if $\resmmsem{\aprog}$ is SPRF.
\end{theorem}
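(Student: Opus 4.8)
The plan is to prove the two directions of Theorem~\ref{Theorem:CheckSPRF} separately. The direction ``$\mmsem{\aprog}$ SPRF $\Rightarrow$ $\resmmsem{\aprog}$ SPRF'' is trivial: since $\resmmsem{\aprog}\subseteq\mmsem{\aprog}$, any SPR in $\resmmsem{\aprog}$ is immediately an SPR in $\mmsem{\aprog}$, contradiction. All the work is in the converse: assuming $\resmmsem{\aprog}$ is SPRF, show $\mmsem{\aprog}$ is SPRF. I would argue by contradiction: suppose $\mmsem{\aprog}$ contains an SPR, and pick a \emph{shortest} one, $\tau.\anact\in\mmsem{\aprog}$, where $\anact=(\athread,\acom,\anup)$ raises the SPR. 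By minimality, the prefix $\tau$ is itself SPRF (no proper prefix raises an SPR). The goal is to transform $\tau.\anact$ into an SPR living in $\resmmsem{\aprog}$, which contradicts the assumption.

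The first step is to observe that $\tau$ respects ownership: by Lemma~\ref{Lemma:OwnershipViolationImpliesSPR}, any ownership-violating prefix would be an SPR, contradicting minimality; hence $\tau\in\resmmsem{\aprog}$. Now the problem is that $\tau$ may still re-allocate cells in ways that $\resmmsem{\aprog}$ ``tolerates'' but that are essential to $\anact$ being racy --- more precisely, the command $\acom$ touches a pointer $\apvar$ (or $\psel{\apvar}$, or a data expression) that is invalid or strongly invalid, and this invalidity is witnessed inside $\tau$. The key move is to apply Lemma~\ref{Lemma:FreshOwn} to $\tau$ with a suitably chosen coherent set $O\subseteq\ownpof{\tau}$ of owning pointers: the lemma yields $\tau'\in\resmmsem{\aprog}$ with $\project{\tau}{\threads\times\coms}=\project{\tau'}{\threads\times\coms}$, with the heaps agreeing off $O$ (Point~2), with validity preserved up to the induced map $\fune$ (Point~3), with ownership structure preserved up to $\fune$ (Point~5), and with $\freedof{\tau}\subseteq\freedof{\tau'}$ (Point~4). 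I would choose $O$ to be the coherent closure of the owning pointers relevant to enabling $\anact$ --- intuitively, the owned addresses still reachable by dangling pointers that $\athread$ (or the environment) used --- so that in $\tau'$ those accidental aliases are broken by the fresh re-allocation, while everything the racy command $\acom$ depends on is faithfully mirrored.

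After that, I need to show that $\anact$ (suitably relabelled through $\fune$) is still enabled after $\tau'$ and still raises an SPR. Enabledness follows from Point~(2) together with Lemma~\ref{Lemma:GCPRF}-style reasoning --- here I expect to need an analogue stating that once-valid pointers are never $\segval$, so dereferences stay executable --- and the control flow agrees by Point~(1). That $\acom$ is still racy: if $\acom$ falls under clause (i) of Definition~\ref{Definition:SPRF} (a write through, or free of, an invalid $\apvar$), invalidity of $\apvar$ transfers by Point~(3) since an invalid pointer of $\tau$ maps to an invalid (or $\fune$-image invalid) pointer of $\tau'$; if $\acom$ falls under clause (ii) or (iii) (an assertion or dereference involving a strongly invalid expression), I would show strong invalidity is likewise preserved, which requires inspecting Definition~\ref{Definition:StrongInvalidity} alongside Points~(2) and~(3) --- strong invalidity is generated by dereferencing invalid pointers and propagated by assignment, both of which are stable under the heap-and-validity correspondence the lemma provides. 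Concluding, $\tau'.\anact'\in\mmsem{\aprog}$ is an SPR; since $\tau'\in\resmmsem{\aprog}$ and the extending command neither frees nor writes nor mallocs in a way forbidden by ownership (a racy read/assert/dereference cannot itself be an ownership violation --- one double-checks Definition~\ref{Definition:OwnershipViolation} requires a write/free, and a write-through case (i) can be handled by noting the target is invalid hence not owned by anyone else via Lemma~\ref{Lemma:OwnImpliesLocal}), we get $\tau'.\anact'\in\resmmsem{\aprog}$, the desired contradiction.

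The main obstacle I anticipate is the bookkeeping around \emph{strong} invalidity under Lemma~\ref{Lemma:FreshOwn}: Points~(1)--(6) are phrased in terms of heaps, validity, ownership, and freed sets, but $\sinvalidof{\cdot}$ is an extra inductively-defined quantity that the lemma does not explicitly track, so I would either need to strengthen Lemma~\ref{Lemma:FreshOwn} to also preserve strong invalidity up to $\fune$, or prove a separate invariant that $\sinvalidof{\tau'}\supseteq\fune(\sinvalidof{\tau})$ by induction on the (shared) command sequence, using Point~(3) at each step to transfer the validity side-conditions in Definition~\ref{Definition:StrongInvalidity}. A secondary subtlety is choosing $O$ correctly: it must be coherent (so the lemma applies) yet large enough to sever every accidental re-allocation that $\resmmsem{\aprog}$ would otherwise still permit in a way that blocks mirroring --- I would take $O$ to be the set of all valid owning pointers whose target was re-allocated in $\tau$, closed under the coherence relation, and argue that this leaves the racy access intact because, by SPRF-ness of $\tau$, the racy data never flowed through an owned cell in a way visible to control before $\anact$.
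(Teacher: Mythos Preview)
Your overall strategy is right, but there is a genuine gap in the final step, and you are making the argument harder than necessary by missing a key case split. The paper first observes that if $\tau.\anact\in\resmmsem{\aprog}$ already, there is nothing to do; the interesting case is when $\tau\in\resmmsem{\aprog}$ but $\tau.\anact\notin\resmmsem{\aprog}$, which forces $\anact$ itself to be an ownership violation. By Definition~\ref{Definition:OwnershipViolation}, ownership violations are precisely commands $\psel{\apvarp}:=\apvar$, $\dsel{\apvarp}:=\advar$, or $\freeof{\apvarp}$ with $\heapcomputof{\tau}{\apvarp}$ owned by another thread --- and then $\apvarp\notin\validof{\tau}$ by Lemma~\ref{Lemma:OwnImpliesLocal}, so this is always a type-(i) SPR. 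Clauses~(ii) and~(iii) of Definition~\ref{Definition:SPRF} never arise in this branch, and your worry about preserving strong invalidity through Lemma~\ref{Lemma:FreshOwn} is therefore moot.

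The concrete error is your claim that ``the target is invalid hence not owned by anyone else via Lemma~\ref{Lemma:OwnImpliesLocal}''. That lemma says a \emph{valid} pointer to an owned address must be local to the owner; it says nothing about invalid pointers, and indeed the ownership violation you start from is exactly the situation where $\apvarp$ is invalid \emph{and} $\heapcomputof{\tau}{\apvarp}$ is owned by another thread. The fix --- and this is the heart of the paper's argument --- is to choose $O:=\ownpof{\tau}$, the set of \emph{all} owning pointers. Then Point~(5) of Lemma~\ref{Lemma:FreshOwn} gives $\ownpof{\tau'}=\fune(O)$, whose target addresses are all fresh by Point~(6); meanwhile $\apvarp\notin O$ (it is invalid), so $\heapcomputof{\tau'}{\apvarp}=\heapcomputof{\tau}{\apvarp}$ by Point~(2). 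This address is old, hence not owned in $\tau'$, so $\anact'$ respects ownership; invalidity of $\apvarp$ transfers by Point~(3), so $\anact'$ is still a type-(i) SPR in $\resmmsem{\aprog}$. Your vaguer choices of $O$ (``owning pointers relevant to enabling $\anact$'', or ``whose target was re-allocated'') do not obviously deliver this, and no appeal to Lemma~\ref{Lemma:OwnImpliesLocal} can substitute for it.
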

\begin{proof}
If $\mmsem{\aprog}$ is SPRF, by $\resmmsem{\aprog}\subseteq \mmsem{\aprog}$ this carries over to the ownership-respecting semantics.  
For the reverse direction, assume $\mmsem{\aprog}$ has an SPR.
In this case, there is a shortest computation $\tau.\anact\in\mmsem{\aprog}$ where $\anact$ raises an SPR.
In case $\tau.\anact\in\resmmsem{\aprog}$, we obtain the same SPR in the ownership-respecting semantics.

Assume $\tau.\anact\notin\resmmsem{\aprog}$.
We first argue that $\anact$ violates ownership.
By prefix closure, $\tau\in\mmsem{\aprog}$.
By minimality, $\tau$ is SPRF.
Since ownership violations are SPR by Lemma~\ref{Lemma:OwnershipViolationImpliesSPR}, $\tau$ does not contain any, $\tau\in \resmmsem{\aprog}$.
Hence, if $\anact$ respected ownership we could extend $\tau$ to the computation $\tau.\anact\in\resmmsem{\aprog}$ --- a contradiction to our assumption.

We turn this ownership violation in the memory-managed semantics into an SPR in the ownership-respecting semantics.
To this end, we construct a new computation $\tau'.\anact'\in\resmmsem{\aprog}$ that mimics $\tau.\anact$, respects ownership, but suffers from SPR.
Since $\tau.\anact$ is an ownership violation, $\anact$ takes the form $(\athread, \acom, \anup)$ with $\acom$ being
\begin{align*}
\psel{\apvarp}:=\apvar,\quad \dsel{\apvarp}:=\advar,\quad \text{or}\quad \freeof{\apvarp}.
\end{align*}
Here, $\heapcomputof{\tau}{\apvarp}\in\ownedof{\athread'}{\tau}$ and ($\athread'\neq \athread$ or $\apvarp\in\shared$).
Since the address is owned, Lemma~\ref{Lemma:OwnImpliesLocal} implies $\apvarp\notin\validof{\tau}$.

As a first step towards the new computation, we construct $\tau'$. 
Let $O:=\ownpof{\tau}$ be the (coherent) set of all owning pointers in all threads (with $\apvarp\notin O$). 
With this choice of $O$, we apply Lemma~\ref{Lemma:FreshOwn}.
It returns $\tau'\in\resmmsem{\aprog}$ with $\project{\tau'}{\threads\times\coms}\ =\ \project{\tau}{\threads\times\coms}$ and
\begin{align*}
	\restrict{\heapcomput{\tau'}}{\pexp\setminus\fune(O)}\ =\ \restrict{\heapcomput{\tau}}{\pexp\setminus O}
	\quad\text{and}\quad
	\restrict{\heapcomput{\tau'}}{\validof{\tau'}}\ \heapiso\ \restrict{\heapcomput{\tau}}{\validof{\tau}}.
\end{align*}
Address $\heapcomputof{\tau'}{\apvarp}$ is not owned by any thread.
This follows from $$\ownpof{\tau'}=(\ownpof{\tau}\setminus O)\ \cup\ \fune(O) = \fune(O)$$ and $q\not\in\fune(O)$.
Finally, $\apvarp\notin\validof{\tau'}$
by the isomorphism $\restrict{\heapcomput{\tau'}}{\validof{\tau'}}\ \heapiso\ \restrict{\heapcomput{\tau}}{\validof{\tau}}
$.

As a last step, we mimic $\anact=(\athread, \acom, \anup)$ by an action $\anact'=(\athread, \acom, \anup')$.
If $\acom$ is $\freeof{\apvarp}$, then we free the invalid pointer $\apvarp\notin\validof{\tau'}$ and obtain an SPR in $\resmmsem{\aprog}$. 
Assume $\acom$ is an assignment $\psel{\apvarp}:=\apvar$ (the case of $\dsel{\apvarp}:=\advar$ is similar).
Since $\anact$ is enabled after $\tau$ and $\heapcomputof{\tau'}{\apvarp}=\heapcomputof{\tau}{\apvarp}$, we have $\heapcomputof{\tau'}{\apvarp}\neq\segval$.
Hence, the command is also enabled after $\tau'$.
Since $\apvarp\notin\validof{\tau'}$, the assignment is again to an invalid pointer.
It is thus an SPR according by Definition~\ref{Definition:SPRF}.(i).
\qed
\end{proof}

\section{Improving Thread-Modular Analyses} 
\label{Section:Evaluation}
We now describe how the theory developed so far can be used to increase the efficiency of thread-modular analyses of pointer programs under explicit memory management. 

Thread-modular reasoning abstracts a program state into a set of states of individual threads. 
A thread's state consists of the local state, the part of the heap reachable from the local variables, and the shared state, the heap reachable from the shared variables. 

The analysis saturates the set of reachable thread states by a fixpoint computation.
Every step in this computation creates new thread states out of the existing ones by applying the following two rules.
\begin{inparaenum}[(1)]
	\item
		Sequential step: a thread's state is modified by an action of this thread.
	\item
		Interference: a state of a victim thread is changed by an action of another, interfering thread. 
		This is accounted for by creating combined two-threads states from  existing pairs of states of the victim and the interferer thread.
		The states that are combined have to agree on the shared part. 
		The combined state is constructed by deciding which addresses in the two local states coincide. 
		It is then observed how an action of the interferer changes the state of the victim within the combined state.
\end{inparaenum}

Pure thread-modular reasoning does not keep any information about what thread states can appear simultaneously during a computation and what identities can possibly hold between addresses of local states of threads.  
This brings efficiency, but also easily leads to false positives.
To see this, consider in Treiber's stack a state $s$ of a thread that is just about to perform the $\cas$ in \texttt{push}. 
Variable $\node$ points to an address $\anadr$ allocated in the first line of \texttt{push}, $\Topp$, $\topp$, and $\psel\node$ are at the top of the stack. 
Consider an interference step where the states $s_v$ of the victim and $s_i$ of the interferer are isomorphic to $s$, 
with $\node$ pointing to the newly allocated addresses $\anadr_v$ and $\anadr_i$, respectively.
Since the shared states conincide,  the interference is triggered.
The combination must account for all possible equalities among the local variables.
Hence, there is a combined state with $\anadr_v = \anadr_i$, which does not occur in reality.
This is a crucial imprecision, which leads to false positives.
Namely, the interferer's $\cas$ succeeds, resulting in the new victim's state $s_v'$ with $\Topp$ on $\anadr_i$ (which is equal to $\anadr_v$). 
The victim's $\cas$ then fails, and the thread continues with the commands $\topp:=\Topp;\psel\node:=\topp$. 
This results in $\psel{\anadr_v}$ pointing back to $\anadr_v$, and a loss of the stack content. 

Methods based on thread-modular reasoning must prevent such false positives by maintaining the necessary information about correlations of local states.
An efficient technique commonly used under garbage collection is based on ownership: 
a thread's state records that $\anadr$ has just been allocated and hence no other thread can access the address, until it enters the shared state.
This is enough to prevent false positives such as the one described above. 
Namely, the addresses $\anadr_i$ and $\anadr_v$ are owned by the respective threads and therefore they cannot be equal. 
Interference may then safely ignore the problematic case when $\anadr_v = \anadr_i$.
Moreover, besides the increased precision, the ability to avoid interference steps due to ownership significantly improves the overall efficiency.
This technique was used for instance to prove safety (and linearizability) of Treiber's stack and other subtle lock-free algorithms in~\cite{Vafeiadis:RGSep}.

Under explicit memory management, ownership of this form cannot be guaranteed. 
Addresses can be freed and re-allocated while still being pointed to.
Other techniques must be used to correlate the local states of threads.  
The solution chosen in~\cite{Sagiv:correlation,ThreadModular2013} is to replace the states of individual threads by states of pairs of threads.
Precision is thus restored at the cost of
an almost quadratic blow-up of the abstract domain
that in turn manifests itself in a severe decrease of scalability.
\subsection{Pointer Race Freedom Saves Ownership} 
\label{Section:PRFSavesOwnreship}
\label{par:improvements_due_to_pointer_race_freedom}

Using the results from Sections~\ref{Section:PRF} and \ref{Section:SPRF},
we show how to apply the ownership-based optimization of thread-modular reasoning to the memory-managed semantics.
To this end, we split the verification effort into two phases.
Depending on the notion of pointer race freedom,
we first check whether the program under scrutiny is (S)PRF. 
If the check fails, we report pointer races as potential errors to the developer. 
If the check succeeds, the second phase verifies the property of interest (here, linearizability) assuming (S)PRF.

When the notion of PRF from Section~\ref{Section:PRF} is used,
the second verification phase can be performed in the garbage-collected semantics due to Theorem~\ref{Theorem:PRFGuarantee}.
This allows us to apply the ownership-based optimization discussed above.
Moreover, Theorem~\ref{Theorem:CheckPRF} says that the first PR has to appear in the garbage-collected semantics. 
Hence, even the first phase, checking PRF, can rely on garbage collection and ownership.
The PRF check itself is simple. 
Validity of pointers is kept as a part of the individual thread states and updated at every sequential and interference step. 
Based on this, every computation step is checked for raising a PR according to Definition~\ref{Definition:PRF}. 
Our experiments suggest that the overhead caused by the recorded validity information is low.

For SPRF, we proceed analogously.
Due to the Theorems~\ref{Theorem:SPRFGuarantee} and~\ref{Theorem:CheckSPRF}, checking SPRF in the first phase and property verification in the second phase can both be done in the ownership-respecting semantics. 
The SPRF check is similar to the PRF check. 
Validity of pointers together with an information about strong invalidity is kept as a part of a thread's state, and every step is checked for raising an SPR according to Definition~\ref{Definition:SPRF}.

The surprising good news is that both phases can again use the ownership-based optimization. 
That is, also in the ownership-respecting semantics, interferences on the owned memory addresses can be skipped.
We argue that this is sound.
Due to Lemma~\ref{Lemma:OwnImpliesLocal}, 
if a thread $\athread$ owns an address $\anadr$, other threads may access $\anadr$  only via invalid pointers. 
Therefore,
\begin{inparaenum}[(1)]
	\item 
modifications of $\anadr$ by $\athread$ need not be considered as an interference step for other threads.
Indeed, if a thread $\athread'\neq \athread$ was influenced by such a modification ($\athread'$ reads a next or the data field of $\anadr$), 
then the corresponding variable of $\athread'$ would become strongly invalid, Definition~\ref{Definition:StrongInvalidity}. 
Hence, either this variable is never used in an assertion or in a dereference again (it is effectively dead), 
or the first use raises an SPR, Cases~(ii) and~(iii) in Definition~\ref{Definition:SPRF}.  
	\item 
In turn, in the ownership-respecting semantics, another thread $\athread'$ cannot make changes to $\anadr$, by Definition~\ref{Definition:OwnershipViolation} of ownership violations.  
This means we can also avoid the step where $\athread'$ interferes with the victim $\athread$. 
\end{inparaenum}

\subsection{Experimental Results} 
\label{Section:ExperimentalResults}

To substantiate our claim for a more efficient analysis with practical experiments, we implemented the thread-modular analysis from~\cite{ThreadModular2013} in a prototype tool.
This analysis is challenging for three reasons: it checks  linearizability as a non-trivial requirement, it handles an unbounded number of threads, and it supports an unbounded heap. 
Our tool covers the garbage-collected semantics, the new ownership-respecting semantics of Section~\ref{Section:SPRF}, and the memory-managed semantics.
For the former two, we use the abstract domain where local states refer to single threads.
Moreover, we support the ownership-based pruning of interference steps from Section~\ref{Section:PRFSavesOwnreship}. 
For the memory-managed semantics, to restore precision as discussed above, the abstract domain needs local states with pairs of threads. 
Rather than running two phases, our tool combines the PRF check and the actual analysis. 
We tested our implementation on lock-free data structures from the literature and verified linearizability following the approach in~\cite{ThreadModular2013}.

\begin{table}
	\caption{Experimental results for thread-modular reasoning using different memory semantics.}
	\label{tab:experiments}%
	\def\firstcolwidth{2.7cm}%
	\newcolumntype{Y}{>{\centering\arraybackslash}X}%
	\newcolumntype{Z}{>{\raggedright}m}%
	\vspace{-0.5cm}
	\begin{center}
	\scalebox{0.8}{
	\begin{tabularx}{1.2\textwidth}{Z{\firstcolwidth+.1cm}l*{5}{Y}Y}
		\toprule[0.1ex]
		\multicolumn{2}{l}{Program}
			& time in seconds
			& explored state count
			& sequential step count
			& interference step count
			& pruned interferences
			& correctness established
			\\
		\midrule[0.3ex]
		\multirow{5}{\firstcolwidth}{Single lock stack}
			& GC        & 0.053  & 328    & 941    & 3276    & 10160	& yes \\
			& OWN       & 0.21   & 703    & 1913   & 6983    & 22678	& yes \\
			& GC$^-$    & 0.20   & 507    & 1243   & 19321   & --   	& yes \\
			& OWN$^-$   & 0.60   & 950    & 2474   & 38117   & --   	& yes \\
			& MM$^-$    & 5.34   & 16117  & 25472  & 183388  & --   	& yes \\
		\hdashline[1pt/1pt]
		\multirow{5}{\firstcolwidth}{Single lock queue}
			& GC        & 0.034  & 199    & 588    & 738     & 5718 	& yes \\
			& OWN       & 0.56   & 520    & 1336   & 734     & 31200	& yes \\
			& GC$^-$    & 0.19   & 331    & 778    & 9539    & --   	& yes \\
			& OWN$^-$   & 2.52   & 790    & 1963   & 65025   & --   	& yes \\
			& MM$^-$    & 31.7   & 27499  & 60263  & 442306  & --   	& yes \\
		\hdashline[1pt/1pt]
		\multirow{5}{\firstcolwidth}{Treiber's lock free stack (with version counters) \cite{MichaelScottTreiber}}
			& GC        & 0.052  & 269     & 779     & 3516      & 15379	& yes\\
			& OWN       & 2.36   & 744     & 2637    & 43261     & 95398	& yes\\
			& GC$^-$    & 0.16   & 296     & 837     & 11530     & --   	& yes\\
			& OWN$^-$   & 4.21   & 746     & 2158    & 73478     & --   	& yes\\
			& MM$^-$    & 602    & 116776  & 322057  & 7920186   & --   	& yes\\
		\hdashline[1pt/1pt]
		\multirow{5}{\firstcolwidth}{Michael \& Scott's lock free queue \cite{MichaelScottTreiber} (with hints)}
			& GC        & 2.52             & 3134          & 6607          & 46838         & 1237012 		& yes\\
			& OWN       & 10564            & 19553         & 43305         & 6678240       & 20747559		& yes\\
			& GC$^-$    & 9.08             & 3309          & 7753          & 187349        & --      		& yes\\
			& OWN$^-$   & 51046            & 31329         & 64234         & 35477171      & --      		& yes\\
			& MM$^-$    & \textit{aborted} & $\ge\,$69000  & $\ge\,$90000  & --            & --      		& false positive \\
		\bottomrule[0.1ex]
	\end{tabularx}}
	\end{center}
\end{table}

The experimental results are listed in Table~\ref{tab:experiments}.
The experiments were conducted on an Intel Xeon E5-2650 v3 running at 2.3 GHz.
The table includes the following:
\begin{inparaenum}[(1)]
	\item runtime taken to establish correctness,
	\item number of explored thread states (i.e. size of the search space),
	\item number of sequential steps,
	\item number of interference steps,
	\item number of interference steps that have been pruned by the ownership-based optimization, and
	\item the result of the analysis, i.e. whether or not correctness could be established. 
\end{inparaenum}
For a comparison, we also include the results with the ownership-based optimization turned off (suffix $^-$). 
Recall that the optimization does not apply to the memory-managed semantics. 
We elaborate on our findings.

Our experiments confirm the usefulness of pointer race freedom.
When equipped with pruning (OWN), the ownership-respecting semantics provides a speed-up of two orders of magnitude for Treiber's stack and the single lock data structures compared to the memory-managed semantics (MM$^-$).
The size of the explored state space is close to the one for the garbage-collected semantics (GC) and up to two orders of magnitude smaller than the one for explicit memory management.
We also stress tested our tool by purposely inserting pointer races, for example, by discarding the version counters.
In all cases, the tool was able to detect those races.

For Michael \& Scott's queue we had to provide hints in order to eliminate certain patterns of false positives. 
This is due to an imprecision that results from joins over a large number of states (we are using the joined representation of states from \cite{ThreadModular2013} based on Cartesian abstraction).
Those hints are sufficient for the analysis relying on the ownership-respecting semantics to establish correctness.
The memory-manged semantics produces more false positives, the elimination of which would require more hinting, as also witnessed by the implementation of \cite{ThreadModular2013}.
Regarding the stress tests from above, note that we ran those tests with the same hints and were still able to find the purposely inserted bugs.
\vspace{-0.2cm}
\section{Conclusion} 
\label{Section:conclusion}
\vspace{-0.2cm}

We have conducted a semantic study on the relationship between concurrent heap-manipulating programs running under explicit memory management and under garbage collection.
We proposed the notion of pointer race that captures the difference between the two semantics and characterizes common synchronizations errors 
similar to the well-known data races.
We proved that the verification of pointer race free programs under explicit memory management can be reduced to the easier verification under garbage collection. 
We showed an analogous result with a stronger notion of pointer race proposed to fit performance critical (e.g. lock-free) implementations, which are intentionally racy in our original sense.
Our results are particularly useful in thread-modular analysis under explicit memory-management.
We showed that they allow us to apply an ownership-based optimization available before only under garbage-collection.
Using this optimization, our prototype was able to verify lock-free algorithms like Treiber's stack and Micheal \& Scott's queue for the memory-managed semantics with a performance gain of up to two orders of magnitude.

\bibliographystyle{plain}
\bibliography{cited}

\newpage
\appendix
\section{Missing Details}
The intended behavior of Treiber's stack is as follows, see Figure~\ref{Example:Treiber}.
Upon a push, the corresponding thread allocates a new cell using a local pointer variable $\node$ and sets the given value.
In a loop, the thread now tries to alter the top of stack. 
It sets a local pointer variable $\topp$ to the old top of stack stored in the global pointer variable $\Topp$.
Then it redirects the next selector of $\node$ to the old top of stack.
If no concurrent execution of a push or a pop has interefered, $\Topp$ and $\topp$ still point to the same cell and the thread atomically sets $\Topp$ to $\node$.
To be precise, the compare-and-swap command $\cas$ atomically checks the equality $\Topp=\topp$ and, in case it holds, assigns to $\Topp$ the value of $\node$ and returns true.
If the values differ, the command returns false. 
We decided not to add $\cas$ to the set of commands to keep our instruction set small.
The theory can be extended to cover $\cas$.

The pop method also creates a local copy $\topp$ of the global top of stack.
It checks whether the stack is empty and, in case, returns negatively.
Otherwise, the method copies the new top of stack $\psel{\topp}$ into the local variable $\node$ and atomically moves the global top of stack $\Topp$ to $\node$.
Now the thread executing pop can access the value of the cell, free $\topp$, and return.

\begin{figure}
\begin{tikzpicture}[scale=2]
    \tikzstyle{cell} = [rectangle split,rectangle split horizontal, rectangle split parts=2,draw,rounded corners,text width=0.3cm,text height=0.2cm]
    \tikzstyle{next} = [circle,minimum size=0.15cm,inner sep=0pt,fill=black,draw=black]
    \node[cell] (a){$a$\nodepart{two}};
    \node (nr)[above left of=a, node distance=1.2cm]{(1)};
    \node[next](apointer)[right of=a, node distance=0.3cm] {};
    \node (top) [above of=a,node distance=0.7cm]{$\topp$};
    \node (Top) [below of=a,node distance=0.7cm]{$\Topp$};
    \draw[->](top)--(a.north);
    \draw[->](Top)--(a.south);
    \node[cell] (other)[right of=a, node distance=1.5cm]{\nodepart{two}};
    \draw[->](apointer)--(other.west);
    \node[next](otherpointer)[right of=other, node distance=0.3cm] {};
    \node (node) [above of=other,node distance=0.7cm]{$\node$};
    \draw[->](node)--(other.north);
    \node[](dots)[right of=other, node distance=1.3cm] {$\cdots$};
    \draw[->](otherpointer)--(dots.west);
\begin{scope}[xshift=2.1cm]
    \node[cell] (a){$a$\nodepart{two}};
    \node (nr)[above left of=a, node distance=1.2cm]{(2)};
    \node[next](apointer)[right of=a, node distance=0.3cm] {};
    \node (top) [above of=a,node distance=0.7cm]{$\topp^{\dagger}$};
    \draw[->](top)--(a.north);
    \node[cell] (other)[right of=a, node distance=1.5cm]{\nodepart{two}};
    \draw[->](apointer)--(other.west);
    \node (Top) [below of=other,node distance=0.7cm]{$\Topp$};
    \draw[->](Top)--(other.south);
    \node[next](otherpointer)[right of=other, node distance=0.3cm] {};
    \node (node) [above of=other,node distance=0.7cm]{$\node$};
    \draw[->](node)--(other.north);
    \node[](dots)[right of=other, node distance=1.3cm] {$\cdots$};
    \draw[->](otherpointer)--(dots.west);
\end{scope}
\begin{scope}[xshift=4.2cm]
    \node[cell] (a){$a$\nodepart{two}};
    \node (nr)[above left of=a, node distance=1.2cm]{(3)};
    \node[next](apointer)[right of=a, node distance=0.3cm] {};
    \node (top) [above of=a,node distance=0.7cm]{$\topp^{\dagger}$};
    \draw[->](top)--(a.north);
    \node[cell] (b)[below of=a, node distance=0.8cm]{$b$\nodepart{two}};
    \node[next](bpointer)[right of=b, node distance=0.3cm] {};
    \node[cell] (other)[right of=a, node distance=1.5cm]{\nodepart{two}};
    \draw[->](apointer)--(other.west);
    \draw[->](bpointer)--(other.south);
    \node (Top) [below of=b,node distance=0.7cm]{$\Topp$};
    \draw[->](Top)--(b.south);
    \node[next](otherpointer)[right of=other, node distance=0.3cm] {};
    \node (node) [above of=other,node distance=0.7cm]{$\node$};
    \draw[->](node)--(other.north);
    \node[](dots)[right of=other, node distance=1.3cm] {$\cdots$};
    \draw[->](otherpointer)--(dots.west);
\end{scope}
\begin{scope}[yshift=-1.5cm,xshift=0.5cm]
    \node[cell] (a){$a$\nodepart{two}};
    \node (nr)[above left of=a, node distance=1.5cm]{(4)};
    \node[next](apointer)[right of=a, node distance=0.3cm] {};
    \node (top) [above of=a,node distance=0.7cm]{$\topp^{\dagger}$};
    \draw[->](top)--(a.north);
    \node[cell] (b)[below of=a, node distance=0.8cm]{$b$\nodepart{two}};
    \node[next](bpointer)[right of=b, node distance=0.3cm] {};
    \node (Top) [left of=a,node distance=1.2cm]{$\Topp$};
    \draw[->](Top)--(a.west);
    \node[cell] (other)[right of=a, node distance=1.5cm]{\nodepart{two}};
    \draw[->](apointer)--(b.north);
    \draw[->](bpointer)--(other.south);
    \node[next](otherpointer)[right of=other, node distance=0.3cm] {};
    \node (node) [above of=other,node distance=0.7cm]{$\node$};
    \draw[->](node)--(other.north);
    \node[](dots)[right of=other, node distance=1.3cm] {$\cdots$};
    \draw[->](otherpointer)--(dots.west);
\end{scope}
\begin{scope}[yshift=-1.5cm,xshift=3.5cm]
    \node[cell] (a){$a$\nodepart{two}};
    \node (nr)[above left of=a, node distance=1.2cm]{(5)};
    \node[next](apointer)[right of=a, node distance=0.3cm] {};
    \node (top) [above of=a,node distance=0.7cm]{$\topp^{\dagger}$};
    \draw[->](top)--(a.north);
    \node[cell] (b)[below of=a, node distance=0.8cm]{$b$\nodepart{two}};
    \node[next](bpointer)[right of=b, node distance=0.3cm] {};
    \node[cell] (other)[right of=a, node distance=1.5cm]{\nodepart{two}};
    \draw[->](apointer)--(other.west);
    \draw[->](bpointer)--(other.south);
    \node[next](otherpointer)[right of=other, node distance=0.3cm] {};
    \node (Top) [below right of=other,node distance=0.9cm]{$\Topp$};
    \draw[->](Top)--(other.south);
    \node (node) [above of=other,node distance=0.7cm]{$\node$};
    \draw[->](node)--(other.north);
    \node[](dots)[right of=other, node distance=1.3cm] {$\cdots$};
    \draw[->](otherpointer)--(dots.west);
\end{scope}
\end{tikzpicture}
\caption{ABA-problem in Treiber's stack.}
\label{Figure:ABA}
\end{figure}


\section{Proofs in Section~\ref{Section:PRF}}
Throughout the appendix, we refer to the two equations for heap isomorphism as \emph{compatibility requirements}.
\begin{proof}[of Lemma~\ref{Lemma:HeapIsoAlgebra}]
We consider the first claim.
Let $\aheap_1\heapiso\aheap_2$ via $\anisoa:\adrof{\aheap_1}\rightarrow \adrof{\aheap_2}$.
Let $A:=\adrof{\restrict{\aheap_1}{P}}$.
The task is to show that $\anisoa':=\restrict{\anisoa}{A}:A\rightarrow\anisoa(A)$
defines an isomorphism between
$\restrict{\aheap_1}{P}$ and $\restrict{\aheap_2}{\anisoe(P)}$.
To this end, it is sufficient to show that $\anisoa'$ induces a bijection $\anisoe'$ between $\domof{\restrict{\aheap_1}{P}}$ and $\domof{\restrict{\aheap_2}{\anisoe(P)}}$ that satisfies the compatibility requirements for an isomorphism.
From this we derive that $\anisoa'$ is a bijection between the addresses.

Function $\anisoe'$ is total since $\anisoa'$ maps all addresses in 
$\domof{\restrict{\apval_1}{P}}\cap \adr$ and in $\domof{\restrict{\adval_1}{D}}\cap \adr$.
The function is injective as $\anisoe$ is.
In the case of pointer expressions, surjectivity means for every $\apexpp\in \domof{\restrict{\apval_2}{\anisoe(P)}}$ there is $\apexp\in \domof{\restrict{\apval_1}{P}}$ with $\anisoe(\apexp)=\apexpp$.
Since $\apexpp\in \domof{\restrict{\apval_2}{\anisoe(P)}}$, we have $\apexpp\in \domof{\apval_2}$. 
Since $\anisoa$ is a heap isomorphism, there is $\apexp\in \domof{\apval_1}$ with $\anisoe(\apexp)=\apexpp$.
Moreover, since $\apexpp\in \anisoe(P)$, we have $\apexp\in P$.
Together, $\apexp\in \domof{\restrict{\apval_1}{P}}$.
Surjectivity for data expressions is similar.
The compatibility required for an isomorphism holds as it holds for $\anisoe$.
\\[0.2cm]
Consider now the second claim and assume $\anadr\in\adrof{\aheap_1}$ but $\anadrp\notin\adrof{\aheap_1}$.
Let $\anisoa$ be the isomorphism between $\aheap_1$ and $\aheap_2$.
We extend the function by $\anadrp\mapsto\anadrp'$ and restrict it to the new domain $\adrof{\aheap_1[\psel{\anadr}\mapsto\anadrp]}$.
Note that we indeed may lose the address that $\psel{\anadr}$ was pointing to
so that a restriction is necessary:
\begin{align*}
\anisoa':=\restrict{(\anisoa\cup\set{\anadrp\mapsto\anadrp'})}{\adrof{\aheap_1[\psel{\anadr}\mapsto\anadrp]}}.
\end{align*}
We first check that $\anisoa'$ is a function.
This holds as $\anadrp\notin\adrof{\aheap_1}=\domof{\anisoa}$.
To show that $\anisoa'$ is a bijection between the addresses, it is again sufficient to show that the induced function on pointer and data expressions $\anisoe'$ is a bijection satisfying the requirements for an isomorphism.
The induced function is total as $\anisoa'$ maps all addresses in 
$\adrof{\aheap_1[\psel{\anadr}\mapsto\anadrp]}$.
In particular does $\anisoa'$ extend $\anisoa$ and hence map $\adrof{\aheap_1}$
containing $\anadr$.
The induced function is injective essentially as $\anisoe$ is. 
To be precise, if $\psel{\anadr}$ was not in the domain of $\aheap_1$, then $\psel{\anisoa(\anadr)}$ was not in the domain of $\aheap_2$ since $\anisoe$ is a bijection. 
We can therefore safely add this mapping.
It remains to show that the function is surjective. 
Consider $\apexpp\in \domof{\aheap_2[\psel{\anadr'}\mapsto\anadrp']}$ with $\anadr'=\anisoa(\anadr)$.
If $\apexpp=\psel{\anadr'}$, then $\apexpp=\anisoe'(\psel{\anadr})$.
If we have $\apexpp\in \domof{\aheap_2}\setminus\set{\psel{\anadr'}}$, then
there is $\apexp\in \domof{\aheap_1}$ so that $\apexpp=\anisoe(\apexp)$.
This $\apexp$ exists as $\anisoe$ is a bijection between the old domains.

To check the compatibility requirements, let $\apval_1$ be the pointer valuation in $\aheap_1[\psel{\anadr}\mapsto\anadrp]$ and let $\apval_2$ be the valuation in $\aheap_2[\psel{\anadr'}\mapsto\anadrp']$. Then
\begin{align*}
\anisoa'(\apval_1(\psel{\anadr})) &= \anisoa'(\anadrp) \\
&= \anadrp'\\
&= \apval_2(\psel{\anadr'}) 
= \apval_2(\anisoe'(\psel{\anadr})).
\end{align*}
For the remaining pointers, the requirement holds by the fact that $\anisoa$ was a heap isomorphism. 
\qed
\end{proof}
\begin{proof}[of Lemma~\ref{Lemma:Freed}]
Assume $\anadr\in \freedof{\tau}$.
To show that $\anadr\notin \adrof{\restrict{\heapcomput{\tau}}{\validof{\tau}}}$, 
we show that (i) no pointer $\apexp$ to $\anadr$ is valid (in $\validof{\tau}$) and (ii) no selector $\psel{\anadr}$ is valid. 
Since the restriction $\restrict{\heapcomput{\tau}}{\validof{\tau}}$ only keeps selectors $\dsel{\anadr}$ for valid pointers to $\anadr$, Argument~(i) also removes $\anadr$ from $\domof{\adval_{\tau}}$.

For (i), we show that $\heapcomputof{\tau}{\apexp}=\anadr$ implies $\apexp\notin\validof{\tau}$.
Since $\anadr\in\freedof{\tau}$, there was no malloc after the free of the address.
Hence, there are two cases.
Either $\apexp$ learned about $\anadr$ before or after the address was freed.
In the former case, $\apexp$ is invalidated by the free of address $\anadr$.
In the latter case, $\apexp$ can only learn about $\anadr$ from an invalid pointer.
This renders $\apexp$ invalid, too.

We show (ii), all selectors $\psel{\anadr}$ are invalid.
These selectors were declared invalid at the moment address $\anadr$ was freed.
The only way to validate $\psel{\anadr}$ is via an assignment to it.
This assignment is forbidden as computation $\tau$ is assumed to be PRF.
Indeed, with Argument~(i), all pointers to $\anadr$ are invalid and hence accessing the next selector will result in a pointer race according to Definition~\ref{Definition:PRF}(i).
\qed
\end{proof}
\begin{proof}[of Lemma~\ref{Lemma:GCPRF}]
We proceed by induction.
In the base case, all pointers are valid and there is nothing to prove.
Assume the claim holds for $\sigma$ and consider $\sigma.\anact\in\gcsem{\aprog}$ PRF.
The main task is to carefully consider the assignments.
There are five cases. 
\begin{description}

\item[Case $\psel{\apvar}:=\apvarp$] Let $\heapcomputof{\sigma}{\apvar}=\anadr\neq \segval$. We have $\anadr\neq \segval$ by enabledness. 
\begin{itemize}
\item[(1)] If $\apvarp$ is invalid, it points to $\anadrp\neq \segval$ by the induction hypothesis(i). If $\apvar$ was valid, claim(ii) now holds for $\psel{\anadr}$.
\item[(2)] If $\apvarp$ is valid pointing to $\anadrp$ which may be $\segval$, then $\psel{\anadr}$ will be valid. 
If $\anadrp\neq \segval$, the next selectors of $\anadrp$ behave as required by the induction hypothesis(ii) on $\apvarp$.
If $\anadrp=\segval$, claim(ii) is trivial for $\psel{\anadr}$.
\end{itemize}



\item[Case $\apvar:=\apvarp$] $\phantom{Text}$
\begin{itemize}
\item[(3)]
If $\apvarp$ is invalid, it points to $\anadr\neq \segval$ by the induction hypothesis(i). 
This proves claim(i) for $\apvar$.
\item[(4)]
If $\apvarp$ is valid, then $\apvar$ will become valid.
If $\heapcomputof{\sigma}{\apvarp}=\segval$, claim(ii) trivially holds for $\apvar$.
If $\heapcomputof{\sigma}{\apvarp}=\anadrp\neq \segval$, the statement about the invalid next selectors of $\anadrp$ carries over to $\apvar$ by the induction hypothesis(ii) on $\apvarp$.
\end{itemize}

\item[Case $\apvar:=\psel{\apvarp}$]
Let $\heapcomputof{\sigma}{\apvarp}=\anadrp\neq \segval$. Again, we have $\anadrp\neq \segval$ by enabledness.
\begin{itemize}
\item[(5)]
If $\apvarp$ is invalid, this is again a pointer race. 
\item[(6)]
This means $\apvarp$ is valid.
If $\heapcomputof{\sigma}{\psel{\anadrp}}=\segval$, by the induction hypothesis(ii) expression $\psel{\anadrp}$ has to be valid.
Then $\apvar$ becomes valid and points to $\segval$.
In this case, claim(ii) trivially holds.
Otherwise, $\heapcomputof{\sigma}{\psel{\anadrp}}=\anadrpp\neq \segval$. 
If $\psel{\anadrp}$ is invalid, $\apvar$ becomes invalid and claim(i) holds.
If $\psel{\anadrp}$ is valid, claim(ii) about the next selectors of $\anadrpp$ holds for $\apvar$ by the induction hypothesis(ii) on $\psel{\anadrp}$.
\end{itemize}
\end{description}
Consider a free command $\freeof{\apvar}$ with $\heapcomputof{\sigma}{\apvar}=\anadr\neq \segval$. 
This invalidates all pointers to $\anadr$ and claim(i) and claim(ii) hold.

Consider a malloc $\apvar:=\malloc$.
This returns a fresh cell $f$ where all next selectors $\psel{f}$ are valid.
Hence, claim(ii) trivially holds.\qed
\end{proof}
\begin{proof}[of Proposition~\ref{Proposition:PRFImpliesGC}]
We proceed by induction on the length of $\tau$.
The base case of single actions setting data variables to arbitrary values is trivial.
In the induction step, assume for $\tau$ we have the heap-equivalent computation~$\sigma$. 
The fact that the sequences of commands coincide for $\tau$ and $\sigma$ means we can assume the resulting control states to coincide.
This allows us to always choose the same next command in both semantics.
We therefore focus on the heap content and do a case distinction along the transition rule that leads to $\tau'$. 
In the following, let $\anisoa:\adrof{\restrict{\heapcomput{\tau}}{\validof{\tau}}}\rightarrow \adrof{\restrict{\heapcomput{\sigma}}{\validof{\sigma}}}$ be the isomorphism between $\restrict{\heapcomput{\tau}}{\validof{\tau}}$ and $\restrict{\heapcomput{\sigma}}{\validof{\sigma}}$.\\[0.2cm]
\bfemph{Case (Malloc2)}\quad
Let $\tau'=\tau.(\athread, \apvar:=\malloc, [\apvar\mapsto \anadr])$ with $\anadr\in \freedof{\tau}$. 
It can be shown that the value of $\dsel{\anadr}$ is defined for addresses $\anadr$ that have been freed.
Let it be $\heapcomputof{\tau}{\dsel{\anadr}}=d$.
We now have
\begin{align}
\restrict{\heapcomput{\tau'}}{\validof{\tau'}}
=&\ \restrict{\heapcomput{\tau}[\apvar\mapsto \anadr]}{\validof{\tau}\cup\set{\apvar}} \notag\\
=&\ (\restrict{\heapcomput{\tau}}{\validof{\tau}})[\apvar\mapsto \anadr,\dsel{\anadr}\mapsto d].\label{Equation:Malloc2FormTau}
\end{align}
The first equation is by the definition of $\tau'$ and $\validof{\tau'}$.
To understand the second equation, note that $\anadr\notin\adrof{\restrict{\heapcomput{\tau}}{\validof{\tau}}}$ by $\anadr\in\freedof{\tau}$ and Lemma~\ref{Lemma:Freed}.
This means in
$\restrict{\heapcomput{\tau}[\apvar\mapsto \anadr]}{\validof{\tau}\cup\set{\apvar}}$, pointer $\apvar$ is the only reference to $\anadr$.
Since we have a reference to $\anadr$, value $\dsel{\anadr}$ is defined in $\restrict{\heapcomput{\tau}[\apvar\mapsto \anadr]}{\validof{\tau}\cup\set{\apvar}}$.
So if we push the restriction over the update, we have to preserve definedness of $\dsel{\anadr}$.
Therefore, we add $\dsel{\anadr}\mapsto d$ to the update.

To mimic the command with garbage collection, we apply 
Rule~(Malloc1) and get $\sigma':=\sigma.(\athread, \apvar:=\malloc, [\apvar\mapsto f, \dsel{f}\mapsto d, \setcond{\psel{f}\mapsto\segval}{\text{for every selector $\texttt{next}$}}])$.
Note that we allocate a fresh address $f\notin \adrof{\heapcomput{\sigma}}$.
The transition rule allows us to select an arbitrary value. 
We choose the value of $\dsel{\anadr}$ in $\heapcomput{\tau}$.
The next selectors are yet undefined.
We have 
\begin{align}
\restrict{\heapcomput{\sigma'}}{\validof{\sigma'}} 
=&\ \restrict{\heapcomput{\sigma}[\apvar\mapsto f, \dsel{f}\mapsto d, \setcond{\psel{f}\mapsto\segval}{\text{for every $\texttt{next}$}}]}{\validof{\sigma}\cup\set{\apvar}}\notag\\
=&\ (\restrict{\heapcomput{\sigma}}{\validof{\sigma}})[\apvar\mapsto f, \dsel{f}\mapsto d].\label{Equation:Malloc2FormSigma}
\end{align}
The first equation is again by the definition of $\sigma'$ and of $\validof{\sigma'}$.
The second equation preserves the assignments $\apvar\mapsto f$ and $\dsel{f}\mapsto\advalue$.
Since we overwrite the value of $\apvar$, there is no need to keep $\apvar$ with the valid pointers when we push the restriction inside.
Since $f$ is fresh, the pointers $\psel{f}$ are not contained in $\validof{\sigma}$. 
The restriction removes the corresponding assignments to $\segval$.

To see that $\restrict{\heapcomput{\tau'}}{\validof{\tau'}}$ and $\restrict{\heapcomput{\sigma'}}{\validof{\sigma'}}$ are isomorphic, we first note that
\begin{align*}
\restrict{\heapcomput{\tau}}{\validof{\tau}}\heapiso
\restrict{\heapcomput{\sigma}}{\validof{\sigma}}
\end{align*}
by the induction hypothesis.
We already argued for $\anadr\notin\adrof{\restrict{\heapcomput{\tau}}{\validof{\tau}}}$. Similarly, $f\notin\adrof{\restrict{\heapcomput{\sigma}}{\validof{\sigma}}}$.
This allows us to apply Lemma~\ref{Lemma:HeapIsoAlgebra}\eqref{Equation:HeapIsoModifyPointer}, more precisely a variant of Case~\eqref{Equation:HeapIsoModifyPointer} where the next pointer is replaced by $\apvar$:
\begin{align*}
(\restrict{\heapcomput{\tau}}{\validof{\tau}})[\apvar\mapsto\anadr]\heapiso
(\restrict{\heapcomput{\sigma}}{\validof{\sigma}})[\apvar\mapsto f].
\end{align*}
The isomorphism for this new heap maps $\anadr$ to $f$.
This allows us to apply Lemma~\ref{Lemma:HeapIsoAlgebra}\eqref{Equation:HeapIsoModifyData} and get
\begin{align*}
(\restrict{\heapcomput{\tau}}{\validof{\tau}})[\apvar\mapsto\anadr][\dsel{\anadr}\mapsto d]\heapiso
(\restrict{\heapcomput{\sigma}}{\validof{\sigma}})[\apvar\mapsto f][\dsel{f}\mapsto d].
\end{align*}
With Equations~\eqref{Equation:Malloc2FormTau} and~\eqref{Equation:Malloc2FormSigma}, this is the desired
\begin{align*}
\restrict{\heapcomput{\tau'}}{\validof{\tau'}}\heapiso \restrict{\heapcomput{\sigma'}}{\validof{\sigma'}}.
\end{align*}
\bfemph{Case (Free)}\quad
Let $\tau'=\tau.(\athread, \freeof{\apvar}, \emptyset)$.
Since $\tau'$ is assumed to be PRF, we get $\apvar\in \validof{\tau}$.
The more complex case is that $\heapcomputof{\tau}{\apvar}=\anadr\neq \segval$. 
The pointers that remain valid after the free are
\begin{align*}
\validof{\tau'}&=\validof{\tau}\setminus \invalidof{\anadr}.
\end{align*}
Recall that $\invalidof{\anadr}:=\setcond{\apexp}{\heapcomputof{\tau}{\apexp}=\anadr}\cup \set{\pselarg{\anadr}{1},\ldots, \pselarg{\anadr}{n}}$.
Then
\begin{align}
\restrict{\heapcomput{\tau'}}{\validof{\tau'}} 
=&\ \restrict{\heapcomput{\tau}}{\validof{\tau}\setminus \invalidof{\anadr}}  \label{Equation:FreeFormTau}
\end{align}

To mimic the command with garbage collection, we also free pointer $\apvar$ in $\sigma$ and obtain $\sigma':=\sigma.(\athread, \freeof{\apvar}, \emptyset)$.
Since $\apvar$ is defined in $\restrict{\heapcomput{\tau}}{\validof{\tau}}$, heap isomorphism requires
$\apval_{\sigma}(\apvar)=\anisoa(\apval_{\tau}(\apvar))=\anisoa(\anadr)\neq \segval$. We have $\anisoa(\anadr)\neq\segval$ as $\anadr\neq \segval$ and $\anisoa$ is an address mapping.

The pointers that remain valid after the free are
\begin{align*}
\validof{\sigma'}&= \validof{\sigma}\setminus \invalidof{\anisoa(\anadr)}.
\end{align*}
As in the case of $\tau'$, we obtain
\begin{align}
\restrict{\heapcomput{\sigma'}}{\validof{\sigma'}} 
= \restrict{\heapcomput{\sigma}}{\validof{\sigma}\setminus \invalidof{\anisoa(\anadr)}}. \label{Equation:FreeFormSigma}
\end{align}

For the isomorphism, we first show that 
\begin{align}
\anisoe(\validof{\tau}\setminus{\invalidof{\anadr}})=\validof{\sigma}\setminus \invalidof{\anisoa(\anadr)}. \label{Equation:IsoInvalid}
\end{align}
To prove the inclusion from left to right, consider $\anisoe(\apexp)$ with $\apexp\in \validof{\tau}$ and $\apexp\notin \invalidof{\anadr}$.
Since $\apexp\in \validof{\tau}$, we have that $\anisoe(\apexp)\in\validof{\sigma}$.
This holds since $\anisoe$ defines a bijection between $\domof{\restrict{\heapcomput{\tau}}{\validof{\tau}}}$ and $\domof{\restrict{\heapcomput{\sigma}}{\validof{\sigma}}}$. 
To see that $\anisoe(\apexp)\notin \invalidof{\anisoa(\anadr)}$, assume for the sake of contradiction that it was in the set.
This either means $\anisoe(\apexp)$ points to $\anisoa(\anadr)$ or it is a selector of $\anisoa(\anadr)$.
Consider the former case.
Then we have
\begin{align*}
\anisoa(\anadr)=\heapcomputof{\sigma}{\anisoe(\apexp)} = \anisoa(\heapcomputof{\tau}{\apexp}). 
\end{align*}
The second equation holds by the fact that $\anisoa$ is a heap isomorphism.
Together, we get $
\heapcomputof{\tau}{\apexp}= \anadr$. 
This contradicts the fact that $\apexp\notin\invalidof{\anadr}$.
The reverse inclusion is along similar lines.

We establish the desired isomorphism as follows:
\begin{align*}
&\ \restrict{\heapcomput{\tau'}}{\validof{\tau'}}\\
\text{Equation~\eqref{Equation:FreeFormTau}}=&\ \restrict{\heapcomput{\tau}}{\validof{\tau}\setminus\invalidof{\anadr}}\\
=&\ \restrict{(\restrict{\heapcomput{\tau}}{\validof{\tau}})}{\validof{\tau}\setminus\invalidof{\anadr}}\\
\text{Ind. hypothesis, Equation~\eqref{Equation:IsoInvalid}, Lemma~\ref{Lemma:HeapIsoAlgebra}\eqref{Equation:HeapIsoRestrict}}\heapiso&\
\restrict{(\restrict{\heapcomput{\sigma}}{\validof{\sigma}})}{\validof{\sigma}\setminus\invalidof{\anisoa(\anadr)}}\\
=&\ \restrict{\heapcomput{\sigma}}{\validof{\sigma}\setminus\invalidof{\anisoa(\anadr)}}\\
\text{Equation~\eqref{Equation:FreeFormSigma}}=&\ \restrict{\heapcomput{\sigma'}}{\validof{\sigma'}}\ .
\end{align*}
\bfemph{Case (Asgn) valid}\quad
Consider $\tau'=\tau.(\athread, \psel{\apvar}:=\apvarp, [\psel{\anadr}\mapsto \anadrp])$.
Since the assignment is enabled, we have $\heapcomputof{\tau}{\apvar}=\anadr\neq \segval$.
Since the computation is PRF, we have $\apvar\in\validof{\tau}$.
Pointer $\apvarp$ has value $\heapcomputof{\tau}{\apvarp}=\anadrp$, which may be $\segval$.
Assume $\apvarp\in \validof{\tau}$. 
In this case, we have
\begin{align*}
\validof{\tau'}&=\validof{\tau}\cup\set{\psel{\anadr}}
\end{align*}
and hence
\begin{align}
\restrict{\heapcomput{\tau'}}{\validof{\tau'}}
=&\ \restrict{\heapcomput{\tau}[\psel{\anadr}\mapsto \anadrp]}{\validof{\tau}\cup\set{\psel{\anadr}}} \notag\\
=&\ (\restrict{\heapcomput{\tau}}{\validof{\tau}})[\psel{\anadr}\mapsto \anadrp].\label{Equation:AsgnValidFormTau}
\end{align}
The first equality is by definition of $\tau'$ and $\validof{\tau'}$.
The second equality uses the fact that $\apvarp$ is valid and points to $\anadrp$.
This means we preserve $\dsel{\anadrp}$ in $\restrict{\heapcomput{\tau}}{\validof{\tau}}$ (provided $\anadrp\neq\segval$) and only have to adapt the mapping of $\psel{\anadr}$. The situation may be contrasted with the case of (Malloc2) where we had to maintain $\dsel{\anadr}$.

To mimic the command, observe that $\heapcomputof{\tau}{\apvar}=\anadr\neq \segval$, $\apvar\in\validof{\tau}$, and $\restrict{\heapcomput{\tau}}{\validof{\tau}}\heapiso\restrict{\heapcomput{\sigma}}{\validof{\sigma}}$.
Together, this yields 
$\heapcomputof{\sigma}{\apvar}=\anisoa(\anadr)\neq \segval$ and allows us to dereference the address. 
Again due to isomorphism, $\apvarp$ has to be valid in $\sigma$ and $\heapcomputof{\sigma}{\apvarp}=\anisoa(\anadrp)$.
We thus get
$\sigma':=\sigma.(\athread, \psel{\apvar}:=\apvarp, [\psel{\anisoa(\anadr)}\mapsto 
\anisoa(\anadrp)])$.

By definition, 
\begin{align*}
\validof{\sigma'}&=\validof{\sigma}\cup\set{\psel{\anisoa(\anadr)}}
\end{align*}
and with the same argument as for $\tau'$
\begin{align}
\restrict{\heapcomput{\sigma'}}{\validof{\sigma'}}
=&\ \restrict{\heapcomput{\sigma}[\psel{\anisoa(\anadr)}\mapsto \anisoa(\anadrp)]}{\validof{\sigma}\cup\set{\psel{\anisoa(\anadr)}}} \notag\\
=&\ (\restrict{\heapcomput{\sigma}}{\validof{\sigma}})[\psel{\anisoa(\anadr)}\mapsto \anisoa(\anadrp)].\label{Equation:AsgnValidFormSigma}
\end{align}
The desired isomorphism $\restrict{\heapcomput{\tau'}}{\validof{\tau'}}\heapiso\restrict{\heapcomput{\sigma'}}{\validof{\sigma'}}$ now follows with Lemma~\ref{Lemma:HeapIsoAlgebra}\eqref{Equation:HeapIsoModifyPointer} in combination with the above Equations~\eqref{Equation:AsgnValidFormTau} and~\eqref{Equation:AsgnValidFormSigma}.
\\[0.2cm]
\bfemph{Case (Asgn) invalid}\quad
Let $\tau'=\tau.(\athread, \psel{\apvar}:=\apvarp, [\psel{\anadr}\mapsto \anadrp])$.
As in the previous case, by enabledness $\heapcomputof{\tau}{\apvar}=\anadr\neq \segval$ and by PRF $\apvar\in\validof{\tau}$.
Again $\heapcomputof{\tau}{\apvarp}=\anadrp$ may be $\segval$.
We now assume $\apvarp\notin \validof{\tau}$. 
This gives 
\begin{align*}
\validof{\tau'}&=\validof{\tau}\setminus\set{\psel{\anadr}}.
\end{align*}
As a result, we have
\begin{align}
\restrict{\heapcomput{\tau'}}{\validof{\tau'}}
=&\ \restrict{\heapcomput{\tau}[\psel{\anadr}\mapsto \anadrp]}{\validof{\tau}\setminus\set{\psel{\anadr}}} \notag\\
=&\ \restrict{\heapcomput{\tau}}{\validof{\tau}\setminus\set{\psel{\anadr}}}.\label{Equation:AsgnInvalidFormTau}
\end{align}
The first equality is by definition.
For the second equality, note that $\psel{\anadr}$ may already be defined in $\heapcomput{\tau}$.
Therefore, we have to remove the pointer explicitly also from this heap. 

To mimic the command, we again deduce 
$\heapcomputof{\sigma}{\apvar}=\anisoa(\anadr)\neq \segval$.
Since $\apvarp$ is not valid in $\tau$, it cannot be valid in $\sigma$ due to the isomorphism between $\restrict{\heapcomput{\tau}}{\validof{\tau}}$ and $\restrict{\heapcomput{\sigma}}{\validof{\sigma}}$. 
Let the value be $\heapcomputof{\sigma}{\apvarp}=\anadrpp$.
We thus obtain the computation 
$\sigma':=\sigma.(\athread, \psel{\apvar}:=\apvarp, [\psel{\anisoa(\anadr)}\mapsto 
\anadrpp])$.

Like  in the case of $\tau$, we have
\begin{align*}
\validof{\sigma'}&=\validof{\sigma}\setminus\set{\psel{\anisoa(\anadr)}}
\end{align*}
and hence
\begin{align}
\restrict{\heapcomput{\sigma'}}{\validof{\sigma'}}
=&\ \restrict{\heapcomput{\sigma}[\psel{\anisoa(\anadr)}\mapsto \anadrpp]}{\validof{\sigma}\setminus\set{\psel{\anisoa(\anadr)}}} \notag\\
=&\ \restrict{\heapcomput{\sigma}}{\validof{\sigma}\setminus\set{\psel{\anisoa(\anadr)}}}.\label{Equation:AsgnInvalidFormSigma}
\end{align}

We derive the desired isomorphism with Lemma~\ref{Lemma:HeapIsoAlgebra}\eqref{Equation:HeapIsoRestrict} in combination with Equations~\eqref{Equation:AsgnInvalidFormTau} and~\eqref{Equation:AsgnInvalidFormSigma}:
\begin{align*}
&\ \restrict{\heapcomput{\tau'}}{\validof{\tau'}}\\
\text{Equation~\eqref{Equation:AsgnInvalidFormTau}}=&\ \restrict{\heapcomput{\tau}}{\validof{\tau}\setminus\set{\psel{\anadr}}}\\
=&\
\restrict{(\restrict{\heapcomput{\tau}}{\validof{\tau}})}{\validof{\tau}\setminus\set{\psel{\anadr}}}\\
\text{Ind. hypothesis, Lemma~\ref{Lemma:HeapIsoAlgebra}\eqref{Equation:HeapIsoRestrict}} \heapiso&\
\restrict{(\restrict{\heapcomput{\sigma}}{\validof{\sigma}})}{\validof{\sigma}\setminus\set{\psel{\anisoa(\anadr)}}}\\
=&\ \restrict{\heapcomput{\sigma}}{\validof{\sigma}\setminus\set{\psel{\anisoa(\anadr)}}}\\
\text{Equation~\eqref{Equation:AsgnInvalidFormSigma}}=&\ \restrict{\heapcomput{\sigma'}}{\validof{\sigma'}}.
\end{align*}
\qed
\end{proof}
\begin{proof}[of Theorem~\ref{Theorem:CheckPRF}]
The implication from left to right is due to the fact that $\mmsem{\aprog}\supseteq \gcsem{\aprog}$.
For the reverse implication, we assume $\mmsem{\aprog}$ has a pointer race and from this construction a pointer race in $\gcsem{\aprog}$.
If the memory-managed semantics has a pointer race, then it has a shortest one.
Let it be $\tau.\anact\in\mmsem{\aprog}$ with $\anact$ an access to an invalid pointer.
We remove $\anact$ and obtain $\tau\in\mmsem{\aprog}$.
Membership holds as the memory-managed semantics is prefix-closed.
As $\tau$ is shorter than $\tau.\anact$, it is PRF.
This allows us to apply Proposition~\ref{Proposition:PRFImpliesGC}: There is a computation $\sigma\in\gcsem{\aprog}$ in the garbage-collected semantics with $\sigma\heapequiv \tau$.
Note that $\sigma$ is again PRF by minimality of $\tau.\anact$ and $\sigma\in \mmsem{\aprog}$.

By definition of heap equivalence, the commands in $\tau$ and $\sigma$ coincide.
This means they lead to the same control location. 
So the two semantics are, up to enabledness, ready to execute the same next command.
We moreover have $\restrict{\heapcomput{\tau}}{\validof{\tau}}\heapiso\restrict{\heapcomput{\sigma}}{\validof{\sigma}}$.
We now show how to mimic $\anact$ in the garbage-collected semantics in a way that also raises a pointer race.\\[0.2cm]
\bfemph{Case Free}\quad
Let $\anact=(\athread, \freeof{\apvar}, \emptyset)$.
Since $\tau.\anact$ is a pointer race, we have $\apvar\notin \validof{\tau}$.
Since $\anisoe$ defines a bijection between the pointers in 
$\restrict{\heapcomput{\tau}}{\validof{\tau}}$ and in 
$\restrict{\heapcomput{\sigma}}{\validof{\sigma}}$, we conclude that $\apvar\notin \validof{\sigma}$.
This means computation
\begin{align*}
\sigma.(\athread, \freeof{\apvar}, \emptyset)\in\gcsem{\aprog}\ .
\end{align*}
is also a pointer race --- as required.
\\[0.2cm]
\bfemph{Case Assignment}\quad
Let $\anact=(\athread, \acom, \anup)$ where $\acom$ is $\apvarp:=\psel{\apvar}$ with $\apvar\notin \validof{\tau}$.
As before, we conclude that $\apvar\notin\validof{\sigma}$.
With Lemma~\ref{Lemma:GCPRF}(i), we obtain $\heapcomputof{\sigma}{\apvar}=\anadr\neq \segval$.
This means we are able to dereference the address and 
can execute command $\acom$ in the garbage-collected semantics:
\begin{align*}
\sigma.(\athread, \acom, \anup')\in \gcsem{\aprog}\ .
\end{align*}
The update may differ due to the use of invalid pointers.
However, the computation will again use $\psel{\apvar}$ and, since $\apvar\notin\validof{\sigma}$, will again be racy.\\[0.2cm]
\bfemph{Case Assertion}\quad
Let $\anact=(\athread, \assert\ \acond, \emptyset)$ where $\acond$ contains $\apvar$ with $\apvar\notin\validof{\tau}$. 
As before, we derive $\apvar\notin \validof{\sigma}$. 
We are not guaranteed that the valuations in $\heapcomput{\sigma}$ and in $\heapcomput{\tau}$ coincide.
The definition of programs, however, ensures assert commands have complements.
This means if $\assert\ \acond$ is not enabled after $\sigma$, then $\assert\ \neg \acond$ will be ready for execution. 
We thus have
\begin{align*}
\sigma.(\athread, \assert\ (\neg)\acond, \emptyset)\in \gcsem{\aprog}\ .
\end{align*}
Since condition $\acond$ coincides, it will again make use of $\apvar$ with $\apvar\notin\validof{\sigma}$.
This means the computation is again racy.\qed
\end{proof}
\begin{proof}[of Lemma~\ref{Lemma:GCValid}]
If $\anadr\in\freedof{\sigma}$, then the pointers to it cannot be valid by Lemma~\ref{Lemma:Freed}.
Assume that $\anadr\notin\freedof{\sigma}$.
In the presence of garbage collection, the set $\freedof{\sigma}$ monotonically increases as $\sigma$ gets longer. 
This means $\anadr$ has not been freed throughout the computation.
We now show that there cannot be an invalid pointer to $\anadr$. 
There are two ways of creating an invalid pointer to  $\anadr$: 
Either by assigning it an invalid pointer $\apexp$ or by freeing the address. 
In particular would the first (in a shortest prefix) invalid $\apexp$ have to stem from a free on $\anadr$.
Since the address has never seen a free, there is no invalid pointer to it.\qed
\end{proof}
Proposition~\ref{Proposition:Instrumentation} follows from the following characterization of PRF under garbage collection.

\begin{lemma}\label{Lemma:NormalFormRaces}
$\gcsem{\aprog}$ is PRF if and only if there is no $\sigma_1.\anact_1.\sigma_2.\anact_2\in \gcsem{\aprog}$ so that $\comof{\anact_1}$ is $\freeof{\apvar}$ with $\heapcomputof{\sigma_1}{\apvar}=\anadr\neq \segval$ and 
$\comof{\anact_2}$ involves $\psel{\apvarp}, \dsel{\apvarp}$, $\freeof{\apvarp}$, or is an assertion with $\apvarp$, and $\heapcomputof{\sigma_1.\anact_1.\sigma_2}{\apvarp}=\anadr$.
\end{lemma}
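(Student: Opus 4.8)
The plan is to prove the two implications by contraposition, relying on the two garbage-collection-specific facts already at hand: under $\gcsem{\aprog}$ the invalid pointers are exactly the pointers to freed cells (Lemma~\ref{Lemma:GCValid}), and in a PRF garbage-collected computation an invalid pointer \emph{variable} never holds $\segval$ (Lemma~\ref{Lemma:GCPRF}(i)). I will also use that $\freedof{\cdot}$ is monotone non-decreasing under garbage collection (Rule~(Malloc1) only allocates addresses not in use, so it never removes an address from $\freedof{\cdot}$; this is the observation already made in the proof of Lemma~\ref{Lemma:GCValid}) and that $\gcsem{\aprog}$ is prefix-closed, which is immediate from its inductive definition.

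For the direction ``not in normal form $\Rightarrow$ not PRF'', suppose $\sigma_1.\anact_1.\sigma_2.\anact_2\in\gcsem{\aprog}$ with $\comof{\anact_1}$ equal to $\freeof{\apvar}$, $\heapcomputof{\sigma_1}{\apvar}=\anadr\neq\segval$, and $\comof{\anact_2}$ involving $\psel{\apvarp}$, $\dsel{\apvarp}$, $\freeof{\apvarp}$, or an assertion with $\apvarp$, and $\heapcomputof{\sigma_1.\anact_1.\sigma_2}{\apvarp}=\anadr$. Put $\tau:=\sigma_1.\anact_1.\sigma_2$, so $\tau\in\gcsem{\aprog}$ by prefix-closure. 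If $\tau$ already contains a PR we are done. Otherwise $\tau$ is PRF. The free $\anact_1$ adds $\anadr$ to $\freedof{\cdot}$, so $\anadr\in\freedof{\sigma_1.\anact_1}$ and, by monotonicity, $\anadr\in\freedof{\tau}$. Since $\heapcomputof{\tau}{\apvarp}=\anadr\neq\segval$, Lemma~\ref{Lemma:GCValid} gives $\apvarp\notin\validof{\tau}$. Hence $\tau.\anact_2$ is a command (or assertion) accessing the invalid pointer $\apvarp$, i.e.\ a PR by Definition~\ref{Definition:PRF}, so $\gcsem{\aprog}$ is not PRF.

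For the converse, ``not PRF $\Rightarrow$ not in normal form'', assume $\gcsem{\aprog}$ contains a PR and take a shortest one, written $\tau.\anact$ with $\anact$ raising the PR. By minimality every proper prefix is PR-free, so $\tau$ is PRF (and $\tau\in\gcsem{\aprog}$ by prefix-closure). By Definition~\ref{Definition:PRF}, $\anact$ involves $\psel{\apvarp}$, $\dsel{\apvarp}$, $\freeof{\apvarp}$, or is an assertion with $\apvarp$, for some $\apvarp\notin\validof{\tau}$. Since $\tau$ is PRF, Lemma~\ref{Lemma:GCPRF}(i) gives $\heapcomputof{\tau}{\apvarp}=\anadr\neq\segval$, and then Lemma~\ref{Lemma:GCValid} yields $\anadr\in\freedof{\tau}$. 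By the inductive definition of $\freedof{\cdot}$, some action of $\tau$ has the update that adds $\anadr$, i.e.\ is of the form $(\cdot,\freeof{\apvar},\cdot)$ with the current value of $\apvar$ equal to $\anadr\neq\segval$; splitting $\tau$ at such an action gives $\tau=\sigma_1.\anact_1.\sigma_2$ with $\comof{\anact_1}=\freeof{\apvar}$ and $\heapcomputof{\sigma_1}{\apvar}=\anadr\neq\segval$. Finally $\heapcomputof{\sigma_1.\anact_1.\sigma_2}{\apvarp}=\heapcomputof{\tau}{\apvarp}=\anadr$, so $\sigma_1.\anact_1.\sigma_2.\anact$ is exactly the forbidden pattern.

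The main obstacle is bookkeeping-heavy rather than deep: in the converse direction one must be careful that the \emph{prefix} $\tau$ of the shortest PR is PRF (so that Lemmas~\ref{Lemma:GCPRF} and~\ref{Lemma:GCValid} are applicable), and one must extract the freeing action $\anact_1$ from mere membership $\anadr\in\freedof{\tau}$, checking that it really is a $\freeof{\apvar}$ with $\apvar$ pointing to $\anadr$ at that moment --- which is precisely the clause of the definition of $\freedof{\cdot}$ that can insert $\anadr$; the $\segval$-case of assertions and the enabledness (no-segfault) side conditions also need a quick word, both handled by Lemma~\ref{Lemma:GCPRF}(i). Given this lemma, Proposition~\ref{Proposition:Instrumentation} follows directly: the instrumented program $\aprog'$ tags exactly the addresses entering $\freedof{\cdot}$ and flags exactly the accesses described for $\anact_2$, so $\gcsem{\aprog'}$ reaches its goal state iff a computation of the forbidden shape exists, iff $\gcsem{\aprog}$ is not PRF.
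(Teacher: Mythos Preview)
Your proof is correct and follows essentially the same route as the paper's: both directions are argued by contraposition, the forward direction uses monotonicity of $\freedof{\cdot}$ under garbage collection together with Lemma~\ref{Lemma:GCValid}, and the backward direction takes a shortest PR, applies Lemma~\ref{Lemma:GCPRF}(i) to the PRF prefix to get $\anadr\neq\segval$, and then Lemma~\ref{Lemma:GCValid} to locate a prior free of $\anadr$. One small remark: your case split ``if $\tau$ already contains a PR we are done, otherwise $\tau$ is PRF'' in the first direction is harmless but unnecessary, since Lemma~\ref{Lemma:GCValid} does not assume PRF of its input computation.
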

\begin{proof}[of Lemma~\ref{Lemma:NormalFormRaces}]
For the only-if, we show the contrapositive. 
Note that $\anadr\in\freedof{\sigma_1.\anact_1.\sigma_2}$ by monotonicity of $\freedof{\sigma}$ under garbage collection. 
With $\heapcomputof{\sigma_1.\anact_1.\sigma_2}{\apvarp}=\anadr$ and Lemma~\ref{Lemma:GCValid}, we conclude $\apvarp\notin\validof{\sigma_1.\anact_1.\sigma_2}$.
Moreover, pointer variable $\apvarp$ is used in a way that raises a pointer race.

For the if-direction, we again reason by contraposition. Consider a shortest pointer race $\sigma.\anact\in\gcsem{\aprog}$.
Then there is an invalid pointer $\apvarp\notin\validof{\sigma}$ that is used in $\anact$ in a way that raises a pointer race.
Since computation $\sigma$ is shorter than $\sigma.\anact$, it is PRF. By Lemma~\ref{Lemma:GCPRF}(i), we have $\heapcomputof{\sigma}{\apvarp}=\anadr\neq \segval$.
By Lemma~\ref{Lemma:GCValid}, we conclude that $\anadr$ has been freed somewhere in $\sigma$.\qed
\end{proof}


\section{Proofs in Section~\ref{Section:SPRF}}\label{Appendix:CheckSPRF}

\begin{proof}[of Lemma~\ref{Lemma:OwnImpliesLocal} (Sketch)]
We show the contrapositive and assume that $\heapcomputof{\tau}{\apvar}\in\ownedof{\athread}{\tau}$ but (i) $\apvar\in\shared$ or (ii) $\apvar\in\localof{\athread'}$ with $\athread'\neq\athread$.
From this we derive $\apvar\notin\validof{\tau}$. 
Consider Case~(i).  
If the owning thread had passed the address via a valid pointer to $\apvar$ (potentially transitively via other public pointers, but we refrain from doing this case distinction), then $\athread$ would have lost ownership of the cell. 
As a consequence, either (i.i) $\athread$ never passed the address to $\apvar$ or (i.ii) it did so via an invalid pointer or an invalid next selector. 
In the former Case~(i.i), $\apvar$ is a dangling pointer to a cell that has been re-allocated, which in particular means $\apvar\notin\validof{\tau}$.
In the latter Case~(i.ii), the invalid right-hand side $\apt$ of the assignment $\apvar:=\apt$ will have rendered $\apvar$ invalid.
Consider now Case~(ii).
We note that threads do not assign their local pointers to the local pointers of other threads.
Therefore, the only way $\athread'$ could point to an owned cell of $\athread$ is by (ii.i) being a dangling pointer or (ii.ii) having received the reference from a shared pointer.
In the former Case~(ii.i), we immediately have $\apvar\notin\validof{\tau}$ like in Case~(i.i). 
In the latter Case~(ii.ii), the argumentation from Case~(i) shows that the shared pointer has to be invalid.
As a consequence, also $\apvar$ that receives the content of the shared pointer becomes invalid.\qed
\end{proof}

\begin{proof}[of Lemma~\ref{Lemma:OwnershipViolationImpliesSPR}]
If $\tau.(\athread, \acom, \anact)\in\mmsem{\aprog}$ violates ownership, then $\acom$ is an assignment as in Definition~\ref{Definition:OwnershipViolation}.
Let the variable be $\apvarp$ with $\heapcomputof{\tau}{\apvarp}\in\ownedof{\athread'}{\tau}$ and ($\athread'\neq \athread$ or $\apvarp\in\shared$). 
By Lemma~\ref{Lemma:OwnImpliesLocal}, the pointer cannot be valid, $\apvarp\notin\validof{\tau}$. 
Combined, we obtain the definition of SPR, Definition~\ref{Definition:SPRF}.(i).\qed
\end{proof}

\begin{proof}[of Theorem~\ref{Theorem:SPRFGuarantee}]
The ownership-respecting semantics $\resmmsem{\aprog}$ is a subset of the memory-managed semantics $\mmsem{\aprog}$, so the inclusion from right to left holds without precondition.
For the reverse inclusion, assume $\mmsem{\aprog}$ is not included in $\resmmsem{\aprog}$.
Then there is $\tau\in\mmsem{\aprog}$ that violates ownership.
By Lemma~\ref{Lemma:OwnershipViolationImpliesSPR}, $\tau$ is an SPR.
This contradicts $\mmsem{\aprog}$ SPRF.
\qed
\end{proof}

\begin{proof}[of Lemma~\ref{Lemma:FreshOwn}]
We proceed by induction on the length of the computation.
In the base case, we have single actions that set data variables arbitrarily.
We mimic them identically.
In the induction step, consider the SPRF computation $\tau.\anact\in\resmmsem{\aprog}$ and assume we are given $O\subseteq \ownpof{\tau.\anact}$.
We invoke the induction hypothesis depending on $\anact$.
Since we will always execute the same command, Requirement~(1) will trivially hold and we rephrain from commenting on it.
\\[0.2cm]
\bfemph{Case (Malloc2)}\quad
Consider $\anact = (\athread, \apvar:=\malloc, [\apvar\mapsto\anadr])$, which means we re-allocate an address that has been freed.  
With this assignment, address $\anadr$ is owned by thread $\athread$ and $\apvar$ is an owning pointer, $\anadr\in \ownedof{\athread}{\tau.\anact}$ and $\ownpof{\tau.\anact}=\ownpof{\tau}\cup\set{\apvar}$.
Since we only turn $\apvar$ into an owning pointer, we can safely use $O':=O\setminus\set{\apvar}\subseteq\ownpof{\tau}$ to invoke the induction hypothesis for $\tau$.
The hypothesis returns a computation $\tau'$.
Since the sequences of commands coindice for $\tau$ and for $\tau'$, the threads reach the same control locations and hence also in $\tau'$ thread $\athread$ is ready to execute a malloc.
The result of the allocation will depend on whether or not $\apvar$ belongs to the given set $O$:
\begin{description}
\item[$\apvar\notin O$] 
We again allocate the address $\anadr$ with (Malloc2). 
The transition is enabled after $\tau'$ by $\anadr\in\freedof{\tau}\subseteq\freedof{\tau'}$.
It remains to check (2) to (6).
Concerning Requirement~(2), the only pointer outside $O$ that we change is $\apvar$, and we set it consistently to $\anadr$ in both $\tau.\anact$ and in $\tau'.\anact$.
The isomorphism in Requirement~(3) is also fine by
\begin{align*}
\project{\heapcomput{\tau.\anact}}{\validof{\tau.\anact}}
&=\project{(\heapcomput{\tau}[\apvar\mapsto\anadr])}{\validof{\tau}\cup\set{\apvar}}\\
&=(\project{\heapcomput{\tau}}{\validof{\tau}})[\apvar\mapsto\anadr,\dsel{\anadr}\mapsto\advalue]\\
&\heapiso(\project{\heapcomput{\tau'}}{\validof{\tau'}})[\apvar\mapsto\anadr,\dsel{\anadr}\mapsto\advalue]\\
&=\project{(\heapcomput{\tau'}[\apvar\mapsto\anadr])}{\validof{\tau'}\cup\set{\apvar}}\\
&=\project{\heapcomput{\tau'.\anact}}{\validof{\tau'.\anact}}
\end{align*}
where we preserve $\dsel{\anadr}$ as in Proposition~\ref{Proposition:PRFImpliesGC}.
For Requirement~(4), we consistently remove $\anadr$ from the set of freed addresses in $\tau.\anact$ and in $\tau'.\anact$.
For the owning pointers in Requirement~(5) we have
\begin{align*}
\ownpof{\tau'.\anact}
=&\ \ownpof{\tau'}\cup\set{\apvar}\\
=&\ (\ownpof{\tau}\setminus O')\ \cup\ \fune(O')\cup\set{\apvar}\\
=&\ ((\ownpof{\tau}\cup\set{\apvar})\setminus O)\ \cup\ \fune(O)\\
=&\ (\ownpof{\tau.\anact}\setminus O)\ \cup\ \fune(O).
\end{align*}
The first equation is by definition, the second is the hypothesis for $\tau$ and $O'$, the third equation holds by $O'=O$ as $\apvar$ is assumed not to belong to $O$, and the last equation is again by definition. 
Concerning freshness of the owning pointers, Requirement~(6),
we note $O'=O$. 
This gives 
\begin{align*}
	 	\heapcomputof{\tau'.\anact}{\fune(O)}
	=	\heapcomput{\tau'}[\apvar\mapsto\anadr](\fune(O'))
	=	\heapcomputof{\tau'}{\fune(O')}.
\end{align*}
The latter equality is because $\apvar\notin O'$ and hence $\apvar\notin\fune(O')$.
We have $\apvar\notin\fune(O')$ as only $\fune(\apvar)=\apvar$. 
Moreover, $\adrof{\heapcomput{\tau.\anact}}=\adrof{\heapcomput{\tau}}\cup\set{\anadr}=\adrof{\heapcomput{\tau}}$.
The latter equality holds because $\dsel{\anadr}$ is defined in $\heapcomput{\tau}$ and we re-allocate the address.
The hypothesis, $\adrof{\heapcomput{\tau}}\cap\heapcomputof{\tau'}{\fune(O')}=\emptyset$, combined with the previous argumentation yields $\adrof{\heapcomput{\tau.\anact}}\cap\heapcomputof{\tau'.\anact}{\fune(O)}=\emptyset$.
\item[$\apvar\in O$] 
We allocate a fresh address using (Malloc1) and $\anact'=(\athread, \apvar:=\malloc, \anup)$ with $\anup=[\apvar\mapsto \anadrp, \dsel{\anadrp}\mapsto\advalue,\setcond{\psel{\anadrp}\mapsto\segval}{\text{for every selector $\mathtt{next}$}}]$. 
Requirement~(2) holds as we only change a pointer in $O$.
Requirement~(3) is as in the proof of Proposition~\ref{Proposition:PRFImpliesGC}.
Requirement~(4) holds by
\begin{align*}
\freedof{\tau.\anact}\ =\ \freedof{\tau}\setminus\set{\anadr}\ \subseteq\ \freedof{\tau'}\ =\ \freedof{\tau'.\anact'},
\end{align*}
where the inclusion is due to the hypothesis.

Concerning the address mapping, we set it to $\funa\disunion\set{\anadr\mapsto\anadrp}$.
It remains to check that this is a function, which means $\anadr\notin\domof{\funa}=\adrof{O}$. 
We have $\anadr\in\freedof{\tau}$, hence there is no valid pointer to this address and no valid next selector defined at this address by Lemma~\ref{Lemma:Freed}. 
Since all pointers in $O$ are valid, the claim follows. 
Actually, Lemma~\ref{Lemma:Freed} assumes the computation to be PRF, but an inspection of the proof shows that it continues to hold for SPRF computations.

For Requirement~(5), we have
\begin{align*}
\ownpof{\tau'.\anact'}&=\ownpof{\tau'}\cup\set{\apvar}\\
&=(\ownpof{\tau}\setminus O')\cup \fune(O')\cup\set{\apvar}\\
&=(\ownpof{\tau}\setminus O')\cup \fune(O)\\
&=(\ownpof{\tau.\anact}\setminus O)\cup \fune(O).
\end{align*}
The first equation is by definition, the second invokes the hypothesis for $\tau$ and $O'$.
The third equation uses the fact that $\fune(\apvar)=\apvar$.
In the last equation, we add $\apvar$ to $\ownpof{\tau}$ and $O'$. 
For Requirement~(6), $
\adrof{\heapcomput{\tau.\anact}}=\adrof{\heapcomput{\tau}}\cup\set{\anadr}=\adrof{\heapcomput{\tau}}$. 
The latter equation holds because $\anadr$ is re-allocated and thus $\dsel{\anadr}$ is defined in $\heapcomput{\tau}$. 
Moreover, we have 
\begin{align*}
	 	\heapcomputof{\tau'.\anact'}{\fune(O)}
	=	\heapcomput{\tau'}[\apvar\mapsto\anadrp](\fune(O))
	=	\heapcomput{\tau'}(\fune(O'))\cup\set{\anadrp}.
\end{align*}
An application of the induction hypothesis gives $\adrof{\heapcomput{\tau}}\cap\heapcomputof{\tau'}{\fune(O')}=\emptyset$.
Since $\anadrp$ is fresh, the required disjointness $\adrof{\heapcomput{\tau.\anact}}\cap\heapcomputof{\tau'.\anact}{\fune(O)}=\emptyset$ holds.
\end{description}
\bfemph{Case (Malloc1)}\quad
Consider the case that $\apvar$ allocates a fresh address $\anadr$ using (Malloc1).
Like in the previous case, we invoke the induction hypothesis for $\tau$ with $O':=O\setminus\set{\apvar}$ and obtain $\tau'$. 
We can safely assume that $\anadr$ has not been allocated in $\tau'$. 
Indeed, in cases where $\tau'$  deviates from $\tau$,  it can allocate fresh addresses different from $\anadr$.
We mimic the allocation of $\apvar$ in $\tau'$ with another invocation to (Malloc1).
Depending on whether $\apvar\notin O$ or $\apvar\in O$, the mimicking allocation also selects $\anadr$ or it selects a fresh $\anadrp$, respectively.
If $\apvar\in O$, the address mapping is extended by $\anadr\mapsto\anadrp$. 
The Requirements~(2) to (4) are immediate. 
Requirement~(5) is checked like in the previous case.
Requirement~(6) is by the induction hypothesis and the fact that $\anadrp$ is chosen fresh. \\[0.2cm]
\bfemph{Case (Free)}\quad
Consider $\anact = (\athread, \freeof{\apvar}, \emptyset)$.
We note that $O\subseteq \ownpof{\tau}$ since $\ownpof{\tau.\anact}=\ownpof{\tau}\setminus\invalidof{\heapcomputof{\tau}{\apvar}}$.
This allows us to invoke the hypothesis and obtain the computation $\tau'$.
After $\tau'$ we are ready to execute the same action $\anact$ as in the computation $\tau$.
To establish $\tau'.\anact\in \resmmsem{\aprog}$, we have to show that the computation respects ownership.
Towards a contradiction, assume $\heapcomputof{\tau'}{\apvar}\in \ownedof{\tau'}{\athread'}$ with $\athread'\neq \athread$ or $\apvar$ shared.
With Lemma~\ref{Lemma:OwnImpliesLocal}, we have that $\apvar$ cannot be valid in $\tau'$.
Since the valid pointers in $\tau$ and $\tau'$ coincide by Requirement~(3) in the induction hypothesis, we have that $\apvar$ was not valid in $\tau$. 
But this in turn means that action $\anact$ frees an invalid pointer in $\tau$, which raises an SPR.
A contradiction to the assumption that $\tau.\anact$ is SPRF.

It remains to check the guarantees (2) to (6) required by the induction.
For Requirement~(2) note that $\freeof{\apvar}$ does not execute any updates.
Hence $\heapcomputof{\tau.\anact}=\heapcomputof{\tau}$ and similarly for $\tau'$.
Hence it remains to apply the induction hypothesis as follows:
\begin{align*}
	  	\restrict{\heapcomput{\tau.act}}{\pexp\setminus O}
	&=	\restrict{\heapcomput{\tau}}{\pexp\setminus O} \\
	&=	\restrict{\heapcomput{\tau'}}{\pexp\setminus\fune(O)} \\
	&=	\restrict{\heapcomput{\tau'.act}}{\pexp\setminus\fune(O)}.		
\end{align*}
Requirement~(3) is like in Proposition~\ref{Proposition:PRFImpliesGC}.
For Requirement~(4), note that the induction hypothesis guarantees $\heapcomputof{\tau}{\apvar}=\heapcomputof{\tau'}{\apvar}$ because $\apvar\notin O$. 
Moreover, the hypothesis gives $\freedof{\tau}\subseteq\freedof{\tau'}$.
Together, this yields
\begin{align*}
\freedof{\tau.\anact}=\freedof{\tau}\cup\set{\heapcomputof{\tau}{\apvar}}\subseteq \freedof{\tau'}\cup\set{\heapcomputof{\tau'}{\apvar}}=\freedof{\tau'.\anact}.
\end{align*}
Concerning Property~(5), we note that
\begin{align*}
\ownpof{\tau'.\anact}
&=\ownpof{\tau'}\setminus\invalidof{\heapcomputof{\tau'}{\apvar}}\\
&=((\ownpof{\tau}\setminus O)\cup \fune(O))\setminus\invalidof{\heapcomputof{\tau'}{\apvar}}\\
&=((\ownpof{\tau}\setminus O)\cup \fune(O))\setminus\invalidof{\heapcomputof{\tau}{\apvar}}\\
&=((\ownpof{\tau}\setminus O)\setminus\invalidof{\heapcomputof{\tau}{\apvar}})\cup \fune(O)\\
&=((\ownpof{\tau}\setminus\invalidof{\heapcomputof{\tau}{\apvar}})\setminus O)\cup \fune(O)\\
&=(\ownpof{\tau.\anact}\setminus O)\cup \fune(O).
\end{align*}
The first equality is by the definition of owning pointer, more precisely, the requirement that they have to be valid. 
The second equation is the induction hypothesis for $\tau$ and $O$.

For the third equality, we show $\invalidof{\heapcomputof{\tau}{\apvar}}=\invalidof{\heapcomputof{\tau'}{\apvar}}$. 
We argue for $\subseteq$.
We have $\apvar\notin O$.
Since $O$ is chosen coherent, for every other pointer $\apexp=\apvarp$ or $\apexp=\psel{\anadr}$ to $\heapcomputof{\tau}{\apvar}$, we have $\apexp\notin O$.  
With the induction hypothesis, Requirement~(2), all these pointers $\apexp$ are mapped identically, 
\begin{align*}
\heapcomputof{\tau'}{\apexp}=\heapcomputof{\tau}{\apexp}=\heapcomputof{\tau}{\apvar}=\heapcomputof{\tau'}{\apvar}.
\end{align*}
So $\apexp\in\invalidof{\heapcomputof{\tau'}{\apvar}}$.  
For the next pointers $\psel{\heapcomputof{\tau}{\apvar}}$ that are turned invalid by the free in $\tau$, membership in $\invalidof{\heapcomputof{\tau'}{\apvar}}$ is by $\heapcomputof{\tau}{\apvar}=\heapcomputof{\tau'}{\apvar}$. 
For the reverse inclusion $\supseteq$, we argue towards a contradiction and assume that it does not hold.
Certainly, every next pointer $\psel{\heapcomputof{\tau'}{\apvar}}$ is also in $\invalidof{\heapcomputof{\tau}{\apvar}}$ by $\heapcomputof{\tau'}{\apvar}=\heapcomputof{\tau}{\apvar}$. 
So there is a pointer $\apexp$ to $\heapcomputof{\tau'}{\apvar}$ that is not in $\invalidof{\heapcomputof{\tau}{\apvar}}$.
Since by the hypothesis, Requirement~(2), the heaps coincide except for $O$ and $\fune(O)$, this other pointer has to belong to $\fune(O)$.
But then $\apexp$ cannot point to $\heapcomputof{\tau'}{\apvar}=\heapcomputof{\tau}{\apvar}$, because the address also belongs to $\adrof{\heapcomput{\tau}}$. 
This would violate Requirement~(6) in the induction hypothesis.

For the fourth equation, we argue that $\fune(O)\cap 
\invalidof{\heapcomputof{\tau}{\apvar}}=\emptyset$. 
Assume there was a pointer $\apexp\in\fune(O)$ with $\apexp\in \invalidof{\heapcomputof{\tau}{\apvar}}$.
Then $\apexp$ points to $\heapcomputof{\tau}{\apvar}$ or it has the shape $\psel{\heapcomputof{\tau}{\apvar}}$.
In both cases, $\apexp\notin O$ since $\apvar\notin O$ and $O$ is chosen coherent. 
In the former case, we get $\heapcomputof{\tau'}{\apexp}=\heapcomputof{\tau}{\apexp}$ by Requirement~(2) in the hypothesis.  
This contradicts the disjointness in Requirement~(6) in the hypothesis. 
In the latter case, we have $\heapcomputof{\tau'}{\apvar}=\heapcomputof{\tau}{\apvar}$, and hence $\heapcomputof{\tau}{\apvar}\in \heapcomputof{\tau'}{\fune(O)}$ also violates the disjointness in Requirement~(6) in the induction hypothesis.

The fifth equation is set theory. 
The last equation is the definition of the set of owning pointers.

Requirement~(6) follows from the induction hypothesis and the fact that the free does not change the heap.
\\[0.2cm]
\bfemph{Case (Asgn) owned}\quad
Consider an assignment $\anact=(\athread, \psel{\apvar}:=\apvarp, [\psel{\anadr}\mapsto \anadrp])$.
We consider the case that $\psel{\anadr}\in O$. 
To invoke the hypothesis, we choose the largest coherent subset $O'\subseteq O$ with $O'\subseteq \ownpof{\tau}$.
Let the resulting computation be $\tau'$.

As a first step, we argue that $\apvarp\in O'$.
We have $\ownpof{\tau.\anact}=\ownpof{\tau}\cup\set{\psel{\anadr}}$. 
Hence, $O'=O\setminus\set{\psel{\anadr}}$ if $\heapcomputof{\tau}{\psel{\anadr}}$ is not the target or source address of a pointer in $O$, and $O'=O$ otherwise.
We have $\apvarp\in O$ since $\heapcomputof{\tau.\anact}{\apvarp}=\heapcomputof{\tau.\anact}{\psel{\anadr}}$, $\psel{\anadr}\in O$, and $O$ is coherent.
Moreover, $\apvarp\in\ownpof{\tau}$, for otherwise $\psel{\anadr}$ would not have become an owning pointer to $\anadrp$ after the assignment.
Hence, $\apvarp\in O'$.

To mimic the command, note that by the induction hypothesis we are in the same control state and thus ready to execute the assigment. 
To make sure the assignment is enabled, we check that $\heapcomputof{\tau'}{\apvar}\neq \segval$.
This holds by the hypothesis, Requirements~(2) and (3).
Indeed, if $\apvar\notin O'$, then the address is mapped identically, and we get $\heapcomputof{\tau'}{\apvar}=\anadr=\heapcomputof{\tau}{\apvar}$.
If $\apvar\in O'$, then the address is mapped by $\funa$.
Since $\funa$ is an address mapping, it only maps $\segval$ to $\segval$, and hence $\anadr\neq\segval$ to $\heapcomputof{\tau'}{\apvar}=\funa(\anadr)\neq \segval$. 
With this argument, we obtain $\tau'.\anact'$ with $\anact':=(\athread, \psel{\apvar}:=\apvarp, [\heapcomputof{\tau'}{\apvar}\mapsto\funa(\anadrp)])$. 

To establish $\tau'.\anact'\in\resmmsem{\aprog}$, we have to show that the assignment in $\anact'$ respects ownership.
Towards a contradiction, assume $\heapcomputof{\tau'}{\apvar}\in \ownedof{\tau'}{\athread'}$ with $\athread'\neq \athread$ or $\apvar$ shared. 
By Lemma~\ref{Lemma:OwnImpliesLocal}, $\apvar$ cannot be valid in $\tau'$.
Since the valid pointers in $\tau'$ and in $\tau$ coincide, Requirement~(3) in the induction hypothesis, $\apvar$ cannot be valid in $\tau$.
By Definition~\ref{Definition:SPRF}, the assignment $\psel{\apvar}:=\apvarp$ would raise an SPR --- in contradiction to the assumption that $\tau.\anact$ is SPRF.

We now show that the new computation satisfies the Requirements~(2) to (6).
For Requirement~(2), we obtain equality by the induction hypothesis.
Indeed, we may only add $\psel{\anadr}$ to $O$ if it did not belong to $O'$ before the assignment.
Requirement~(3) is as in Proposition~\ref{Proposition:PRFImpliesGC}.
The freed addresses do not change by the assignment, therefore, for Requirement~(4) there is nothing to do.
For Requirement~(5), we note that 
\begin{align*}
\ownpof{\tau'.\anact'}&=\ownpof{\tau'}\cup\set{\psel{\heapcomputof{\tau'}{\apvar}}}\\
&=(\ownpof{\tau}\setminus O')\cup \fune(O')\cup\set{\psel{\heapcomputof{\tau'}{\apvar}}}.
\end{align*}
There are two cases. 
Assume $O'=O$ or, phrased differently, 
$\psel{\heapcomputof{\tau}{\apvar}}\in O'$.
Then we have $\psel{\heapcomputof{\tau'}{\apvar}}\in \fune(O')$
by Requirement~(2) in the hypothesis. Thus,
\begin{align*}
&(\ownpof{\tau}\setminus O')\cup \fune(O')\cup\set{\psel{\heapcomputof{\tau'}{\apvar}}}\\
=&((\ownpof{\tau}\cup\set{\psel{\heapcomputof{\tau}{\apvar}}})\setminus (O'\cup\set{\psel{\heapcomputof{\tau}{\apvar}}}))\ \cup\ \fune(O')\\
=&(\ownpof{\tau.\anact}\setminus O)\cup \fune(O).
\end{align*}
The first equation is set theory and the fact that $\psel{\heapcomputof{\tau'}{\apvar}}\in \fune(O')$.
The second equation is by definition of $\ownpof{\tau.\anact}$, the fact that $\psel{\heapcomputof{\tau}{\apvar}}\in O'$, and by $O'=O$. 

Assume $O'=O\setminus\set{\psel{\heapcomputof{\tau}{\apvar}}}$.
For the induction step, we update $\funa$ by mapping $\heapcomputof{\tau}{\apvar}$ identically.  
We have $\heapcomputof{\tau'}{\apvar}=\heapcomputof{\tau}{\apvar}$, for otherwise $\apvar\in O'$ and by coherence $\psel{\heapcomputof{\tau}{\apvar}}\in O'$.
From this, we derive
\begin{align*}
&(\ownpof{\tau}\setminus O')\cup \fune(O')\cup\set{\psel{\heapcomputof{\tau'}{\apvar}}}\\
=&((\ownpof{\tau}\cup\set{\psel{\heapcomputof{\tau}{\apvar}}})\setminus (O'\cup \set{\psel{\heapcomputof{\tau}{\apvar}}}))\cup \fune(O')\cup\set{\psel{\heapcomputof{\tau'}{\apvar}}}\\
=&(\ownpof{\tau.\anact}\setminus O)\cup \fune(O')\cup\set{\psel{\heapcomputof{\tau'}{\apvar}}}\\
=&(\ownpof{\tau.\anact}\setminus O)\cup \fune(O).
\end{align*}
We have $\fune(O')\cup\set{\psel{\heapcomputof{\tau'}{\apvar}}}=
\fune(O')\cup\set{\psel{\heapcomputof{\tau}{\apvar}}}=\fune(O)$
by the choice of the address function.
Requirement~(6) is easier to check.
%
\\[0.2cm]
\bfemph{Case (Asgn) not owned}\quad
Consider an assignment $\anact=(\athread, \psel{\apvar}:=\apvarp, [\psel{\anadr}\mapsto \anadrp])$.
Assume $\psel{\apvar}\not\in O$.
Since $O$ is coherent for $\tau.act$, we also have $\apvarp\not\in O$.
We invoke the induction hypothesis for $\tau$ with $O$.
We can do so since $O$ is also coherent for $\tau$ by the definition of owned addresses.
The hypothesis yields some $\tau'\in\resmmsem{\aprog}$ and some address mapping $\funa$ satisfying Requirements~(1) to~(6).

First, we argue that $\anact$ is enabled after $\tau'$.
Therefore, we have to show that $\heapcomputof{\tau'}{\apvarp}\not=\segval$.
The argument is as in the previous case ((Asgn) owned).
We can now establish Requirements~(2) to (6).

Requirement~(2) follows from $\psel{\apvar}\not\in O$, $\apvarp\not\in O$,
$\fune({\psel{\apvar}})\not\in\fune(O)$, $\fune({\apvarp})\not\in\fune(O)$ and the induction hypothesis.
Hence, we can derive the following:
\begin{align*}
	  	\restrict{\heapcomput{\tau.act}}{\pexp\setminus O}
	&=	\restrict{\heapcomput{\tau}[\psel{\anadr}\mapsto \anadrp]}{\pexp\setminus O} \\
	&=	\restrict{\heapcomput{\tau}}{\pexp\setminus O}[\psel{\anadr}\mapsto \anadrp] \\
	&=	\restrict{\heapcomput{\tau'}}{\pexp\setminus\fune(O)}[\psel{\anadr}\mapsto \anadrp] \\
	&=	\restrict{\heapcomput{\tau'}[\psel{\anadr}\mapsto \anadrp]}{\pexp\setminus\fune(O)} \\
	&=	\restrict{\heapcomput{\tau'.act}}{\pexp\setminus\fune(O)}.
\end{align*}
Requirement~(3) is as in Proposition~\ref{Proposition:PRFImpliesGC}.
The freed addresses do not change by the assignment, therefore, Requirement~(4) holds, too.
For Requirement~(5), we observe that $\ownpof{\tau.\anact}=\ownpof{\tau}\cup X$ with $X=\ownpof{\tau.\anact}\setminus\ownpof{\tau}$, by the definition of ownership.
The analogue holds for $\ownpof{\tau'.\anact}$.
Moreover, $X\cap O=\emptyset$.
Hence, we can establish Requirement~(5) as follows:
\begin{align*}
	  	\ownpof{\tau'.\anact}
	&=	\ownpof{\tau'} \cup X \\
	&=	(\ownpof{\tau}\setminus O)\cup\fune(O)\cup X \\
	&=	((\ownpof{\tau}\cup X)\setminus O)\cup\fune(O) \\
	&=	(\ownpof{\tau.\anact}\setminus O)\cup\fune(O).
\end{align*}
Requirement~(6) follows from the fact that $\psel{\apvar}\not\in\fune(O)$.
\\[0.2cm]
\bfemph{Case (Asrt)}\quad
Consider $\anact = (\athread, \assert\ \apvar=\apvarp, \emptyset)$ with $\heapcomputof{\tau}{\apvar}=\anadr$.
By enabledness we have $\anadr=\anadrp \vee \anadr=\segval \vee \anadrp=\segval$.
We invoke the induction hypothesis on $\tau$ with $O$.
This yields some $\tau'\in\resmmsem{\aprog}$ and some address mapping $\funa$ satisfying Requirements~(1) to~(6).

We now prove that $\anact$ is enabled in $\tau'$, i.e. $\tau'.\anact\in\resmmsem{\aprog}$.
Towards a contradiction, assume that $\anact$ is not enabled in $\tau'$.
Therefore, let $\heapcomputof{\tau'}{\apvar}=\anadr'$ and $\heapcomputof{\tau'}{\apvarp}=\anadrp'$.
Since $\anact$ is not enabled, we have $\anadr'\not=\anadrp'$ with $\anadr'\not=\segval$ and $\anadrp'\not=\segval$.
Moreover, we can conclude that $\apvar,\apvarp\not\in\validof{\tau'}$ since $\restrict{\heapcomput{\tau}}{\validof{\tau}}\heapiso\restrict{\heapcomput{\tau'}}{\validof{\tau'}}$ by induction hypothesis.
From this we get $\apvar,\apvarp\not\in O$ as $O$ may only contain valid pointers by definition.
Hence, we come up with the following equalities due to Requirement~(2):
\begin{align*}
	 	a'
	=	\restrict{\heapcomput{\tau'}}{\pexp\setminus \fune(O)}(\apvar)
	=	\restrict{\heapcomput{\tau}}{\pexp\setminus O}(\apvar)
	=	a, \\
	 	b'
	=	\restrict{\heapcomput{\tau'}}{\pexp\setminus \fune(O)}(\apvarp)
	=	\restrict{\heapcomput{\tau}}{\pexp\setminus O}(\apvarp)
	=	b. \\
\end{align*}
Since $\segval\not=\anadr'\not=\anadrp'\not=\segval$ by assumption, we conclude $\segval\not=\anadr\not=\anadrp\not=\segval$.
This contradicts enabledness of $\anact$ in $\tau$.
Hence, we have proven that $\anact$ is indeed enabled in~$\tau'$.

It remains to establish Requirements~(2) to (6).
For Requirement~(5), consider the two sets $X$ and $X'$ which contain exactly those pointer expressions which $\athread$ loses ownership of by executing the assertion in $\tau$ and $\tau'$, respectively.
Formally, $X$ and $X'$ are defined as
\begin{align*}
	X  := \ownpof{\tau} \setminus \ownpof{\tau.\anact} &&
	X' := \ownpof{\tau'} \setminus \ownpof{\tau'.\anact}.
\end{align*}
By the definition of owning pointers, this is equivalent to
\begin{align*}
	X  &= \{\, \apexp ~|~ \apexp\in\validof{\tau} \wedge \heapcomput{\tau}(\apexp)=\heapcomput{\tau}(\apvar)\\&\qquad\qquad~ \wedge \heapcomput{\tau}(\apvar)\in\ownpof{\tau} \wedge \heapcomput{\tau}(\apvar)\not\in\ownpof{\tau.\anact} \,\}, \\
	X' &= \{\, \apexp ~|~ \apexp\in\validof{\tau'} \wedge \heapcomput{\tau'}(\apexp)=\heapcomput{\tau'}(\apvar)\\&\qquad\qquad~ \wedge \heapcomput{\tau'}(\apvar)\in\ownpof{\tau'} \wedge \heapcomput{\tau'}(\apvar)\not\in\ownpof{\tau'.\anact} \,\}.
\end{align*}
Now, we can easily state that both $X\cap O=\emptyset$ and $X'\cap O=\emptyset$ hold.
Hence, Requirement~(2) from the induction hypothesis gives us $\heapcomput{\tau}(\apvar)=\heapcomput{\tau'}(\apvar)$.
This ultimately implies that $X=X'$.
With this equality at hand, we can now establish Requirement~(5) as follows:
\begin{align*}
	  	\ownpof{\tau'.\anact}
	&=	\ownpof{\tau'} \setminus X' \\
	&=	\ownpof{\tau'} \setminus X \\
	&=	((\ownpof{\tau}\setminus O) \cup \fune(O)) \setminus X \\
	&=	((\ownpof{\tau}\setminus X)\setminus O) \cup \fune(O) \\
	&=	(\ownpof{\tau.\anact}\setminus O) \cup \fune(O)
\end{align*}

For the remaining Requirements note that $\heapcomput{\tau.\anact}=\heapcomput{\tau}$, $\validof{\tau.\anact}=\validof{\tau}$ and $\freedof{\tau.\anact}=\freedof{\tau}$ hold.
Furthermore, the analogues for $\tau'$ hold, too.
To prove the remaining Requirements it is now sufficient to apply the above equalities and invoke the induction hypothesis.
\qed
\end{proof}


\section{Evaluation Details} 
\label{sec:evaluation_details}

This section provides additional information about the experiments discussed in Section~\ref{Section:Evaluation}.
Figures~\ref{fig:cstack} and \ref{fig:cqueue} give the implementation of the single lock data structures \textit{coarse stack} and \textit{coarse queue}.
Moreover, Table~\ref{tab:experrors} provides experimental results for our stress tests.
Those test were conducted using the ownership-respecting semantics.
We tested whether or not our tool is able to detected purposely inserted bugs.
For each linearisation point we executed a test where we moved it to an erroneous position: once to late and once to early.
A description of the correct linearisation points can be found in Table~\ref{tab:linps}).
In addition we swapped some assignments.
In Treiber's stack (Figure~\ref{fig:treiberscode}) we moved the \texttt{free} in \texttt{pop} before the statement reading the value from the node to be freed.
In Michael\&Scott's queue (Figure~\ref{fig:mscode}) we moved the statement reading the value to be returned by \texttt{dey} after the following \texttt{CAS}.
Both swapped statements result in unsafe behavior as potentially freed cells are accessed.

\begin{figure}
	\caption{Coarse Stack}%
	\label{fig:cstack}%
	\begin{minipage}{.49\textwidth}
		\begin{lstlisting}
struct Node {
	data_type data;
	Node* next;
}

Node* ToS;

void init() {
	ToS = NULL;
}

void push(data_type val) {
	Node* node = new Node();
	node->data = val;
	atomic {
		node->next = ToS;
		ToS = node;
	}
}
		\end{lstlisting}
	\end{minipage}%
	\hfill%
	\begin{minipage}{.51\textwidth}
		\begin{lstlisting}
bool pop(data_type& dst) {
	Node* node;
	atomic {
		node = ToS;
		if (node != NULL)
			ToS = node->next;
	}
	if (node == NULL) {
		return false;
	} else {
		dst = node->data;
		delete node;
	}
}
		\end{lstlisting}
	\end{minipage}%
\end{figure}

\begin{figure}
	\caption{Coarse Queue}%
	\label{fig:cqueue}%
	\begin{minipage}{.49\textwidth}
		\begin{lstlisting}
struct Node {
	data_type data;
	Node* next;
}

Node* Head, Tail;

void init() {
	Head = new Node();
	Tail = Head;
}

void enq(data_type val) {
	Node* node = new Node();
	node->data = val;
	node->next = NULL;
	atomic {
		Tail->next = node;
		Tail = node;
	}
}
		\end{lstlisting}
	\end{minipage}%
	\hfill%
	\begin{minipage}{.51\textwidth}
		\begin{lstlisting}
bool deq(data_type& dst) {
	atomic {
		Node* node = Head;
		Node* next = Head->next;
		if (next == NULL) {
			return false;
		} else {
			// read data inside
			// the atomic block
			// to ensure that
			// no other thread
			// frees "next" in
			// between
			dst = next->data;
			Head = next;
		}
	}
	delete node;
}
		\end{lstlisting}
	\end{minipage}%
\end{figure}

\begin{figure}
	\caption{Treiber's lock-free stack with linearisation points.}%
	\label{fig:treiberscode}%
	\begin{lstlisting}
struct pointer_t { Node* ptr; int age; }
struct Node { data_type data; pointer_t next; }

pointer_t ToS;

void init() {
	ToS.ptr = NULL;
}

void push(data_type val) {
	pointer_t node;
	node.ptr = new Node();
	node.ptr->data = val;
	while (true) {
		pointer_t top = ToS;
		node.ptr->next = top;
		if (DWCAS(ToS, top, node))              // @1
			break;
	}
}

bool pop(data_type& dst) {
	while (true) {
		pointer_t top = ToS;                    // @2
		if (top.ptr == NULL) {
			return false;
		} else {
			pointer_t node = top.ptr->next;
			if (DWCAS(ToS, top, node)) {        // @3
				dst = top.ptr->data;
				delete top.ptr;
				break;
			}
		}
	}
}

bool DWCAS(pointer_t& dst, pointer_t cmp, pointer_t src) {
	atomic {
		if (dst.ptr == cmp.ptr && dst.age == cmp.age) {
			dst.ptr = src.ptr;
			dst.age = cmp.age + 1;
			return true;
		} else return false;
	}
}
	\end{lstlisting}
\end{figure}

\begin{figure}
	\caption{Michael\&Scott's lock-free queue with linearisation points.}%
	\label{fig:mscode}%
	\begin{lstlisting}
		struct pointer_t { Node* ptr; int age; }
		struct Node { data_type data; pointer_t next; }

		pointer_t Head, Tail;

		void init() {
			Head = new Node();
			Head.ptr->next = NULL;
			Tail = Head;
		}

		void enq(data_type val) {
			pointer_t node;
			node.ptr = new Node();
			node.ptr->data = val;
			node.ptr->next = NULL;
			while (true) {
				pointer_t tail = Tail;
				pointer_t next = tail.ptr->next;
				if (tail == Tail)
					if (tail.age == Tail.age)
						if (next.ptr == NULL) {
							if (DWCAS(tail.next, next, node))  // @1
								break;
						} else DWCAS(Tail, tail, next);
			}
			DWCAS(Tail, tail, node);
		}

		bool deq(data_type& dst) {
			while (true) {
				pointer_t head = Head;
				pointer_t tail = Tail;
				pointer_t next = head.ptr->next;            // @2
				if (head == Head)
					if (head.age == Head.age)
						if (head.ptr == tail.ptr) {
							if (next.ptr == NULL) {
								return false;
							}
							DWCAS(Tail, tail, next);
						} else {
							dst = next.ptr->data;
							if (DWCAS(Head, head, next)) {     // @3
								free(head);
								return true;
							}
						}
			}
		}
	\end{lstlisting}
\end{figure}

\begin{table}
	\caption{Linearisation points.}%
	\label{tab:linps}%
	\vspace{-5mm}%
	\center%
	\def\firstcolwidth{3cm}%
	\newcolumntype{Y}{>{\centering\arraybackslash}X}%
	\newcolumntype{Z}{>{\raggedright}m}%
	\newcolumntype{G}{>{\centering}m}%
	\newcolumntype{J}{>{\raggedright\let\newline\\\arraybackslash}X}%
	\begin{tabularx}{\textwidth}{Z{\firstcolwidth+.2cm}G{2.3cm}J}
		\toprule[0.1ex]
		Program & Linearisation Point & Description \\
		\midrule[0.3ex]
		\multirow{3}{\firstcolwidth}{Treiber's stack}
			& \texttt{@1}	& \texttt{CAS} in \texttt{push}                       	\\
			& \texttt{@2}	& reading global top of stack pointer in \texttt{push}	\\
			& \texttt{@3}	& \texttt{CAS} in \texttt{pop}                        	\\
		\hdashline[1pt/1pt]
		\multirow{3}{\firstcolwidth}{Michael\&Scott's queue}
			& \texttt{@1}	& \texttt{CAS} in \texttt{enq} adding new node to the tail        	\\
			& \texttt{@2}	& reading next field of head of queue in \texttt{deq}             	\\
			& \texttt{@3}	& \texttt{CAS} in \texttt{deq} moving global head of queue pointer	\\
		\bottomrule[0.1ex]
	\end{tabularx}
\end{table}

\begin{table}
	\caption{Experimental results for erroneous programs.}%
	\label{tab:experrors}%
	\vspace{-5mm}%
	\center%
	\def\firstcolwidth{3cm}%
	\newcolumntype{Y}{>{\centering\arraybackslash}X}%
	\newcolumntype{Z}{>{\raggedright}m}%
	\newcolumntype{G}{>{\centering}m}%
	\newcolumntype{J}{>{\raggedright\let\newline\\\arraybackslash}X}%
	\renewcommand{\arraystretch}{1.15}%
	\begin{tabularx}{\textwidth}{Z{\firstcolwidth+.2cm}lG{2.7cm}J}
		\toprule[0.1ex]
		Test case & & Time in seconds & Detected defect \\
		\midrule[0.3ex]
		\multirow{6}{\firstcolwidth}{Treiber's stack, bad linearisation point}
			& \texttt{@1}, early	& 0.05	& value loss                           	\\
			& \texttt{@1}, late 	& 0.07	& value out of thin air                	\\
			& \texttt{@2}, early	& 0.08	& multiple linearisation events emitted	\\
			& \texttt{@2}, late 	& 0.05	& value loss                           	\\
			& \texttt{@3}, early	& 0.02	& value duplication                    	\\
			& \texttt{@3}, late 	& 0.02	& value out of thin air                	\\
		\hdashline[1pt/1pt]
		Treiber's stack, swapped statements &
			& 0.001 & returned value stems from freed cell \\
		\hdashline[1pt/1pt]
		Treiber's stack, age fields discarded &
			& 0.001 & strong pointer race detected \\
		\hdashline[1pt/1pt]
		\multirow{6}{\firstcolwidth}{Michael\&Scott's queue (with false-positive prevention), bad linearisation point}
			& \texttt{@1}, early	& 170	& fifo property violated               	\\
			& \texttt{@1}, late 	& 2.7	& value out of thin air                	\\
			& \texttt{@2}, early	& 4.1	& multiple linearisation events emitted	\\
			& \texttt{@2}, late 	& 5.1	& value loss                           	\\
			& \texttt{@3}, early	& 0.2	& duplicate output                     	\\
			& \texttt{@3}, late 	& 0.4	& duplicate output                     	\\
		\hdashline[1pt/1pt]
		Swapped Statements, Michael\&Scott's queue &
			& 3.78 & returned value stems from freed cell \\
		\hdashline[1pt/1pt]
		Michael\&Scott's queue, age fields discarded &
			& 0.13 & strong pointer race detected \\
		\bottomrule[0.1ex]
	\end{tabularx}
\end{table}


\end{document}